\newtheorem{theorem}{Theorem}
\newtheorem{proposition}{Proposition}
\newtheorem{lemma}{Lemma}
\newtheorem{definition}{Definition}
\newtheorem{example}{Example}
\newenvironment{bprooftree}
{\leavevmode\hbox\bgroup}
{\DisplayProof\egroup}
\newacronym{zfc}{ZFC}{Zermelo-Frankel-Choice}
\newacronym{pif}{$\Pi - F$}{dependent function formation}
\newacronym{pii}{$\Pi - I$}{dependent function introduction}
\newacronym{pie}{$\Pi - E$}{dependent function elimination}
\newacronym{pic}{$\Pi - C$}{dependent function computation}
\newacronym{sif}{$\Sigma - F$}{dependent sum formation}
\newacronym{sii}{$\Sigma - I$}{dependent sum introduction}
\newacronym{sie1}{$\Sigma - E_{1}$}{first dependent sum elimination}
\newacronym{sie2}{$\Sigma - E_{2}$}{second dependent sum elimination}
\newacronym{sic}{$\Sigma - C$}{Computação da Soma Dependente}
\newacronym{n0}{$\mathbb N-I_{0}$}{zero introduction}
\newacronym{ns}{$\mathbb N-I_{s}$}{successor induction}
\newacronym{ifi}{$Id^{int} - I$}{identity type introduction}
\newacronym{iff}{$Id^{int} - F$}{identity type formation}
\newacronym{ie}{$Id - E$}{identity type elimination}
\newacronym{ii}{$Id - Eq$}{identity type equality}
\newacronym{ife}{$Id^{ext} - F$}{extensional identity type formation}
\newacronym{iie}{$Id^{ext} - I$}{extensional identity type introduction}
\newacronym{iee}{$Id^{ext} - E$}{extensional identity type elimination}
\newacronym{iei}{$Id^{ext} - Eq$}{extensional identity type equality}
\newacronym{bi}{$\lambda\beta\eta$}{Teoria da Beta-Eta Igualdade}
\newacronym{pf}{$\times - F$}{product formation}
\newacronym{pi}{$\times - I$}{product introduction}
\newacronym{pe1}{$\times - E_{1}$}{first product elimination}
\newacronym{pe2}{$\times - E_{2}$}{second product elimination}
\newacronym{rw}{$\mathit{LND_{EQ}-TRS}$}{Term Rewrite System}
\newacronym{uip}{UIP}{Uniquiness of Identity Proofs}
\title{A Topological Application of Labelled Natural Deduction}
\author{
Tiago Mendon\c{c}a Lucena de Veras\\ 
 Departamento de Matem\'atica\\
 Universidade Federal Rural de Pernambuco\\
 Recife, Brasil \\
  \texttt{tiago.veras@ufrpe.br} \\
   \And
 Arthur F. Ramos \\
  Microsoft, Redmond, USA \\
  \texttt{arfreita@microsoft.com} \\
 \And
  Ruy J. G. B. de Queiroz \\
  Centro de Inform\'atica\\
 Universidade Federal de Pernambuco\\
  Recife, Brasil \\
  \texttt{ruy@cin.ufpe.br} \\
\And
  Anjolina G. de Oliveira \\
  Centro de Inform\'atica\\
 Universidade Federal de Pernambuco\\
  Recife, Brasil \\
  \texttt{ago@cin.ufpe.br} \\
}
\begin{document}
\maketitle

\begin{abstract}

Using a natural deduction system in the Curry--Howard tradition, we demonstrate how to formalise the concept of computational paths (sequences of rewrites) as equalities between two terms of the same type. The aim is to formulate a term rewriting system in order to illustrate how one can perform computations within these computational paths, establishing equalities between equalities. We shall proceed to use the Labelled Natural Deduction -- LND -- based on the concept of computational paths (which is a system of rewrites) as a tool for obtaining the results on the fundamental group of the circle, the torus and the real projective plane.

\end{abstract}

\keywords {Fundamental Group \and Labelled Natural Deduction \and Term Rewriting System \and Computational Paths \and Algebraic Topology.}

\section{Introduction}

The identity type is arguably one of the most interesting entities of  Martin-L\"{o}f type theory. From any type $A$, it is possible to construct the identity type $Id_A (x,y)$. This type establishes the relation of identity between two terms of $A$, i.e., if there is  $x =_p y: A$, then $p$ is a witness or proof that $x$ is indeed equal to $y$.  The proposal of the Univalence Axiom made the identity type one of the most studied aspects of type theory. It proposes that $x=y$ is equivalent to saying that $x\simeq y$, that is, the identity is equivalent to equivalence. Another important aspect is the fact that it is possible to interpret the identity t`ype as paths between two points of the same space. This interpretation gives rise to the interesting rendition of equality as a collection of homotopical paths. This connection of type theory and homotopy theory makes type theory a suitable foundation for both computation and mathematics. Nevertheless, this interpretation is only a semantical one  and it has not been proposed with a syntactical counterpart for the concept of path in type theory (cf.\ Voevodsky's observation.\footnote{As it turns out, in the formulation of the identity type, there is only one introduction rule, and thus only one identifier/constructor, i.e.\ `$\mathsf{refl}$', to be used to construct elements of $Id_A(a,b)$. Even though the `$\mathsf{transport}$' constructor allows for the construction of other paths in the syntax, the fact that the only base case is `$\mathsf{refl}$' has brought some mistery as to whether terms for paths are actually integrated in the whole conceptual and formal framework of type theory. From an email message by M.\ Escard\`o:
\begin{quote}
``In the vein of bringing to public record  things that Vladimir said, here is a short interview. \\
-------- Forwarded Message --------\\
Subject: Re: historical question\\
Date: Thu, 22 Oct 2015 16:08:14 -0400\\
From: Vladimir Voevodsky <vlad...@ias.edu>\\
To: Martin Escardo <m.es...@cs.bham.ac.uk>\\
CC: Prof. Vladimir Voevodsky <vlad...@ias.edu>\\

(...)

``> (Also: what was your first reaction when you saw the identity type for
> the first time? Did you immediately connect it with path spaces?)

Not at all. I did not make this connection until late 2009. All the time before it I was hypnotized by the mantra that the only inhabitant of the Id type is reflexivity which made then [sic] useless from my point of view."

Vladimir.
\end{quote}
published in a google discussion group of Homotopy Type Theory on Thursday, 12 October 2017 20:24:26,  with Vladimir Voevodsky in 2015. https://groups.google.com/g/homotopytypetheory/c/K\_4bAZEDRvE/m/VbYjok7bAAAJ}) For that reason, the addition of paths to the syntax of homotopy type theory has been recently proposed by de Queiroz, Ramos and de Oliveira ~\cite{Ruy1,Art1}. The idea is to use an entity named \emph{computational path}, proposed by de Queiroz and Gabbay in 1994  ~\cite{Ruy4}, and demonstrate that it can be useful for formalizing the identity type in a more explicit manner.

On the other hand, one of the main interesting points of the interpretation of logical connectives via deductive systems which use a labelling system is the clear separation between a functional calculus on the labels (the names that record the steps of the proof) and a logical calculus on the formulas ~\cite{Lof1,Ruy4}. Moreover, this interpretation has important applications. The works of ~\cite{Ruy1,Ruy4,Ruy5,RuyAnjolinaLivro} claim that the harmony that comes with this separation makes labelled natural deduction a suitable framework to study and develop a theory of equality for natural deduction. Take, for example, the following cases taken from the $\lambda$-calculus ~\cite{Ruy5}:

\begin{center}
$(\lambda x.(\lambda y.yx)(\lambda w.zw))v \rhd_{\eta} (\lambda x.(\lambda y.yx)z)v \rhd_{\beta} (\lambda y.yv)z \rhd_{\beta} zv$

$(\lambda x.(\lambda y.yx)(\lambda w.zw))v \rhd_{\beta} (\lambda x(\lambda w.zw)x)v \rhd_{\eta} (\lambda x.zx)v \rhd_{\beta} zv$
\end{center}

In the theory of the $\beta\eta$-equality of $\lambda$-calculus, we can indeed say that $(\lambda x.(\lambda y.yx)(\lambda w.zw))v$ is equal to $zv$. Moreover, as can be seen above, there are at least two ways of obtaining these equalities. We can go further, and call $s$ the first sequence of \textit{rewrites} that establish that $(\lambda x.(\lambda y.yx)(\lambda w.zw))v$ is indeed equal to $zv$. The second one, for example, can be called $r$. Thus, we can say that this equality is established by $s$ and $r$. As we shall see in this paper, $s$ and $r$ are examples of an entity known as \textit{computational path}. \footnote{By introducing terms as {\em explicit\/} conversion paths between terms, allowing for further iterations coming from conversions between those paths, we make an attempt at dealing with an important issue brought about by Voevodsky, where by defining a rule stating that `any two terms of an equality type are equivalent under conversions' one would immediately resurrect the undecidability of the identity type:
\begin{quote}
``The introduction of the equality types however brings with it a whole new dimension to the type system because the type expressions $eq_{\mathbf{R}}(r, r')$ unlike the type expressions we have considered before depend on terms. Another issue which arises is how to ensure that the equality types are mapped by models to $pt$ or $\emptyset$ and not to sets with many elements. One can impose the later condition by a rule saying that any two terms of an equality type are equivalent under conversions but this immediately resurrects the un-decidability argument. I do not know of any satisfactory solution of these problems in the usual type theories." \cite{Vlad2} (p.23)
\end{quote}
}

Since we now have labels (computational paths) that establish the equality between two terms, interesting questions might arise: (i) is $s$ different from $r$; (ii) are there normal forms of this equality proof; (iii) if $s$ is equal to $r$, how can we prove this? Questions like these can be answered when we work in a labelled natural deduction framework. The idea is that we are not limited by the calculus on the formulas, but we can also define and work with rules that apply to the labels. That way, we can use these rules to formally establish the equality between these labels, i.e., establish equalities between equalities. Here we shall use a system proposed by ~\cite{Anjo1} and known as $\mathit{LND_{EQ}-TRS}$.\footnote{The formalization of such an iteration between equalities seems to find a counterpart in Voevodsky's observations on the `isomorphism invariance principle':
\begin{quote}
    ``One of the keystones of contemporary mathematics is the \textbf{isomorphism invariance principle}: for any statement $P$ about $X$ and any isomorphism $X\stackrel{\phi}{\simeq}X'$, there is a statement $P_\phi$ about $X'$ such that $P$ holds iff $P_\phi$ holds

The equality problem in formalizations comes in part from the fact that when one encodes $X$ and $X'$ the isomorphism is lost.

There is more to the equality problem than isomorphism invariance:

$\bullet$ equality is a good notion for `elements' -- individuals, but fails for collections.

$\bullet$ isomorphism is a good notion for collections, but fails for collections of collections.

This leads to a theory of iterated $n$-equivalences which are the correct replacements for such ``iterated collection" " (\textit{Foundations of Mathematics and Homotopy Theory}, IAS, 2006)
\end{quote}
A recent presentation by Michael Shulman (\emph{Homotopy type theory - A high-level language for invariant mathematics}, March 2019) draws attention to the importance of Voevodsky's homotopy type theory in finding a proper framework to deal with higher-order equalities in mathematics:
\begin{quote}
``Homotopy type theory is a high-level abstract framework for working with sameness."
\end{quote}
}

In that context, the contribution of this paper will be to propose an interesting connection: it is possible to use a labelled natural deduction system together with $\mathit{LND_{EQ}-TRS}$ to obtain topological results about fundamental groups. 

Indeed, in this paper we will develop a framework for dealing with such expressions representing computational paths and show that it is powerful enough to provide the formal tools to calculate the fundamental group of the circle, the torus and the real projective plane. For this, we use a labelled deduction system based on the concept of computational paths (sequence of rewrites).  Taking into account that in mathematics ~\cite{Munkres} the calculation of this fundamental group is quite laborious, we believe our work accomplishes this calculation in a less complex form. Nevertheless, to obtain this result we first need to formally define the concept of computational paths and define $\mathit{LND_{EQ}-TRS}$.

As a matter of fact, this is part of an ongoing project \cite{ramos0,ramos1,veras4,veras1,veras2,veras3,Martinez-1,Martinez-2,Martinez-3} which, while it looks for the use of homotopy structures such as groupoids in the study of semantics of computation, it also seeks to demonstrate the utility and the impact of the so-called Curry-Howard interpretation of logical deduction in the actual practice of an important area of mathematics, namely homotopy theory. The short citation for the Royal Swedish Academy of Sciences' ``2020 Rolf Schock Prize in logic and philosophy” says that it was awarded to Per Martin-L\"of (shared with Dag Prawitz) ``for the creation of constructive type theory." In a longer statement, the prize committee recalls that constructive type theory is ``a formal language in which it is possible to express constructive mathematics" (…) ``[which] also functions as a powerful programming language and has had an enormous impact in logic, computer science and, recently, mathematics." 

In fact, by introducing a framework whose formalisation of the logical notion of equality is done via the so-called ``identity type", we have the possibility for a surprising connection between term rewriting and geometric concepts such as path and homotopy. And indeed, Martin-L\"of’s type theory (MLTT) allows for making useful bridges between theory of computation, algebraic topology, logic, categories, and higher algebra, and a single concept seems to serve as a bridging bond: ``path". Its impact in mathematics has been felt more strongly since the start of Vladimir Voevodsky’s program on the univalent foundations of mathematics around 2005, and one specific aspect which we would like to address here is the calculation of fundamental groups of surfaces. Taking from the Wikipedia entry on ``homotopy group", calculation of homotopy groups is in general much more difficult than some of the other homotopy invariants learned in algebraic topology. Now, by using an alternative formulation of the ``identity type" which provides an explicit formal account of ``path", operationally understood as an invertible sequence of rewrites (such as Church’s ``conversion"),\footnote{From Martin-L\"of's first published article on type theory with identity types:
\begin{quote}
``The formal system that we shall erect consists of a certain number of mechanical rules for deriving symbolic expressions of the forms
$$a\in A$$
and
$$a\mbox{ conv }b$$
which are to be read $a$ is a \emph{term with type symbol} $A$ and $a$ \emph{converts} into $b$, respectively. In the intended interpretation, $a\in A$ will mean that $a$ is an object of type $A$ and $a$ conv $b$ that $a=_{\rm def} b$." (\cite{Aruy13}, p.87)
\end{quote}
Church's (1936) \cite{church36} original formulation of the `conversion' equality:
\begin{quote}
``We consider the three following operations on a well-formed formula:
\begin{enumerate}
    \item[I.] To replace any part $\lambda x[M]$ of a formula by $\lambda y [S^x_y M|]$ where $y$ is a variable which does not occur in $M$.
    \item[II.] To replace any part of $\{\lambda x[M]\}(N)$ of a formula by $S^x_N|M$, provided that the bound variables in $M$ are distinct both from $x$ and from the free variables in $N$.
    \item[III.] To replace any part of $S^x_N|M$ (not immediately following $\lambda$) of a formula by $\{\lambda x[M]\}(N)$, provided that the bound variables in $M$ are distinct both from $x$ and from the free variables in $N$.
\end{enumerate}
Any finite sequence of these operations is called a \emph{conversion}, and if $B$ is obtainable from $A$ by a conversion we say that $A$ is \emph{convertible} into $B$, or ``$A$ conv $B$".
(...)

A function $F$ of one positive integer is said to be $\lambda$-\emph{definable} if it is possible to find a formula $\mathbf{F}$ such that, if $F(m)=r$ and $\mathbf{m}$ and $\mathbf{r}$ are the formulas for which the integers $m$ and $r$ (written in Arabic notation) stand according to our abbreviations introduced above, then $\{\mathbf{F}\}(\mathbf{m})$ conv $\mathbf{r}$."
\end{quote}
}
and interpreted as a homotopy, we wish to provide examples of calculations of fundamental groups of surfaces such as the circle, the torus, the 2-holed torus, the Klein bottle, and the real projective plane. We would like to suggest that these examples might bear witness to the impact of MLTT in mathematics by offering formal tools to calculate and prove fundamental groups, as well as allowing to make such calculations and proofs amenable to be dealt with by systems of formal mathematics and interactive theorem provers such as Coq, Lean, and similar ones.

\section{Computational paths} \label{cpaths}

In this section, we introduce the main working tool, an entity known as computational paths. In  ~\cite{Art1}, we have seen that it is possible to interpret the identity type semantically, considering the terms as homotopical paths between two points of a space. Thus, inspired by the path-based approach of the homotopy interpretation, we can use a similar approach to define the identity type in type theory, this entity is known as computational path.\footnote{Our approach differs from the one developed in the HoTT book \cite{hottbook} in that we do not need to simulate the path-space between those points via a certain coding process, since we have  \underline{computational paths} as part of the theory's syntax, and they are an integral part of the whole formalisation framework in the tradition of the Curry-Howard interpretation. See, for example:
\begin{quote}
   ``To characterize a path space, the first step is to define a comparison fibration ``$\mathsf{code}$" that provides
a more explicit description of the paths. There are several different methods for proving that such
a comparison fibration is equivalent to the paths (we show a few different proofs of the same
result in \S 8.1). The one we have used here is called the \textbf{encode-decode method}: the key idea is
to define $\mathsf{decode}$ generally for all instances of the fibration (i.e.\ as a function $\Pi_{(x:A+B)}\mathsf{code}(x)\to
(\mathsf{inl}(a_0) = x))$, so that path induction can be used to analyze $\mathsf{decode}(x, \mathsf{encode}(x, p))$." (p.95)
\end{quote}
There have been attempts at developing other methods to improve on the `encode-decode' method, such as \cite{Kraus}:
\begin{quote}
    ``Often, we want to find out what
specific equality types look like. This is directly the goal when calculating
the homotopy groups of given types (as in the synthetic homotopy theory
mentioned above), but it is also a necessary intermediate step for many other
constructions.  For a very concrete example, let us recall the calculation of
the loop space of the circle $\mathbb{S}^1$
 by Licata and Shulman. This loop space
of $\mathbb{S}^1$, as defined above in (4), is by definition simply the equality type
(base = base). Licata and Shulman introduce and explain the encode-decode
method: in order go get started, they ``guess" that the loop space in question
is equivalent to the integers $\mathbb{Z}$ (looking at the left side of (4), the intuition
is that one can go around the loop clockwise any number of times, and
negative numbers correspond to going counterclockwise).

(...)

The encode-decode method has been employed successfully in a variety
of cases. Going through the necessary steps can be somewhat tedious but it
often at least partially mechanical. One main goal in this paper is to develop
a different method to directly work with equality types of coequalizers and
pushouts (and constructions based on them).
\end{quote}
The proposal makes use of a path constructor called `$\mathsf{glue}$', which is built to simulate paths but is not part of the primitive constructions in HoTT, even if, as it is argued, it can be constructed out of the standard elimination operator $\mathtt{J}$ of the identity type.
}

The interpretation will be similar to the homotopy case: a term $p : Id_{A}(a,b)$ will be a computational path between terms $a, b : A$, and such path will be the result of a sequence of rewrites. In what follows, we shall formally define the concept of a computational path. The main notion, i.e.\ proofs of equality statements as (reversible) sequences of rewrites, is not new, as it goes back to a paper entitled "Equality in labeled deductive systems and the functional interpretation of propositional equality", presented in December 1993 at the {\em 9th Amsterdam Colloquium\/}, and published in the proceedings in 1994~\cite{Ruy4}.

Indeed, one of the most interesting aspects of the identity type is the fact that it can be used to construct higher structures. This is a rather natural consequence of the fact that it is possible to construct higher identities. For any $a, b : A$, we have type $Id_{A}(a,b)$. If this type is inhabited by any $p, q:Id_{A}(a,b)$, then we have type $Id_{Id_{A}(a,b)}(p,q)$. If the latter type is inhabited, we have a higher equality between $p$ and $q$~\cite{harper1}. This concept is also present in computational paths. One can prove the equality between two computational paths $s$ and $t$ by constructing a third one between $s$ and $t$. We provide in this section a system of rules used to establish equalities between computational paths~\cite{Anjo1}. 

Another important question we seek to answer is one that arises naturally when talking about equality: Is there a canonical proof for a statement like $t_{1} = t_{2}$? In the language of computational paths, is there a normal path between $t_{1}$ and $t_{2}$ such that every other path can be reduced to it? In ~\cite{Arttese}, it was proved that the answer is negative, this model also refutes the \gls{uip}).

\subsection{Introducing computational paths} \label{path}

Before we go into details on computational paths, let us recall what motivated the introduction of computational paths to type theory. In type theory, our types are interpreted using the so-called Brower-Heyting-Kolmogorov Interpretation. That way, a semantic interpretation of formulas are not given by truth-values, but by the concept of proof as a primitive notion. Thus, we have ~\cite{Ruy1}:

\begin{tabbing}
	\textbf{a proof of the proposition:} \hbox{\ \ \ \ \ } \= \textbf{is given by:} \\
	$A\land B$ \> a proof of $A$ \textbf{and} a proof of $B$ \\ 
	$A\lor B$ \> a proof of $A$ \textbf{or} a proof of $B$ \\
	$A\rightarrow B$ \> a \textbf{function} that turns a proof of $A$ into a proof of $B$ \\
	$\forall x^D.P(x)$ \> a \textbf{function} that turns an element $a$ into a proof of $P(a)$ \\
	$\exists x^D.P(x)$ \> an element $a$ (witness) \textbf{and a proof of} $P(a)$ \\
\end{tabbing}

Also, based on the Curry-Howard functional interpretation of logical connectives, one has ~\cite{Ruy1}:

\begin{tabbing}
	\textbf{a proof of the proposition:} \hbox{\ \ \ \ \ } \= \textbf{has the canonical form of:} \\
	$A\land B$ \> $\langle p,q\rangle$ where $p$ is a proof of $A$ and $q$ is a proof of $B$ \\
	$A\lor B$ \> $i(p)$ where $p$ is a proof of $A$ or $j(q)$ where $q$ is a proof of $B$ \\
	\> (`$i$' and `$j$' abbreviate `into the left/right disjunct') \\
	$A\rightarrow B$ \> $\lambda x.b(x)$ where $b(p)$ is a proof of B \\
	\> provided $p$ is a proof of A \\
	$\forall x^A.B(x)$ \> $\Lambda x.f(x)$ where $f(a)$ is a proof of $B(a)$ \\ 
	\> provided $a$ is an arbitrary individual chosen\\
	\> from the domain $A$\\ 
	$\exists x^A.B(x)$ \> $\varepsilon x.(f(x),a)$ where $a$ is a witness\\
	\> from the domain $A$, $f(a)$ is a proof of $B(a)$ \\
\end{tabbing}
(For a more detailed explanation of the last clause, see \cite{Aruy30}.)

Upon further inspection, there is one interpretation missing in the BHK-Interpretation. What constitutes a proof of $t_{1} = t_{2}$? In other words, what is a proof of an equality statement? In ~\cite{Ruy1} it was proposed that an equality between these two terms should be a sequence of rewritings starting at $t_{1}$ and ending at $t_{2}$.

We answer this by proposing that an equality between those two terms should be a sequence of rewrites starting from $t_{1}$ and ending at $t_{2}$. Thus, we would have ~\cite{Ruy1}:

\begin{tabbing}
	\textbf{a proof of the proposition:} \hbox{\ \ \ \ \ } \= \textbf{is given by:} \\
	\\
	$t_1= t_2$ \> ? \\
	\> (Perhaps a sequence of rewrites \\
	\> starting from $t_1$ and ending in $t_2$?) \\
\end{tabbing}

We call computational path the sequence of rewrites between these terms.

\subsection{Formal definition}

Since computational path is a generic term, it is important to emphasize the fact that we are using the term computational path in the sense defined by~\cite{Ruy5}. A computational path is based on the idea that it is possible to formally define when two computational objects $a,b : A$ are equal. These two objects are equal if one can reach $b$ from $a$ by applying a sequence of axioms or rules. This sequence of operations forms a path. Since it is between two computational objects, it is said that this path is a computational one. Also, an application of an axiom or a rule transforms (or rewrites) a term in another. For that reason, a computational path is also known as a sequence of rewrites. Nevertheless, before we formally define a computational path, we can take a look at one famous equality theory, the $\lambda\beta\eta-equality$~\cite{lambda}:

\begin{definition}[\cite{lambda}]
	The \emph{$\lambda\beta\eta$-equality} is composed by the following axioms:
	
	\begin{enumerate}
		\item[$(\alpha)$] $\lambda x.M = \lambda y.M[y/x]$ \quad if $y \notin FV(M)$;
		\item[$(\beta)$] $(\lambda x.M)N = M[N/x]$;
		\item[$(\rho)$] $M = M$;
		\item[$(\eta)$] $(\lambda x.Mx) = M$ \quad $(x \notin FV(M))$.
	\end{enumerate}
	
	And the following rules of inference:

	\bigskip
	\noindent
	\begin{bprooftree}
		\AxiomC{$M = M'$ }
		\LeftLabel{$(\mu)$ \quad}
		\UnaryInfC{$NM = NM'$}
	\end{bprooftree}
	\begin{bprooftree}
		\AxiomC{$M = N$}
		\AxiomC{$N = P$}
		\LeftLabel{$(\tau)$}
		\BinaryInfC{$M = P$}
	\end{bprooftree}
	
	\bigskip
	\noindent
	\begin{bprooftree}
		\AxiomC{$M = M'$ }
		\LeftLabel{$(\nu)$ \quad}
		\UnaryInfC{$MN = M'N$}
	\end{bprooftree}
	\begin{bprooftree}
		\AxiomC{$M = N$}
		\LeftLabel{$(\sigma)$}
		\UnaryInfC{$N = M$}
	\end{bprooftree}
	
	\bigskip
	\noindent
	\begin{bprooftree}
		\AxiomC{$M = M'$ }
		\LeftLabel{$(\xi)$ \quad}
		\UnaryInfC{$\lambda x.M= \lambda x.M'$}
	\end{bprooftree}\footnote{We adhere strictly to the approach of giving the definitional equalities as the basis for the construction of proof-objects of the identity type, something which may not be followed everywhere. For example, in \cite{Vlad2} (p.25) Voevodsky introduces $\mu$-equality as part of the `rules for the equivalence types':
		\begin{quote}
	$$[\mathbf{smart0}]\displaystyle{
	     {\Gamma\vdash\mathbf{Q} : Type\qquad \Gamma,y : \mathbf{R} \vdash \mathbf{q} : \mathbf{Q}\qquad \Gamma\vdash \mathbf{h} : eq_{\mathbf{R}}(\mathbf{r},\mathbf{r}')}\over 
{ \Gamma\vdash \theta y : \mathbf{R}.(\mathbf{h}, \mathbf{q}) : eq_{\mathbf{Q}}(\mathbf{q}(\mathbf{r}/y), \mathbf{q}(\mathbf{r}'/y))}}
\qquad (15)$$
 
	\end{quote}
		and the same for $\xi$-equality:
	\begin{quote}
	$$[\mathbf{smart2}]\displaystyle{
	     {\Gamma,y : \mathbf{R} \vdash e : eq_{\mathbf{Q}}(\mathbf{q},\mathbf{q}')}\over 
{ \Gamma\vdash ex(e) : eq_{\Pi y:\mathbf{R}.\mathbf{Q}}(\lambda y : \mathbf{R}.\mathbf{q}, \lambda y : \mathbf{R}.\mathbf{q}')}}
\qquad \qquad \qquad(16)$$
 
	\end{quote}

	}
	
	
	
\end{definition}

\begin{definition}[\cite{lambda}]
	$P$ is $\beta$-equal or $\beta$-convertible to $Q$  (notation $P=_\beta Q$)
	iff $Q$ is obtained from $P$ by a finite (perhaps empty)  series of $\beta$-contractions
	and reversed $\beta$-contractions  and changes of bound variables.  That is,
	$P=_\beta Q$ iff \textbf{there exist} $P_0, \ldots, P_n$ ($n\geq 0$)  such that
	$P_0\equiv P$,  $P_n\equiv Q$,
	$(\forall i\leq n-1) (P_i\triangleright_{1\beta}P_{i+1}  \mbox{ or }P_{i+1}\triangleright_{1\beta}P_i  \mbox{ or } P_i\equiv_\alpha P_{i+1}).$
\end{definition}
\noindent (Notice that equality has an \textbf{existential} force, which will be shown in the proof rules for the identity type.\footnote{An anonymous referee has asked ``What does it mean that equality has an existential force?" and the answer is that our rules are aimed at formalising the kind of reasoning embedded in the definition of equality between $\lambda$-terms from Church's original definition of \emph{conversion}, which says that two terms $M$ and $N$ are equal if there is a sequence of applications of the rewriting rules ($\beta$, $\alpha$, $\beta^{-1}$) starting from $M$ and arriving at $N$. As for the elimination rules in natural deduction style for existential-content propositions, which uses a local assumption, it is worth noticing that in the elimination rules for Martin-L\"of's original $Id$ type, the framing reflects the pattern of existential-like elimination:
$$
\displaystyle{
\displaystyle{\ \atop {\exists xP(x)}}\quad
\displaystyle{{[P(t)]}\atop {C}}
\over {C}}
$$
even though the `entity' which is at the center of the existential content does not appear explicitly:
$$\displaystyle{{\displaystyle{\ \atop {a:A\quad b:A\quad c:{Id}_A(a,b)}} \quad
\displaystyle{{[x:A]} \atop {d(x):C(x,x,{\tt r}(x))}} } \over
\displaystyle{{\tt J}(c,d):C(a,b,c)}}{Id}\mbox{-{\it elimination\/}}$$
(provided $[x:A,y:A,z:{Id}_A(x,y)]$ leads to $C(x,y,z)\ type$). Notice the use of a new local assumption `$[x:A]$' being introduced to arrive at a certain unspecified conclusion `$C(x,x,{\tt r}(x))$', with the usual provisions. The lack of an explicit entity witnessing the propositional equality renders the explanation of the `elimination' operator `${\tt J}$' a nontrivial challenge. A rather technical explanation is given by \cite{GambinoGarner} via weak factorization systems.

In an early draft entitled `Notes on homotopy $\lambda$-calculus' \cite{Vlad2}, Voevodsky also uses a rule for the $eq_{\mathbf{R}}$ equality type which `may be considered as an analog of the equality elimination rules in other dependent type systems':
\begin{quote}
    ``Since equivalences can be ``pushed through" all term expressions with the help of rule (15) this implies that we only need an analog of the equality elimination rule for the expressions $\mathbf{Q} = eq_{\mathbf{R}}(x, y)$. This is achieved by our rule (17) which therefore may be considered as an analog of the equality elimination rules in other dependent type systems. From this point of view rule (18) corresponds to the $\beta$-conversion for the equality. We could have introduced a conversion instead of the equivalence $\epsilon(-)$ but this approach allows more flexibility in the models.''
\end{quote}
where the rules (17) and (18) are framed as:
$$\displaystyle{
{\Gamma\vdash  \mathbf{R} : Type} \over
{\Gamma,x,y,z : R,\phi : eq_{\mathbf{R}}(x,y),\psi : eq_{\mathbf{R}}(x,z) \vdash s(\phi,\psi) : eq_{\Sigma u:\mathbf{R}.eq_\mathbf{R}(x,u)}(\langle y,\phi\rangle,\langle z,\psi\rangle)}
}\qquad (17)
$$
$$\displaystyle{
{\Gamma\vdash  \mathbf{R} : Type} \over
{\Gamma, x, y : R, \phi : eq_{\mathbf{R}}(x, y), \vdash \epsilon(\phi) : eq_{eq_{\mathbf{R}}(x,y)}(\pi(s(id(x), \phi)), \phi)}
}\qquad (18)
$$
})

The same happens with $\lambda\beta\eta$-equality:
\begin{definition}[$\lambda\beta\eta$-equality~\cite{lambda}]
	The equality-relation determined by the theory $\lambda\beta\eta$ is called
	$=_{\beta\eta}$; that is, we define
	$$M=_{\beta\eta}N\quad\Leftrightarrow\quad\lambda\beta\eta\vdash M=N.$$
\end{definition}

\begin{example}\normalfont
	Take the term $M\equiv(\lambda x.(\lambda y.yx)(\lambda w.zw))v$. It is $\beta\eta$-equal to $N\equiv zv$ because of the sequence:\\
	$(\lambda x.(\lambda y.yx)(\lambda w.zw))v, \quad  (\lambda x.(\lambda y.yx)z)v, \quad   (\lambda y.yv)z , \quad zv$\\
	which starts from $M$ and ends with $N$, and each member of the sequence is obtained via 1-step $\beta$- or $\eta$-contraction of a previous term in the sequence. To turn this sequence into a {\em path\/}, one has to apply transitivity twice, as we do in the example below.
\end{example}

\begin{example}\label{examplepath} \normalfont
	The term $M\equiv(\lambda x.(\lambda y.yx)(\lambda w.zw))v$ is $\beta\eta$-equal to $N\equiv zv$ because of the sequence:\\
	$(\lambda x.(\lambda y.yx)(\lambda w.zw))v, \quad  (\lambda x.(\lambda y.yx)z)v, \quad   (\lambda y.yv)z , \quad zv$\\
	Now, turning this sequence into a path leads us to the following:\\
	The first is equal to the second based on the grounds:\\
	$\eta((\lambda x.(\lambda y.yx)(\lambda w.zw))v,(\lambda x.(\lambda y.yx)z)v)$\\
	The second is equal to the third based on the grounds:\\
	$\beta((\lambda x.(\lambda y.yx)z)v,(\lambda y.yv)z)$\\
	The first is equal to the third based on the grounds:\\
	$\tau(\eta((\lambda x.(\lambda y.yx)(\lambda w.zw))v,(\lambda x.(\lambda y.yx)z)v),\beta((\lambda x.(\lambda y.yx)z)v,(\lambda y.yv)z))$\\
	The third is equal to the fourth one based on the grounds:\\
	$\beta((\lambda y.yv)z,zv)$\\
	The first one is equal to the fourth one based on the grounds:\\
	$\tau(\tau(\eta((\lambda x.(\lambda y.yx)(\lambda w.zw))v,(\lambda x.(\lambda y.yx)z)v),\beta((\lambda x.(\lambda y.yx)z)v,(\lambda y.yv)z)),\beta((\lambda y.yv)z,zv)))$.
\end{example}


The aforementioned theory establishes the equality between two $\lambda$-terms. Since we are working with computational objects as terms of a type, we can consider the following definition:

\begin{definition}
	The equality theory of Martin L\"of's type theory has the following basic proof rules for the $\Pi$-type:
	
	\bigskip
	
	\noindent
	\begin{bprooftree}
		\hskip -0.3pt
		\alwaysNoLine
		\AxiomC{$N : A$}
		\AxiomC{$[x : A]$}
		\UnaryInfC{$M : B$}
		\alwaysSingleLine
		\LeftLabel{$(\beta$) \quad}
		\BinaryInfC{$(\lambda x.M)N = M[N/x] : B[N/x]$}
	\end{bprooftree}
	\begin{bprooftree}
		\hskip 11pt
		\alwaysNoLine
		\AxiomC{$[x : A]$}
		\UnaryInfC{$M = M' : B$}
		\alwaysSingleLine
		\LeftLabel{$(\xi)$ \quad}
		\UnaryInfC{$\lambda x.M = \lambda x.M' : (\Pi x : A)B$}
	\end{bprooftree}
	
	\bigskip
	
	\noindent
	\begin{bprooftree}
		\hskip -0.5pt
		\AxiomC{$M : A$}
		\LeftLabel{$(\rho)$ \quad}
		\UnaryInfC{$M = M : A$}
	\end{bprooftree}
	\begin{bprooftree}
		\hskip 100pt
		\AxiomC{$M = M' : A$}
		\AxiomC{$N : (\Pi x : A)B$}
		\LeftLabel{$(\mu)$ \quad}
		\BinaryInfC{$NM = NM' : B[M/x]$}
	\end{bprooftree}
	
	\bigskip
	
	\noindent
	\begin{bprooftree}
		\hskip -0.5pt
		\AxiomC{$M = N : A$}
		\LeftLabel{$(\sigma) \quad$}
		\UnaryInfC{$N = M : A$}
	\end{bprooftree}
	\begin{bprooftree}
		\hskip 105pt
		\AxiomC{$N : A$}
		\AxiomC{$M = M' : (\Pi x : A)B$}
		\LeftLabel{$(\nu)$ \quad}
		\BinaryInfC{$MN = M'N : B[N/x]$}
	\end{bprooftree}
	
	\bigskip
	
	\noindent
	\begin{bprooftree}
		\hskip -0.5pt
		\AxiomC{$M = N : A$}
		\AxiomC{$N = P : A$}
		\LeftLabel{$(\tau)$ \quad}
		\BinaryInfC{$M = P : A$}
	\end{bprooftree}
	
	\bigskip
	
	\noindent
	\begin{bprooftree}
		\hskip -0.5pt
		\AxiomC{$M: (\Pi x : A)B$}
		\LeftLabel{$(\eta)$ \quad}
		\RightLabel {$(x \notin FV(M))$}
		\UnaryInfC{$(\lambda x.Mx) = M: (\Pi x : A)B$}
	\end{bprooftree}
	
	\bigskip
	
\end{definition}

We are finally able to formally define computational paths:

\begin{definition}
	Let $a$ and $b$ be elements of a type $A$. Then, a \emph{computational path} $s$ from $a$ to $b$ is a composition of rewrites (each rewrite is an application of the inference rules of the equality theory of type theory or is a change of bound variables). We denote it by $a =_{s} b$.
\end{definition}

As we have seen in \emph{Example \ref{examplepath}}, the composition of rewrites is an application of the rule $\tau$. Since the change of bound variables is possible, each term is considered up to an $\alpha$-equivalence.

\subsection{Equality equations}

One can use the aforementioned axioms to illustrate that computational paths establish the three fundamental equations of equality: the reflexivity, symmetry and transitivity:

\bigskip

\begin{bprooftree}
\AxiomC{$a =_{t} b : A$}
\AxiomC{$b =_{u} c : A$}
\RightLabel{\textit{transitivity}}
\BinaryInfC{$a =_{\tau(t,u)} c : A$}
\end{bprooftree}
\begin{bprooftree}
\AxiomC{$a : A$}
\RightLabel{\textit{reflexivity}}
\UnaryInfC{$a =_{\rho} a : A$}
\end{bprooftree}

\bigskip
\begin{bprooftree}
\AxiomC{$a =_{t} b : A$}
\RightLabel{\textit{symmetry}}
\UnaryInfC{$b =_{\sigma(t)} a : A$}
\end{bprooftree}

\bigskip

\subsection{Identity type}

We have said that one can formulate the identity type using computational paths. As we have seen, the best way to define any formal entity of type theory is by a set of natural deduction rules. Thus, we define our path-based approach as the following set of rules: 

\begin{itemize}
	
\item Formation and Introduction rules ~\cite{Ruy1,Art1}:
	\bigskip
	\begin{center}
		\begin{bprooftree}
			\AxiomC{$A$ type}
			\AxiomC{$a : A$}
			\AxiomC{$b : A$}
			\RightLabel{$Id - F$}
			\TrinaryInfC{$Id_{A}(a,b)$ type}
		\end{bprooftree}
	
	\bigskip
	
		\begin{bprooftree}
			\AxiomC{$a =_{s} b : A$}
			\RightLabel{$Id - I$}
			\UnaryInfC{$s(a,b) : Id_{A}(a,b)$}
		\end{bprooftree}
	\end{center}
	\bigskip
	
	One can notice that our formation rule is exactly equal to the identity type in type theory. From terms $a, b : A$, it is possible to claim that the identity type is inhabited only if there is a proof of equality between those terms, i.e., $Id_{A}(a,b)$.\footnote{An anonymous referee questioned `the distinction between ``$a =_s b : A$" and ``$s:Id_A(a,b)$", and the answer starts from Martin-L\"of's original distinction between propositions and judgements: 
	\begin{quote}
	    ``If $x$ and $y$ are objects of one and the same type $A$, then 
	    
	    \centerline{$I(x, y)$}
	    
is a proposition, namely, the proposition that $x$ and $y$ are \emph{identical}." (p.81)
	\end{quote}
	Further along the same line of reasoning:
	\begin{quote}
	    ``if $a =_{\rm def} b$, then $I(a,b)$ is true, that is, $a$ and $b$ are identical" (p.86)
	\end{quote}
	In our proposed formulation, the propositional equality is written as $Id_A(x,y)$, and judgemental equality as $a=_s b:A$ (where instead of simply saying that the latter comes from a \emph{definitional} equality, it carries an identifier `$s$' denoting the rewriting path, i.e., the composition of (possibly several) definitional equalities.
	
	From `Truth of a proposition, evidence of a judgement, validity of a proof' \cite{martin-lof-prop-judg}:
	\begin{quote}
	    ``First of all, we have the notion of proposition. Second, we have the notion of truth of a proposition. Third, combining these two, we arrive at the notion of assertion or judgement." (p.409)
	\end{quote}
	}
	
	The difference starts with the introduction rule. In our approach, one can notice that we do not use a reflexive constructor $r$. In other words, the reflexive path is not the main building block of our identity type. Instead, if we have a computational path $a =_{s} b : A$, we introduce $s(a,b)$ as a term of the identity type. That way, one should see $s(a,b)$ as a sequence of rewrites and substitutions (i.e., a computational path) which would have started from $a$ and arrived at $b$.
	
	\bigskip 
	
	\item Elimination rule ~\cite{Ruy1,Art1}:
	
	\begin{center}
		\begin{bprooftree}
			\alwaysNoLine
			\AxiomC{$m : Id_{A}(a,b)$ }
			\AxiomC{$[a =_{g} b : A]$}
			\UnaryInfC{$h(g) : C$}
			\alwaysSingleLine
			\RightLabel{$Id - E$}
			\BinaryInfC{$REWR(m, \acute{g}.h(g)) : C$}
		\end{bprooftree}
	\end{center}
	\bigskip
	
	Let us recall the notation being used. First, one should see $h(g)$ as a functional expression $h$ which depends on $g$. Also, one should notice the use of `$\acute{\ }$' in $\acute{g}$. One should see `$\acute{\ }$' as an abstractor that binds the occurrences of the variable $g$ introduced in the local assumption $[a =_{g} b : A]$ as a kind of {\em Skolem-type\/} constant denoting the {\em reason\/} why $a$ was assumed to be equal to $b$.
	
	We also introduce the constructor $REWR$. In a sense, it is similar to the constructor $J$ of the traditional approach, since both arise from the elimination rule of the identity type. The behavior of $REWR$ is simple. If from a computational path $g$ that establishes the equality between $a$ and $b$ one can construct $h(g) : C$, then if we also have this equality established by a term $C$, we can gather all this information in $REWR$ to construct $C$, eliminating the type $Id_{A}(a,b)$ in the process. The idea is that we can substitute $g$ for $m$ in $\acute{g}.h(g)$, resulting in $h(m/g) : C$. This behavior is established below by the reduction rule.\footnote{The use of the variable-binding via a notation which differs from $\lambda$ has the purpose of avoiding imposing each and every property of $\lambda$-conversion for this particular abstraction. Perhaps in the same vein, Voevodsky introduces a $\theta$-abstraction in \cite{Vlad2}:
	\begin{quote}
	    ``The notation $\theta y : \mathbf{R}.(h, q)$ is chosen to emphasize that $y$ becomes a bound variable in this expression." (p.25)
	\end{quote}
	Yet another notation for variable binding (abstraction) is used by Martin-L\"of in the definition of the rule of $\Sigma$-elimination:
	$$\displaystyle{
	{
	\displaystyle{\ \atop{c \in (\Sigma x \in A)B(x)}} \qquad
	\displaystyle{{(x \in A, y \in B(x))} \atop {d(x,y) \in C((x,y))}}
	} \over
	\displaystyle{\mathsf{E}(c, (x, y) d(x, y)) \in C(c)}
	}
	$$
	``Another notation for $\mathsf{E}(c, (x, y) d(x, y))$ could be $(\mathsf{E} x, y) (c, d(x, y))$, but we prefer the first since it shows more clearly that $x$ and $y$ become bound only in $d(x, y)$." \cite{Lof1}(p.40)
	}
	
	\item Reduction rule 	~\cite{Ruy1,Art1}:
	
	\bigskip
	\begin{center}
		\begin{bprooftree}
			\AxiomC{$a =_{m} b : A$}
			\RightLabel{$Id - I$}
			\UnaryInfC{$m(a,b) : Id_{A}(a,b)$}
			\alwaysNoLine
			\AxiomC{$[a =_{g} b : A]$}
			\UnaryInfC{$h(g) : C$}
			\alwaysSingleLine
			\RightLabel{$Id - E$ \quad $\rhd_\beta$}
			\BinaryInfC{$REWR(m, \acute{g}.h(g)) : C$}
		\end{bprooftree}
		\begin{bprooftree}
			\AxiomC{$a =_{m} b : A$}
			\alwaysNoLine
			\UnaryInfC{$h(m/g):C$}
		\end{bprooftree}
	\end{center}
	\bigskip
	
	\item Induction rule: 
	
	\bigskip
	\begin{center}
		\begin{bprooftree}
			\AxiomC{$e : Id_{A}(a,b)$}
			\AxiomC{$[a =_{t} b : A]$}
			\RightLabel{$Id - I$}
			\UnaryInfC{$t(a, b) : Id_{A}(a, b)$}
			\RightLabel{$Id - E$ \quad  $\rhd_{\eta}$ \quad $e : Id_{A}(a,b)$}
			\BinaryInfC{$REWR(e, \acute{t}.t(a,b)) : Id_{A}(a,b)$}
		\end{bprooftree}
	\end{center}
	\bigskip
	
\end{itemize}

Our introduction and elimination rules reassure the concept of equality as an \textbf{existential force}. In the introduction rule, we encapsulate the idea that a witness of an identity type $Id_{A}(a,b)$ only exists if there is a computational path establishing the equality of $a$ and $b$. Also, one can notice that this elimination rule is similar to the elimination rule of the existential quantifier.

\subsection{Path-based examples}

The objective of this subsection is to demonstrate how to put into practice the rules that we have just defined. The objective is to show the construction of terms of some important types. The constructions that we have chosen to build are the reflexive, transitive and symmetric type of the identity type. Those were not random choices. The main reason for having picked them is the fact that reflexive, transitive and symmetric types are essential to the process of building a groupoid model for the identity type~\cite{hofmann1}. As we shall see, these constructions come naturally from simple computational paths constructed by the application of axioms of the equality of type theory.

Before we start the constructions, it is our opinion that it is essential to understand how to use the eliminations rules. The process of building a term of some type is a matter of finding the right reason. In the case of $J$, the reason is the correct $x,y : A$ and $z : Id_{A}(a,b)$ that generates the adequate $C(x,y,z)$. In our approach, the reason is the correct path $a =_{g} b$ that generates the adequate $g(a,b) : Id(a,b)$.

\subsubsection{Reflexivity}

One could find strange the fact that we need to prove the reflexivity. Nevertheless, just remember that our approach is not based on the idea that reflexivity is the base of the identity type. As usual in type theory, a proof of something comes down to a construction of a term of a type. In this case, we need to construct a term of type $\Pi_{(a : A)}Id_{A}(a,a)$. The reason is extremely simple: from a term $a : A$, we obtain the computational path $a =_{\rho} a : A$ ~\cite{Art1}:

\bigskip

\begin{center}
	\begin{bprooftree}
		\AxiomC{$[a : A]$}
		\UnaryInfC{$a =_{\rho} a : A$}
		\RightLabel{$Id - I$}
		\UnaryInfC{$\rho(a,a) : Id_{A}(a,a)$}
		\RightLabel{$\Pi-I$}
		\UnaryInfC{$\lambda a.\rho(a,a) : \Pi_{(a : A)}Id_{A}(a,a)$}
	\end{bprooftree}
\end{center}

\bigskip

\subsubsection{Symmetry}

The second proposed construction is the symmetry. Our objective is to obtain a term of type $\Pi_{(a : A)}\Pi_{(b : A)}(Id_{A}(a,b)\rightarrow Id_{A}(b,a))$.

We shall construct a proof using computational paths. As expected, we need to find a suitable reason. Starting from $a =_{t} b$, we could look at the axioms of \emph{Definition 4.1} to plan our next step. One of those axioms makes the symmetry clear: the $\sigma$ axiom. If we apply $\sigma$, we will obtain $b =_{\sigma(t)} a$. From this, we can then infer that $Id_A$ is inhabited by $(\sigma(t))(b,a)$. Now, it is just a matter of applying the elimination ~\cite{Art1}:

\bigskip

\begin{center}
	\begin{bprooftree}
		\alwaysNoLine
		\AxiomC{$[a:A] \quad [b:A]$}
		\UnaryInfC{$[p(a,b) : Id_{A}(a,b)]$}
		\alwaysSingleLine
		\AxiomC{[$a =_{t} b : A$]}
		\UnaryInfC{$b =_{\sigma(t)} a : A$}
		\RightLabel{$Id - I$}
		\UnaryInfC{$(\sigma(t))(b,a) : Id_{A}(b,a)$}
		\RightLabel{$Id - E$}
		\BinaryInfC{$REWR(p(a,b),\acute{t}.(\sigma(t))(b,a)) : Id_{A}(b,a)$}
		\RightLabel{$\rightarrow - I$}
		\UnaryInfC{$\lambda p.REWR(p(a,b), \acute{t}.(\sigma(t))(b,a)) : Id_{A} (a,b) \rightarrow Id_{A}(b,a)$}
		\RightLabel{$\Pi-I$}
		\UnaryInfC{$\lambda b. \lambda p.REWR(p(a,b),\acute{t}.(\sigma(t))(b,a)) :  \Pi_{(b : A)}(Id_{A} (a,b) \rightarrow Id_{A}(b,a))$}
		\RightLabel{$\Pi-I$}
		\UnaryInfC{$\lambda a.\lambda b. \lambda p.REWR(p(a,b), \acute{t}.(\sigma(t))(b,a)) :  \Pi_{(a : A)}\Pi_{(b : A)}(Id_{A} (a,b) \rightarrow Id_{A}(b,a))$}
	\end{bprooftree}
\end{center}

\bigskip

\subsubsection{Transitivity}
The third and last construction will be the transitivity. Our objective is to obtain a term of type   $$\Pi_{(a : A)}\Pi_{(b : A)}\Pi_{(c : A)} (Id_{A}(a,b) \rightarrow Id_{A}(b,c) \rightarrow Id_{A}(a,c)).$$

To build our path-based construction, the first step, as expected, is to find the reason. Since we are trying to construct the transitivity, it is natural to think that we should start with paths $a =_{t} b$ and $b =_{u} c$ and then, from these paths, we should conclude that there is a path $z$ that establishes that $a =_{z} c$. To obtain $z$, we could try to apply the axioms of \emph{Definition 4.1}. Looking at the axioms, one of them is exactly what we want: the axiom $\tau$. If we apply $\tau$ to  $a =_{t} b$ and $b =_{u} c$, we will obtain a new path $\tau(t,u)$ such that $a = _{\tau(t,u)} c$. Using that construction as the reason, we obtain the following term ~\cite{Art1}:

\begin{center}
	\begin{figure}
		\begin{sideways}
		\begin{bprooftree}
			\alwaysNoLine
			\AxiomC{$[a:A] \quad [b:A]$}
			\UnaryInfC{$[w(a,b) : Id_{A}(a,b)]$}
			\alwaysNoLine
			\AxiomC{$[c:A]$}
			\UnaryInfC{$[s(b,c) : Id_{A}(b,c)]$}
			\alwaysSingleLine
			\AxiomC{$[a =_{t} b:A]$}
			\AxiomC{$[b =_{u} c:A]$}
			\BinaryInfC{$a =_{\tau(t,u)} c:A$}
			\RightLabel{$Id - I$}
			\UnaryInfC{$(\tau (t,u))(a,c) : Id_{A}(a,c)$}
			\RightLabel{$Id - E$}
			\BinaryInfC{$REWR(s(b,c),\acute{u}(\tau (t,u))(a,c)) : Id_{A}(a,c)$}
			\RightLabel{$Id - E$}
			\BinaryInfC{$REWR(w(a,b),\acute{t}REWR(s(b,c),\acute{u}(\tau (t,u))(a,c))) : Id_{A}(a,c)$}
			\RightLabel{$\rightarrow - I$}
			\UnaryInfC{$\lambda s.REWR(w(a,b),\acute{t}REWR(s(b,c),\acute{u}(\tau (t,u))(a,c))) : Id_{A}(b,c) \rightarrow Id_{A}(a,c)$}
			\RightLabel{$\rightarrow - I$}
			\UnaryInfC{$\lambda w.\lambda s.REWR(w(a,b),\acute{t}REWR(s(b,c),\acute{u}(\tau (t,u))(a,c))) : Id_{A}(a,b) \rightarrow Id_{A}(b,c) \rightarrow Id_{A}(a,c)$}
			\RightLabel{$\Pi-I$}
			\UnaryInfC{$\lambda c.\lambda w.\lambda s.REWR(w(a,b),\acute{t}REWR(s(b,c),\acute{u}(\tau (t,u))(a,c))) :  \Pi_{(c : A)}(Id_{A}(a,b) \rightarrow Id_{A}(b,c) \rightarrow Id_{A}(a,c))$}
			\RightLabel{$\Pi-I$}
			\UnaryInfC{$\lambda b. \lambda c.\lambda w.\lambda s.REWR(w(a,b),\acute{t}REWR(s(b,c),\acute{u}(\tau (t,u))(a,c))) :  \Pi_{(b : A)}\Pi_{(c : A)}(Id_{A}(a,b) \rightarrow Id_{A}(b,c) \rightarrow Id_{A}(a,c))$}
			\RightLabel{$\Pi-I$}
			\UnaryInfC{$\lambda a. \lambda b. \lambda c.\lambda w.\lambda s.REWR(w(a,b),\acute{t}REWR(s(b,c),\acute{u}(\tau (t,u))(a,c))) :   \Pi_{(a : A)}\Pi_{(b : A)}\Pi_{(c : A)}(Id_{A}(a,b) \rightarrow Id_{A}(b,c) \rightarrow Id_{A}(a,c))$}
		\end{bprooftree}
	\end{sideways}
	\end{figure}
	\end{center}

\newpage

As one can see, each step is simply composed of straightforward applications of introduction, elimination rules and abstractions. The only idea behind this construction is the simple fact that the axiom $\tau$ guarantees the transitivity of paths.

\subsection{Term rewrite system}

As we have just shown, a computational path establishes when two terms of the same type are equal. From the theory of computational paths, an interesting case arises. Suppose we have a path $s$ that establishes that $a =_{s} b : A$ and a path $t$ that establishes that $a =_{t} b : A$. Consider that $s$ and $t$ are formed by distinct compositions of rewrites. Is it possible to conclude that there are cases where $s$ and $t$ should be considered equivalent? The answer is \emph{yes}. Consider the following examples ~\cite{Arttese}:

\begin{example}
	\noindent \normalfont Consider the path  $a =_{t} b : A$. By the symmetric property, we obtain $b =_{\sigma(t)} a : A$. What if we apply the property again on the $\sigma(t)$ path? We would obtain a path  $a =_{\sigma(\sigma(t))} b : A$. Since we applied the symmetry twice in succession, we have obtained a path that is equivalent to the initial path $t$. For that reason, we would like to conclude that the act of applying the symmetry twice in succession is a redundancy. We say that the path $\sigma(\sigma(t))$ reduces to path $t$.
\end{example}

\begin{example}

	\noindent \normalfont Consider the reflexive path $a =_{\rho} a : A$. If the symmetric axiom is applied, we end up with $a =_{\sigma(\rho)} a : A$. Thus, the obtained path is equivalent to the initial one, since the symmetry was applied to the reflexive path. Therefore, $\sigma(\rho)$ is a redundant way of expressing the path $\rho$. Thus, $\sigma(\rho)$ should be reduced to $\rho$.
	
\end{example}

\begin{example}
	\noindent \normalfont Consider a path $a =_{t} b : A$. By applying the symmetry, one ends up with $b =_{\sigma(t)} a : A$. It is possible to take those two paths and apply the transitivity, ending up with $a =_{\tau(t,\sigma(t))} a$. Since the path $\tau$ is the inverse of the $\sigma(\tau)$, the composition of those two paths should be equivalent to the reflexive path. Thus, $\tau(t,\sigma(t))$ should be reduced to $\rho$.
\end{example}	
	
As can be seen in the aforementioned examples, different paths should be considered equal if one is a redundant form of the other. The examples that we have just seen are straightforward and simple cases. Since the equality theory has a total of 7 axioms, the possibility of combinations which could generate redundancies is high. Fortunately, all possible redundancies were thoroughly mapped by~\cite{Anjo1}. In that work, a system that establishes all redundancies and creates rules which solve them was proposed. This system, known as $\mathit{LND_{EQ}-TRS}$, maps a total of $39$ rules that solve redundancies. 

\subsection{$\mathit{\textbf{LND}_{\textbf{EQ}}\textbf{-TRS}}$}

In this subsection, we provide the rules which compose the $\mathit{LND_{EQ}-TRS}$. All those rules originate from the mapping of redundancies between computational paths, as we have seen in the $3$ previous examples. 

\subsubsection{Subterm substitution}

Before we introduce the rewriting rules, it is important to introduce the concept of subterm substitution. In Equational Logic, the subterm substitution is given by the following inference rule~\cite{Ruy2}:
\begin{center}
	\begin{bprooftree}
		\AxiomC{$s = t$ }
		\UnaryInfC{$s\theta = t\theta$}
	\end{bprooftree}
\end{center}
 where $\theta$ is a substitution.
One problem is that such rule does not respect the sub-formula property. To deal with that,~\cite{Chenadec} proposes two inference rules:

\begin{center}
	\begin{bprooftree}
		\AxiomC{$M = N$}
		\AxiomC{$C[N] = O$}
		\RightLabel{$IL$ \quad}
		\BinaryInfC{$C[M] = O$}
	\end{bprooftree}
	\begin{bprooftree}
		\AxiomC{$M = C[N]$}
		\AxiomC{$N = O$}
		\RightLabel{$IR$ \quad}
		\BinaryInfC{$M = C[O]$}
	\end{bprooftree}
\end{center}

\noindent where M, N and O are terms.

As proposed in ~\cite{Ruy1}, we can define similar rules using computational paths, as follows:

\begin{center}
	\begin{bprooftree}
		\AxiomC{$x =_r {\cal C}[y]: A$}
		\AxiomC{$y =_s u : A'$}
		\BinaryInfC{$x =_{{\tt sub}_{\tt L}(r,s)} {\cal C}[u]: A$}
	\end{bprooftree}
	\begin{bprooftree}
		\AxiomC{$x =_r w : A'$}
		\AxiomC{${\cal C}[w]=_s u : A$}
		\BinaryInfC{${\cal C}[x]=_{{\tt sub}_{\tt R}(r,s)} u : A$}
	\end{bprooftree}
\end{center}

\noindent where $C$ is the context in which the sub-term detached by '[ ]' appears and $A'$ could be a sub-domain of $A$, equal to $A$ or disjoint to $A$.

In the rule above, ${\cal C}[u]$ should be understood as the result of replacing every occurrence of $y$ by $u$ in $C$.\footnote{In Martin-L\"of's original paper \cite{Aruy13} (p.85) the rule for subterm substitution is framed as:
$${{a=_{\rm def}c}\over {b[a]=_{\rm def}b[c]} }$$
}

\subsubsection{Rewriting rules}

In this subsection, our objective is to demonstrate all rewrite reductions and their associated rewriting rules. The aim is to analyze all possible occurrences of redundancies in proofs which involve the rules of rewritings. 

We start with the transitivity:

\begin{definition}[reductions involving $\tau$ ~\cite{Ruy1}]
	
	\bigskip
	
	$$\displaystyle{ x =_r y : {A} \quad y =_{\sigma(r)} x :
		{A} \over x =_{\tau(r,\sigma(r))} x : {A}} \quad 
	\triangleright_{tr}  \quad x =_{\rho} x : {A}$$
	\bigskip
	$$\displaystyle{ y =_{\sigma (r)} x : {A} \quad x=_r y 
		:  {A} \over y =_{\tau(\sigma(r),r)} y : {A}} \quad \triangleright_{tsr} \quad  y=_{\rho} y : {A}$$
	\bigskip
	$$\displaystyle{ u=_r v : {A} \quad v=_{\rho} v :  
		{A} \over u =_{\tau(r,\rho)} v : {A}} \quad 
	\triangleright_{trr}  \quad  u =_r v : {A}$$
	\bigskip
	$$\displaystyle{ u=_{\rho} u : {A} \quad u=_r v : 
		{A} \over u=_{\tau(\rho, r)} v : {A}} \quad 
	\triangleright_{tlr}  \quad u=_r v : {A}$$
	\bigskip
	
	Associated rewriting rules:
	
	$$\tau(r,\sigma(r))\triangleright_{tr}\rho$$
	$$\tau(\sigma(r),r) \triangleright_{tsr}\rho$$
	$$\tau(r,\rho) \triangleright_{trr} r$$
	$$\tau(\rho, r) \triangleright_{tlr} r.$$

\end{definition} 

These reductions can be generalized to transformations where the reasons $r$
and $\sigma(r)$ (transf.\ 1 and 2) and $r$ and $\rho$ (transf.\ 3 and 4) appear
in some context, as illustrated by the following example:
 ~\cite{Ruy1}:

\begin{example}\ 
	
	\bigskip
	
	$\displaystyle{\displaystyle{\displaystyle{ \atop x=_r y :{A}} \over
			{i} (x) =_{\xi_1 (r)} i(y) : {A} + {B}} \quad 
		\displaystyle{\displaystyle{ x =_r y : {A} \over y=_{\sigma (r)} x :
				{A}} \over {i} (y) =_{\xi_1(\sigma (r))} {i} (x) : {A}
			+ {B}} \over {i} (x) =_{\tau ( \xi_1 (r), \xi_1 (\sigma 
			(r)))} {i} (x) : {A} + {B}} $
	
	\hfill{$\triangleright_{tr} \quad \displaystyle{{x=_r y:A}\over{i(x)=_{\xi_1(r)}i(y):A+B}}$}
	\ \\
	\bigskip
	
	\noindent Associated rewriting:
	 $\tau ( \xi_1 (r), \xi_1 (\sigma (r))) \triangleright_{tr} \xi_1 (r).$
	
\end{example}

\medskip

\noindent For the general context ${\cal C}[\ ]$:

\noindent Associated rewritings:\\
$\tau({\cal C}[r] , {\cal C}[\sigma(r)] ) \triangleright_{tr} {\cal C}[\rho]$\\
$\tau({\cal C}[\sigma(r)] , {\cal C}[r]) \triangleright_{tsr} {\cal C}[\rho]$\\
$\tau({\cal C}[r], {\cal C}[\rho]) \triangleright_{trr} {\cal C}[r]$\\
$\tau({\cal C}[\rho], {\cal C}[r]) \triangleright_{tlr} {\cal C}[r]$

\medskip

The transitivity rules are quite straightforward. There are some more complicated cases in ~\cite{Ruy1}:

\begin{definition}\ 
	
	\bigskip
	
	$\displaystyle{\displaystyle{\displaystyle{\displaystyle{ \atop } \atop 
			}  \atop a: { A} } \quad \displaystyle{\displaystyle{ 
				\displaystyle{ {[ x: {A} ]} \atop \vdots} \atop { b(x) =_r g(x) : {B}}}
			\over \lambda x.b(x) =_{\xi ( r)} \lambda x . g(x) : { A} \to {B}}\to \mbox{\it -intr} \over
		{APP} ( \lambda x. b(x) , a) =_{ \nu ( \xi ( r))} {APP}  
		(\lambda x. g(x) , a) : {B}}\to\mbox{\it -elim}$
	
	\smallskip
	
	\hfill{$\triangleright_{mxl} \quad \displaystyle{{a:A}\atop{b(a/x) =_{r} g(a/x) : {B}}}$}\\
	\
	\bigskip
	
	Associated rewriting rule:
	$$ \nu ( \xi ( r)) \triangleright_{mxl} r.$$
\end{definition}

\medskip

\begin{definition}[reductions involving $\rho$ and $\sigma$ ~\cite{Ruy1}]
	
	\bigskip
	
	$$ \displaystyle{ x=_\rho x : {A} \over x =_{\sigma(\rho)} x : {A}}  \quad \triangleright_{sr} \quad  x =_\rho x : {A}$$
	\bigskip
	$$ \displaystyle{\displaystyle{ x=_r y : {A} \over y =_{\sigma(r)
			} x : {A}} \over x =_{\sigma(\sigma(r)) }y:A}  \quad \triangleright_{sr} \quad
	x=_r y : {A}$$
	
	\bigskip
	Associated rewritings:\\
	$\sigma(\rho)\triangleright_{sr} \rho$ \\
	$\sigma(\sigma(r))\triangleright_{sr} r$ \\
\end{definition}

\medskip

\begin{definition}[Substitution rules ~\cite{Ruy1}]
	
	\bigskip
	
	$$\displaystyle{ u =_r {\cal C}[x]:{A} \quad x =_{\rho} x : {A'} \over u =_{{\tt sub_L}(r,\rho)}  {\cal C}[x] : {A}} \quad \triangleright_{slr} \quad u =_r {\cal C}[x]: {A}$$
	\bigskip
	$$\displaystyle{ x=_{\rho} x : {A'} \quad {\cal C}[x] =_r z : {A} \over {\cal C}[x] =_{{\tt sub_R}(\rho,r)} z :
		{A}} \quad \triangleright_{srr} \quad {\cal C}[x] =_r z : {A}$$
	\bigskip
	$$\displaystyle{\displaystyle{ z =_s {\cal C}[y] : {A}
			\quad y =_r w : {A'} \over z =_{{\tt sub_L} (s,r)} {\cal C}[w] :
			{D}} \quad \displaystyle{ y =_r w :{A'} \over w =_{\sigma(r)} y :
			{D'}} \over z =_{{\tt sub_L} ({\tt sub_L} ( s,r) , \sigma (r))} {\cal C}[y] : {A}} \  \triangleright_{sls} \  z =_s {\cal C}[y] : {A}$$
	\bigskip
	$$\displaystyle{\displaystyle{  z =_{s} {\cal C}[y] : { A}  \quad   y =_{r} w : {A'} \over z =_{{\tt sub_L} (  s , r)} {\cal C}[w] : {A}} \quad
		\bigskip
		 \displaystyle{ y =_r w : {A'} \over w=_{\sigma (r)} y : {A'}} \over z =_{{\tt sub_L} ( {\tt sub_L} ( s, r) ,
			\sigma(r))} {\cal C}[y] : {A}} \; \triangleright_{slss} \;  z =_{s} {\cal C}[y]  : {A}$$
	\bigskip
	$$\displaystyle{ \displaystyle{\displaystyle{ \atop } \atop
			x=_s w:  {A'}} \quad  \displaystyle{ \displaystyle{ x=_s w : {A'}
				\over  w =_{\sigma (s)} x : {A'}} \quad \displaystyle{ \atop {\cal C}[x] =_{r} z : {A}} \over {\cal C}[w] =_{{\tt sub_R} ( \sigma  (s) , r)} z : {A}} \over  {\cal C}[x] =_{{\tt sub_R}  (s,
			{\tt sub_R} ( \sigma (s), r))} z : {A}} \; \triangleright_{srs} \;  {\cal C}[x]  =_{r} z : {A}$$
	\bigskip
	$$\displaystyle{\displaystyle{ x=_s w : {A'} \over w
			=_{\sigma (s)} x : {A'}} \quad \displaystyle{ x=_s w : {A'}
			\quad {\cal C}[w] =_r z : {A} \over {\cal C}[x] =_{{\tt sub_R}
				(s,r)}  z : {A}} \over {\cal C}[w] =_{{\tt sub_R} ( \sigma (s)
			,  {\tt sub_R} (s,r))} z : {A}} \ \triangleright_{srrr} \ {\cal C}[w] =_r
	z :  {A}$$
	
	\bigskip \noindent
	Associated rewritings:\\
	${\tt sub_L}({\cal C}[r] , {\cal C}[\rho]) \triangleright_{slr} {\cal C}[r]$\\
	${\tt sub_R}({\cal C}[\rho], {\cal C}[r]) \triangleright_{srr} {\cal C}[r]$\\
	${\tt sub_L} ( {\tt sub_L} ( s ,{\cal C}[r])  , {\cal C}[\sigma(r)]) \triangleright_{sls} s$\\
	${\tt sub_L} ({\tt sub_L} (s, {\cal C}[\sigma(r)]) , {\cal C}[r] ) \triangleright_{slss} s$\\
	${\tt sub_R} (s, {\tt sub_R} ({\cal C}[\sigma(s)], r)) \triangleright_{srs} r$\\
	${\tt sub_R} ({\cal C}[\sigma(s)], {\tt sub_R} ({\cal C}[s], r)) \triangleright_{srrr} r$\\
\end{definition}

\begin{definition}[~\cite{Ruy1}]\ \\
	
	\bigskip
	
	$\beta_{rewr}$-$\times$-{\it reduction}
	
	\noindent $\displaystyle{\displaystyle{x=_r y : { A}  \qquad z : { B}
			\over
			\langle x,z \rangle =_{\xi_1 (r)} \langle y,z \rangle : {A} \times {B}
		}\times \mbox{{\it -intr}}
		\over
		{FST}( \langle x,z \rangle ) =_{\mu_1 ( \xi_1 ( r))} {FST}(\langle y,z 
		\rangle ) : {A} 
	}\times \mbox{{\it -elim}}$
	
	\hfill{$\triangleright_{mx2l} \quad x =_r y : {A}$}
	
	\bigskip
	
	\noindent $\displaystyle{\displaystyle{x =_rx': { A}  \qquad y=_s z : { B}
			\over
			\langle x,y \rangle =_{\xi_\land (r,s)} \langle x',z \rangle : {A} \times {B}
		}\times \mbox{{\it -intr}}
		\over
		{FST}( \langle x,y \rangle ) =_{\mu_1 ( \xi_\land( r,s))} {FST}(\langle x',z 
		\rangle ) : {A} 
	}\times \mbox{{\it -elim}}$
	
	\hfill{$\triangleright_{mx2l} \quad x =_r x' : {A}$}
	
	\bigskip
	
	\noindent $\displaystyle{\displaystyle{x=_r y : {A}  \qquad z=_sw  : {B}
			\over
			\langle x,z \rangle =_{\xi_\land (r,s)} \langle y,w \rangle : {A} \times {B}
		}\times \mbox{{\it -intr}}
		\over
		{SND} (\langle x, z \rangle ) =_{\mu_2 ( \xi_\land( r,s))} {SND} (\langle
		y,w \rangle ) : {B}
	}\times \mbox{\it -elim}$
	
	\hfill{$\triangleright_{mx2r} \quad z =_s w : {B}$}
	
	\bigskip
	
	\noindent $\displaystyle{\displaystyle{x: {A}  \qquad z =_s w : {B}
			\over
			\langle x,z \rangle =_{\xi_2 (s)} \langle x,w \rangle : {A} \times {B}
		}\times \mbox{{\it -intr}}
		\over
		{SND} (\langle x, z \rangle ) =_{\mu_2 ( \xi_2 ( s))} {SND} (\langle
		x,w \rangle ) : {B}
	}\times \mbox{\it -elim}$
	
	\hfill{$\triangleright_{mx2r} \quad z =_s w : {B}$}
	
	\bigskip\noindent
	Associated rewritings: \\
	$\mu_1 ( \xi_1 ( r))\triangleright_{mx2l1} r$\\
	$\mu_1 ( \xi_\land (r,s))\triangleright_{mx2l2} r$\\
	$\mu_2 ( \xi_\land( r,s))\triangleright_{mx2r1} s$\\
	$\mu_2 ( \xi_2 ( s))\triangleright_{mx2r2} s$
	
	\bigskip
	
	\noindent $\beta_{rewr}$-$+$-{\it reduction}
	
	\noindent $\displaystyle{{\displaystyle{{a =_r a':{A}} \over 
				{i}(a) =_{\xi_1 (r)} {i}(a'):{A} + {B}}+\mbox{
				\it -intr\/} \ 
			\displaystyle{{[x:{A}]} \atop {f(x) =_s k(x):{C}}} \ 
			\displaystyle{ {[y: {B}]} \atop { g(y) =_u h(y):{C}}}} \over
		{{D}({i}(a),\acute{x}f(x), \acute{y}g(y)) =_{\mu
				(\xi_1 (r),s,u)} {D}({i}(a'),\acute{x}k(x), \acute{y}h(y)):{C}}}+\mbox{\it -elim}$
	
	\hfill{$ \triangleright_{mx3l} \quad
		\displaystyle{{a =_r a':{A}} \atop {f(a/x) =_s k(a'/x):{C}}}$}
	
	\bigskip

\noindent $\displaystyle{{\displaystyle{{b =_r b':{B}} \over {j}(b) =_{\xi_2 (r)} {j} (b'):{A} + {B}}+\mbox{\it -intr\/} \ 
		\displaystyle{{[x:{A}]} \atop {f(x) =_s k(x):{C}}} \ 
		\displaystyle{ {[y: {B}]} \atop { g(y) =_u h(y):{C}}}} \over
	{{D}({j}(b),\acute{x}f(x), \acute{y}g(y)) =_{\mu
			(\xi_2 (r),s,u)} {D}({j}(b'),\acute{x}k(x), \acute{y}h(y)):{C}}}+\mbox{\it -elim}$

\hfill{$\triangleright_{mx3r} \qquad \displaystyle{b =_s b':{B} \atop g(b/y) =_u h(b'/y):{C}}$}

\bigskip

\noindent
Associated rewritings:\\
$\mu ( \xi_1 ( r) , s,u)\triangleright_{mx3l} s$\\
$\mu ( \xi_2 ( r), s,u)\triangleright_{mx3r} u$

\bigskip

\noindent $\beta_{rewr}$-$\Pi$-{\it reduction}

\noindent $\displaystyle{\displaystyle{\displaystyle{ \atop } \atop a : {A}} 
	\quad  \displaystyle{\displaystyle{ [ x : {A} ] \atop f(x) =_r g(x) :  
			{B} (x)} \over \lambda x.f(x) = _{\xi (r)} \lambda x.g(x) : 
		\Pi  x:{A}.{B} (x)} \over {APP} (\lambda x.f(x), a) 
	=_{\nu (\xi (r))} {APP} (\lambda x.g(x) , a) : {B} (a)}$

\hfill{$\triangleright_{mxl} \quad \displaystyle{  a: {A}
		\atop f(a/x) =_r g(a/x) : {B} (a)}$}
	
\bigskip

\noindent Associated rewriting:\\
$\nu (\xi (r)) \triangleright_{mxl} r$\\

\bigskip

\noindent $\beta_{rewr}$-$\Sigma$-{\it reduction}

\noindent $\displaystyle{\displaystyle{ a=_r a': {A} \quad f(a) : {B}(a) 
		\over  \varepsilon x.(f(x),a) =_{\xi_1 (r)} \varepsilon x.(f(x),a') : 
		\Sigma  x :{A}.{B} (x)} \quad \displaystyle{ [ t : { A}, 
		g(t) :  {B} (t) ] \atop d(g,t) =_s h(g,t) : { C}} \over {E}  
	(\varepsilon x.(f(x),a), \acute{g} \acute{t} d(g,t)) = _{\mu (\xi_1 ( r) 
		, s) } {E}(\varepsilon x.(f(x),a') , \acute{g} \acute{t}h(g,t)) : 
	{ C}}$

\hfill{$\triangleright_{mxr}  
	\quad \displaystyle{  a=_r a': {A} \quad f(a) : {B} (a)  
		\atop  d(f/g , a/t ) =_s h(f/g , a'/t):{ C}}$}

\bigskip

\noindent $\displaystyle{\displaystyle{ a: {A} \quad f(a)=_r i(a) : {B}(a) 
		\over  \varepsilon x.(f(x),a) =_{\xi_2 (r)} \varepsilon x.(i(x),a) : 
		\Sigma  x :{A}.{B} (x)} \quad \displaystyle{ [ t : { A}, 
		g(t) :  {B} (t) ] \atop d(g,t) =_s h(g,t) : { C}} \over {E}  
	(\varepsilon x.(f(x),a), \acute{g} \acute{t} d(g,t)) = _{\mu (\xi_2 ( r) 
		, s) } {E}(\varepsilon x.(i(x),a) , \acute{g} \acute{t}h(g,t)) : 
	{ C}}$

\hfill{$\triangleright_{mxl}  
	\quad \displaystyle{  a: {A} \quad f(a) =_r i(a) : {B} (a)  
		\atop  d(f/g , a/t ) =_s h(i/g , a/t):{ C}}$}

\bigskip

\noindent Associated rewritings:\\
$\mu (\xi_1 (r) , s) \triangleright_{mxr} s$\\
$\mu (\xi_2 (r) , s) \triangleright_{mxl} s$

\end{definition}

\begin{definition}[$\eta_{rewr}$ ~\cite{Ruy1}]\ \\
	
	\bigskip
	
	$\eta_{rewr}$- $\times$-{\it reduction}
	
	$\displaystyle{\displaystyle{ x=_r y : {A} \times {B} 
			\over
			{FST}(x) =_{\mu_1 ( r)} {FST}(y) : {A}} \times\mbox{\it 
			-elim}  \  \displaystyle{ x=_r y : {A} \times {B} \over
			{SND}(x) =_{\mu_2 ( r)} {SND}(y) : {B}
		}\times\mbox{\it -elim} 
		\over
		\langle {FST}(x) , {SND}(x) \rangle =_{\xi ( \mu_1 ( r),\mu_2(r))}  \langle 
		{FST} (y), {SND} (y) \rangle : {A} \times {B}
	}\times\mbox{\it -intr} $
	
	\hfill{$\displaystyle{\triangleright_{mx}} \   \displaystyle{x =_r y:  {A} \times {B}}$}
	
	\bigskip
	
	\noindent  $\eta_{rewr}$- $+$-{\it reduction}

	\noindent $\displaystyle{\displaystyle{ \atop c=_t d : {A} +  
			{B}}  \displaystyle{ [a_1 =_r a_2 : {A} ] \over {i}  (a_1) =_{\xi_1 (r)} {i} (a_2) : {A} +{B}}+\mbox{\it -intr} \displaystyle{ [ b_1 =_s b_2: {B}
			] \over {j}(b_1) =_{\xi_2 (s)} {j}(b_2) : {A} + 
			{B}}+\mbox{\it -intr} \over {D}(c, \acute{a_1}{i}(a_1), \acute{b_1}{j}(b_1)) =_{\mu ( t, \xi_1 (r) , \xi_2 (s))}
		{D}(d, \acute{ a_2}{i}(a_2), \acute{b_2} {j}(b_2))}+\mbox{\it -elim}$
	
	\hfill{$\triangleright_{mxx} \quad c=_t d : {A} +{B}$}
	
	\bigskip
	
	\noindent $\Pi$-$\eta_{rewr}$-{\it reduction}
	
	\noindent $\displaystyle{\displaystyle{[t:{A}] \quad c =_r d: \Pi x:{ A}.{B} (x) \over {APP}(c,t) =_{\nu (r)} {APP} (d,t): {
				B}(t)} \Pi\mbox{\it -elim} \over \lambda t.{APP}(c,t) =_{\xi (\nu 
			(r)) } \lambda t.{APP} (d,t) : 
		\Pi t:{A}.{B}(t)}\Pi\mbox{\it -intr}$
	
	\hfill{$\triangleright_{xmr} \qquad c=_r d: \Pi x:{A}.{B}(x)$}\\
	where $c$ and $d$ do not depend on $x$.
	
	\bigskip
	
	\noindent $\Sigma$-$\eta_{rewr}$-{\it reduction}
	
	\noindent $\displaystyle{\displaystyle{ \atop c=_sb: \Sigma x:{A}.{B} (x)} 
		\quad  \displaystyle{ [t:{A}] \quad [g(t) =_r h(t) : {B}(t)] 
			\over  \varepsilon y.(g(y),t) =_{\xi_2 (r) } \varepsilon y.(h(y), t) :  
			\Sigma y :{A}.{B} (y)}\Sigma\mbox{\it -intr} \over {E} 
		(c,  \acute{g}\acute{t}\varepsilon y.(g(y),t)) =_{\mu (s, \xi_2 (r))} 
		{E} (b, \acute{h} \acute{t} \varepsilon y.(h(y) , t)) : 
		\Sigma y :{A}.{B} (y)}\Sigma\mbox{\it -elim}$
	
	\hfill{$\triangleright_{mxlr} \quad c =_s b : \Sigma x:{A}.{B} (x)$}
	
	\bigskip
	
	\smallskip\noindent
	Associated rewritings:\\
	$\xi (\mu_1 (r),\mu_2(r)) \triangleright_{mx} r$\\
	$\mu (t, \xi_1 (r) , \xi_2 (s) ) \triangleright_{mxx} t$\\
	$\xi (\nu (r)) \triangleright_{xmr} r$\\
	$\mu ( s , \xi_2 (r)) \triangleright_{mxlr} s$
	
\end{definition}

\medskip

\begin{definition}[$\sigma$ and $\tau$  ~\cite{Ruy1}]\ \\
	
	\bigskip
	
	$\displaystyle{\displaystyle{x=_r y : {A} \quad y =_s w : {A}  
			\over x =_{\tau(r,s)}  w : {A}} \over w=_{\sigma(\tau(r,s ))} x : {A}} \quad \triangleright_{stss} \quad 
	\displaystyle{\displaystyle{  y=_s w : {A} \over w=_{\sigma(s)} y 
			:  {A}} \quad \displaystyle{x=_r y : {A} \over y=_{\sigma(r)} 
			x :  {A}} \over w=_{\tau(\sigma(s),\sigma(r))} x : {A}}$\\
	\ \\
	
	\bigskip
	
	Associated rewriting:\\
	$\sigma(\tau(r,s)) \triangleright_{stss} \tau(\sigma(s), \sigma(r))$
	
\end{definition}

\medskip

\begin{definition}[$\sigma$ and ${\tt sub}$ ~\cite{Ruy1}]\ \\
	
	\bigskip
	
	$$\displaystyle{\displaystyle{ x =_r {\cal C}[y] : {A} \quad y =_s w:  {A'} \over x=_{{\tt sub_L}(r,s)}{\cal C}[w] : {A}} \over 
		{\cal C}[w] =_{\sigma( {\tt sub_L}(r,s))} x : {A}} \quad 
	\triangleright_{ssbl}  \quad \displaystyle{\displaystyle{y=_s w : {A'} \over w=_{\sigma(s)} y : {A'}} \quad   
		\displaystyle{x =_r {\cal C}[y]:{A} \over {\cal C}[y] =_{\sigma(r)} x : {A}} \over {\cal C}[w] =_{{\tt sub_R}(\sigma(s), 
			\sigma(r))} x : {A}}$$
	
	\bigskip
	
	$$\displaystyle{\displaystyle{ x=_r y : {A'} \quad {\cal C}[y] =_s w:  {A} \over {\cal C}[x] = _{{\tt sub_R}(r,s)} w : {A}} \over 
		w=_{ \sigma({\tt sub_R}(r,s))} {\cal C}[x] : {D}} \quad \triangleright_{ssbr} 
	\quad  \displaystyle{\displaystyle{{\cal C}[y] =_s w : {A} \over w =_{\sigma(s)} {\cal C}[y] : 
			{A} } \quad \displaystyle{x=_r y : {A'} \over y=_{\sigma(r)} x 
			:  {A'}} \over w=_{{\tt sub_L}(\sigma(s),\sigma(r))} {\cal C}[x]  : {A}} $$
	
	\bigskip
	
	Associated rewritings:\\
	$\sigma({\tt sub_L}(r,s)) \triangleright_{ssbl} {\tt sub_R}(\sigma(s),  \sigma(r))$ \\
	$\sigma ({\tt sub_R} (r,s)) \triangleright_{ssbr} {\tt sub_L} ( \sigma (s) ,  \sigma (r))$
	
\end{definition}

\begin{definition}[$\sigma$ and $\xi$ ~\cite{Ruy1}]\ \\
	
	\bigskip
	
	$$\displaystyle{\displaystyle{ x=_r y : {A} \over {i}(x) =_{\xi_1 
				(r)}  {i}(y) : {A} +{B}} \over {i}(y) =_{\sigma(\xi_1  (r))} {i}(x) : {A} +{B}} \quad \triangleright_{sx} 
	\quad  \displaystyle{\displaystyle{ x=_r y : 
			{A} \over y =_{\sigma(r)} x : {A}} \over {i}(y) =_{\xi_1 
			(  \sigma (r))} {i} (x) : {A} +{B}}$$
	\bigskip
	$$\displaystyle{\displaystyle{ x=_r y : {A} \quad z=_s w : {B} 
			\over  \langle x,z \rangle =_{\xi(r,s)} \langle y,w \rangle : {A} 
			\times   {B}} \over \langle y,w \rangle =_{\sigma(\xi( r,s))} 
		\langle  x,z \rangle : {A} \times {B}} \quad \triangleright_{sxss} \quad  
	\displaystyle{\displaystyle{x=_r y : {A} \over y=_{\sigma(r)} x :  {A}}  \quad \displaystyle{ z=_s w : {B} \over
			w =_{\sigma(s)} z : {B}} \over \langle y,w \rangle  =_{\xi(\sigma(r),{\sigma(s))}} \langle x,z \rangle : {A} \times {B}}$$
	\bigskip
	
	$$\displaystyle{\displaystyle{\displaystyle{ [ x : {A} ] \atop f(x) =_s 
				g(x)  : {B} (x)} \over \lambda x.f(x) =_{\xi (s)} \lambda x.g(x) :  
			\Pi x:{A}.{B} (x)} \over   \lambda x.g(x) =_{\sigma (\xi (s))} \lambda x.f(x) : \Pi x:{A}.{B} (x)}  \  
	\triangleright_{smss}  \  \displaystyle{\displaystyle{\displaystyle{ [ x : {A} 
				]  \atop f(x) =_s g(x) : {B} (x)} \over  g(x) =_{\sigma(s)} f(x) 
			:  {B} (x)} \over \lambda x.g(x) =_{\xi ( \sigma (s))} \lambda x.f(x) 
		: \Pi x:{A}.{B} (x)} $$
	
	\bigskip
	
	\noindent Associated rewritings:\\
	$\sigma(\xi (r)) \triangleright_{sx} \xi ( \sigma(r))$\\
	$\sigma(\xi (r, s)) \triangleright_{sxss} \xi ( \sigma(r), \sigma(s))$\\
	$\sigma(\xi (s) \triangleright_{smss} \xi ( \sigma(s))$\\
\end{definition}

\begin{definition}[$\sigma$ and  $\mu$ ~\cite{Ruy1}]\ \\
	
	\bigskip
	
	$$\displaystyle{\displaystyle{ x =_r y : {A} \times {B} \over {FST}(x) =_{\mu_1 (r)} {FST} (y) : {A}} \over {FST}(y) =_{\sigma (\mu_1 (r))} {FST} (x) : {A}} \quad
	\triangleright_{sm} \quad   
	\displaystyle{\displaystyle{ x=_r y : {A} \times {B} \over  
			y=_{\sigma(r)} x : {A} \times {B}} \over {FST} (y) =_{\mu_1 
			(\sigma (r))} {FST}(x) : {A}}$$
	\bigskip
	$$\displaystyle{\displaystyle{ x =_r y : {A} \times {B} \over {SND}(x) =_{\mu_2 (r)} {SND} (y) : {A}} \over {SND}(y) =_{\sigma (\mu_2 (r))} {SND} (x) : {A}} \quad
	\triangleright_{sm} \quad   
	\displaystyle{\displaystyle{ x=_r y : {A} \times {B} \over  
			y=_{\sigma(r)} x : {A} \times {B}} \over {SND} (y) =_{\mu_2 
			(\sigma (r))} {SND}(x) : {A}}$$
	
	\bigskip
	
	\noindent $\displaystyle{\displaystyle{ x=_s y : {A} \quad f=_r g : {A} 
			\to  {B} \over {APP} (f,x) =_{\mu (s,r)} {APP }(g,y) : {B}}  \over {APP} (g,y) =_{\sigma(\mu (s,r))} {APP}(f,x) : {B}}$
	
	\hfill{$\triangleright_{smss} \quad \displaystyle{\displaystyle{  x=_s y : {A} 
				\over   y=_{\sigma (s)}  x : {A}} \quad \displaystyle{ f=_r g : 
				{A } \to {B} \over g=_{\sigma(r)} f : {A} \to {B}} 
			\over  {APP}(g,y) =_{\mu (\sigma(s), \sigma(r))} {APP} 
			(f,x)  : {B} }$}
	
	\bigskip
	
	\noindent $\displaystyle{\displaystyle{\displaystyle{\displaystyle{ \atop } 
				\atop    x=_r y : {A} +{B}} \quad 
			\displaystyle{\displaystyle{   [s: {A} ] \atop \vdots} \atop  d(s) =_u 
				f(s)  : {C} } \quad \displaystyle{\displaystyle{ [t:{B}] \atop
					\vdots } \atop  e(t) =_v g(t) : {C}} \over {D} (x, \acute{s}d(s),  \acute{t} e(t)) =_{\mu (r,u,v)} {D}(y, \acute{s}f(s),  \acute{t}g(t)) : {C}} \over {D} (y, \acute{s}f(s),  \acute{t}g(t)) : {C} =_{\sigma(\mu (r,u,v))} {D}  (x, \acute{s}d(s), \acute{t}e(t)) : {C}}$
	
	\hfill{$\triangleright_{smsss} \displaystyle{\displaystyle{\displaystyle{ \atop x=_r y : {A} +  
					{B}} \over y=_{\sigma (r)} x : {A} +{B}} \quad  
			\displaystyle{\displaystyle{[s:{A} ] \atop d(s) =_u f(s) : {C}} 
				\over  f(s) =_{\sigma(u)} d(s) : {C}} \quad  
			\displaystyle{\displaystyle{ [t:{B}] \atop e(t) =_v g(t) : {C} }  
				\over g(t) =_{\sigma (v)} e(t) : {C}} \over {D}  
			(y,\acute{s}f(s), \acute{t}g(t)) =_{\mu (\sigma(r), \sigma
				(u),  \sigma (v))} {D} (x, \acute{s}d(s), \acute{t} e(t)) 
			:  {C}}$}
	
	\bigskip
	
	\noindent $\displaystyle{\displaystyle{\displaystyle{ \atop e=_s b : \Sigma x:{A}.{B} (x)} \quad \displaystyle{ [t: {A}, \; g(t) : {B} (t) 
				]  \atop d(g,t) =_r f(g,t) : {C}} \over {E} (e,\acute{g}\acute{t} d(g,t)) =_{\mu (s,r)} {E} (b, \acute{g} \acute{t} f(g,t)) :  
			{C}} \over {E} (b,\acute{g} \acute{t} f(g,t)) =_{\sigma
			(\mu (s,r))} {E} (e, \acute{g} \acute{t} d(g,t)) : {C}}$
	
	\hfill{$ \triangleright_{smss} \displaystyle{\displaystyle{ \displaystyle{ \atop e=_s b : \Sigma 
					x:{A}.{B} (x)} \over b=_{\sigma(s)} e : \Sigma x:{A}.{B} (x)}\quad \displaystyle{\displaystyle{ [t: {A}, \; g(t) :  
					{B} (t) ] \atop d(g,t) =_r f(g,t) : {C}} \over f(g,t)  =_{\sigma(r)} d(g,t) : {C}} \over {E} (b,\acute{g} \acute{t} f(g,t))  =_{\mu (\sigma (s), \sigma (r))} {E} (e, \acute{g} \acute{t}d(g,t)) : {C}}  $}
	
	\bigskip\noindent
	Associated rewritings:\\
	$\sigma(\mu_1 (r)) \triangleright_{sm} \mu_1 ( \sigma(r))$\\
	$\sigma(\mu_2 (r)) \triangleright_{sm} \mu_2 ( \sigma(r))$\\
	$\sigma(\mu (s, r)) \triangleright_{smss} \mu ( \sigma(s), \sigma(r))$\\
	$\sigma(\mu (r, u,v)) \triangleright_{smsss} \mu ( \sigma(r),  \sigma(u),\sigma(v))$
	
\end{definition}

\medskip

\begin{definition}[$\tau$ and ${\tt sub}$ ~\cite{Ruy1}]\ \\
	
	\bigskip
	
	$\displaystyle{\displaystyle{ x=_r {\cal C}[y]:A \quad y =_s w: {A'} \over x =_{{\tt sub_L} (r,s)} {\cal C}[w]: {A}} \quad
		\displaystyle{ \atop {\cal C}[w] =_t z : { A}} \over x =_{ \tau  ({\tt sub_L} (r,s) , t)} z : {A}}$
	
	\hfill{$  \triangleright_{tsbll} \ \  \displaystyle{\displaystyle{  \atop  
				x=_r {\cal C}[y]: {A}} \quad  \displaystyle{ y =_s w : {A'}  
				\quad {\cal C}[w] =_t z : {A} \over {\cal C}[y] =_{{\tt 
						sub_R}(s,t)}  z : {A}} \over x=_{\tau (r, {\tt sub_R}(s,t))} z 
			:  {A}}$}
	
	\bigskip
	
	\noindent $\displaystyle{\displaystyle{ y=_s w: {A} \quad {\cal C}[w] =_t z 
			:  {A} \over {\cal C}[y] =_{{\tt sub_R} ( s,t)} z : { A}} \quad
		\displaystyle{ \atop z =_u v : { A}} \over {\cal C}[y] =_{ \tau  ({\tt sub_R} (s,t) , u)} v : {A}}$
	
	\hfill{$ \triangleright_{tsbrl} \   \displaystyle{ \displaystyle{ \atop y =_s w : {D'}} \ \  
			\displaystyle{{\cal C}[w] =_t z : {A} \quad z =_u v : {A} \over 
				{\cal C}[w] =_{\tau(t,u)} v : {A}} \over {\cal C}[y]=_{{\tt sub_R } ( s, \tau ( t,u))} v : {A}}$}
	
	\bigskip
	
	\noindent $\displaystyle{\displaystyle{ \atop x=_r {\cal C}[z] : {A}} \quad
		\displaystyle{{\cal C}[z] =_{\rho} {\cal C}[z] : {A} 
			\quad  z =_s w : {A'} \over {\cal C}[z] =_{{\tt sub_L} (\rho , s)} {\cal C}[w] : {A}} \over x =_{\tau ( r, {\tt 
				sub_L}  (\rho, s))} {\cal C}[w] : {A}}$
	
	\hfill{$ \triangleright_{tsblr} \   \displaystyle{ x=_r {\cal C}[z] : {A}  \quad z =_s w : {A'}  \over x =_{{\tt sub_L} (r,s)} {\cal C}[w]: {A}}$}
	
	\bigskip
	
	$\displaystyle{\displaystyle{ \atop x =_r {\cal C}[w]: {A}}  \quad
		\displaystyle{ w=_s z : {A'} \quad {\cal C}[z]=_{\rho} {\cal C}[z] : {A} \over {\cal C}[w] =_{{\tt sub_R} (s, \rho)} {\cal C}[z] : {A}} \over x =_{\tau(r, {\tt sub_R}  
			(s, \rho))} {\cal C}[z] : {A}}$
	
	\hfill{$ \triangleright_{tsbrr} \   \displaystyle{ x =_r {\cal C}[w] : {D} \quad w =_s z : {A'}  
			\over x =_{{\tt sub_L} (r,s)} {\cal C}[z] : {A}}$}
	
\end{definition}

\medskip

\begin{definition}[$\tau$ and $\tau$ ~\cite{Ruy1}]\ \\
	
	\bigskip
	$\displaystyle{\displaystyle{ x=_t y:A \quad y =_r w: {A} \over x =_{\tau(t,r)}w: {A}} \quad
		\displaystyle{ \atop w =_s z : {A}} \over x =_{\tau (\tau (t,r) , s)} z : {A}}$
	
	\hfill{$\triangleright_{tt} \quad  
		\displaystyle{\displaystyle{ \atop x =_t y : {A}} \quad  
			\displaystyle{ y=_r w : {A} \quad w=_s z : {A} \over  y=_{\tau(r,s)} z : {A}} \over x=_{\tau(t,\tau(r,s))} z :  
			{A}}$}
	
	\bigskip\noindent
	Associated rewritings: \\
	$\tau({\tt sub_L}(r,s),t) \triangleright_{tsbll} \tau (r, {\tt sub_R}(s,t))$\\
	$\tau ({\tt sub_R } (s,t), u)) \triangleright_{tsbrl} {\tt sub_R} ( s, \tau  (t,u))$\\
	$\tau (r, {\tt sub_L} (\rho , s)) \triangleright_{tsblr} {\tt sub_L} (r,s)$\\
	$\tau (r, {\tt sub_R} (s, \rho )) \triangleright_{tsbrr} {\tt sub_L} (r,s)$\\
	$\tau(\tau (t,r),s) \triangleright_{tt} \tau(t,\tau(r,s))$
	
\end{definition}

\bigskip

Thus, we put together all those rules to compose our rewrite system:

\begin{definition}[$\mathit{LND_{EQ}-TRS}$ ~\cite{Ruy1}] 
	 \quad \\
1. $\sigma(\rho) \triangleright_{sr} \rho$ \\ 
2. $\sigma(\sigma(r)) \triangleright_{ss} r$\\ 
3. $\tau({\cal C}[r] , {\cal C}[\sigma(r)]) \triangleright_{tr}  {\cal C }[\rho]$\\ 
4. $\tau({\cal C}[\sigma(r)], {\cal C}[r]) \triangleright_{tsr} {\cal C}[\rho]$\\ 
5. $\tau({\cal C}[r], {\cal C}[\rho]) \triangleright_{trr} {\cal C}[r]$\\ 
6. $\tau({\cal C}[\rho], {\cal C}[r]) \triangleright_{tlr} {\cal C}[r]$ \\ 
7. ${\tt sub_L}({\cal C}[r], {\cal C}[\rho]) \triangleright_{slr} {\cal C}[r]$\\ 
8. ${\tt sub_R}({\cal C}[\rho], {\cal C}[r]) \triangleright_{srr} {\cal C}[r]$ \\
9. ${\tt sub_L} ({\tt sub_L} (s, {\cal C}[r]), {\cal C}[\sigma(r)]) \triangleright_{sls} s$\\
10. ${\tt sub_L} ( {\tt sub_L} (s , {\cal C}[\sigma(r)]) , {\cal C}[r]) \triangleright_{slss} s$\\ 
11. ${\tt sub_R} ({\cal C}[s], {\tt sub_R} ({\cal C}[\sigma(s)],r)) \triangleright_{srs} r$\\ 
12. ${\tt sub_R} ({\cal C}[\sigma(s)], {\tt sub_R} ({\cal C}[s] ,  r )) \triangleright_{srrr} r$\\ 
13. 
$\mu_1 ( \xi_1 ( r))\triangleright_{mx2l1} r$\\
14. $\mu_1 ( \xi_\land ( r,s))\triangleright_{mx2l2} r$\\
15.
$\mu_2 ( \xi_\land ( r,s))\triangleright_{mx2r1} s$\\
16.
$\mu_2 ( \xi_2 ( s))\triangleright_{mx2r2} s$\\
17. 
$\mu ( \xi_1 (r) , s , u) \triangleright_{mx3l} s$\\ 
18. 
$\mu (\xi_2 (r) , s , u) \triangleright_{mx3r} u$\\ 
19.
$\nu (\xi (r)) \triangleright_{mxl} r$\\ 
20.
$\mu (\xi_2 (r) , s) \triangleright_{mxr} s$\\ 
21.
$\xi ( \mu_1 (r),\mu_2(r) ) \triangleright_{mx} r$ \\ 
22.
$\mu ( t, \xi_1 (r), \xi_2 (s)) \triangleright_{mxx} t$ \\ 
23. 
$\xi ( \nu (r) ) \triangleright_{xmr} r$ \\ 
24. 
$\mu (s,\xi_2 (r)) \triangleright_{mx1r} s$\\ 
25. $\sigma(\tau(r,s)) \triangleright_{stss} \tau(\sigma(s),  \sigma(r))$\\ 
26. $\sigma({\tt sub_L}(r,s)) \triangleright_{ssbl} {\tt sub_R}(\sigma(s), \sigma(r))$\\ 
27. $\sigma ({\tt sub_R} (r,s)) \triangleright_{ssbr} {\tt sub_L} (\sigma
(s),  \sigma (r))$\\ 
28. $\sigma(\xi (r)) \triangleright_{sx} \xi ( \sigma(r))$\\ 
29. $\sigma(\xi (s, r)) \triangleright_{sxss} \xi ( \sigma(s),  \sigma(r))$\\ 
30. $\sigma(\mu (r)) \triangleright_{sm} \mu ( \sigma(r))$\\ 
31. $\sigma(\mu (s, r)) \triangleright_{smss} \mu (\sigma(s),  \sigma(r))$\\ 
32. $\sigma(\mu (r,u,v)) \triangleright_{smsss} \mu ( \sigma(r),\sigma(u),\sigma(v))$\\
33. $\tau (r, {\tt sub_L} (\rho , s)) \triangleright_{tsbll} {\tt sub_L}  (r,s)$\\ 
34. $\tau (r, {\tt sub_R} (s, \rho)) \triangleright_{tsbrl}  {\tt 
	sub_L} (r,s)$\\ 
35. $\tau({\tt sub_L}(r,s),t) \triangleright_{tsblr} \tau (r, {\tt 
	sub_R} (s,t))$\\ 
36. $\tau ({\tt sub_R} (s,t),u) \triangleright_{tsbrr} {\tt sub_R} (s, \tau  (t,u))$\\ 
37. $\tau(\tau(t,r),s) \triangleright_{tt} \tau(t,\tau (r,s)) $\\
38. $\tau ({\cal C}[u], \tau ({\cal C}[\sigma(u)] , v)) \triangleright_{tts} v$\\
39. $\tau ({\cal C}[\sigma(u)] , \tau ({\cal C}[u] , v)) \triangleright_{tst} u$.
\label{LND-TRS}
\end{definition}

\subsection{Normalization}

	In the previous subsection, we have seen a system of rewrite rules that resolves reductions in a computational path. When we talk about these kinds of systems, two questions emerge: Does every computational path have a normal form? And if a computational path has a normal form, is it unique? To demonstrate that it has a normal form, one must prove that every computational path terminates, i.e., that after a finite number of rewrites, one will end up with a path that does not have any additional reduction. To prove that it is unique, one must illustrate that the system is confluent. In other words, if one has a path with 2 or more reductions, they must demonstrate that the choice of the rewrite rule does not matter. In the end, one will always obtain the same end-path without any redundancies.

\subsubsection{Termination}

We are interested in the following theorem ~\cite{RuyAnjolinaLivro, Ruy1}:

\begin{theorem}[Termination property for $\mathit{LND_{EQ}-TRS}$]
	$\mathit{LND_{EQ}-TRS}$ is terminating.
\end{theorem}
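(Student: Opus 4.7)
The plan is to prove termination by exhibiting a well-founded monotone reduction ordering on path-terms such that every one of the 39 rules of $\mathit{LND_{EQ}-TRS}$ strictly decreases it. Because several rules (notably 25 and 37) swap or duplicate a top symbol, a direct polynomial interpretation is awkward to fit, so I would use the Recursive Path Ordering (RPO), which handles such rearrangements uniformly via the subterm clause and a well-chosen precedence. Termination then follows from the standard fact that RPO over a well-founded precedence is itself a well-founded reduction ordering.

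I would take the precedence
\[
\sigma \;>\; \tau \;>\; {\tt sub_L},\,{\tt sub_R} \;>\; \xi,\,\xi_1,\,\xi_2,\,\xi_\wedge,\,\mu,\,\mu_1,\,\mu_2,\,\nu \;>\; \rho
\]
and give $\tau$ left-to-right lexicographic status (forced by rule~37), with arbitrary status for the rest. This precedence is read off rule-by-rule: rules 25--32 force $\sigma$ above every binary/ternary operator; rules 33--36 force $\tau$ above the two ${\tt sub}$ constructors; and rules whose right-hand side is $\rho$ force $\rho$ at the bottom. Verification splits into four uniform groups. Rules 13--24 are pure constructor-elimination rewrites whose RHS is a direct subterm of the LHS, so they hold from the subterm clause alone. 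The simplification rules 1--12 reduce to either a subterm or to an application of the precedence at the root; for instance $\sigma(\sigma(r)) \succ r$ is by subterm, while $\sigma(\rho) \succ \rho$ uses $\sigma > \rho$. The $\sigma$-commutation rules 25--32 fall under case~(b) of RPO (the LHS has a strictly larger head), and the required dominations of the RHS arguments follow from the subterm clause; e.g.\ for rule~25, the check $\sigma(\tau(r,s)) \succ \sigma(s)$ reduces to $\tau(r,s) \succ s$, which is a subterm, and symmetrically for the argument swap. Rule~37 follows from the lex status of $\tau$, since comparing the first arguments yields $\tau(t,r) \succ t$ by subterm.

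The rule schemas parameterized by a context $\mathcal{C}[\cdot]$ (rules 3--12, 38--39) are handled via the monotonicity of RPO under contexts: once the skeleton inequality is established (e.g.\ $\sigma(r) \succ \rho$), we obtain $\mathcal{C}[\sigma(r)] \succ \mathcal{C}[\rho]$ for any $\mathcal{C}$, and the original rule then follows by case~(a) of RPO because the dominating subterm of the LHS already exceeds the entire RHS. The main obstacle I foresee is confirming that the fixed precedence is simultaneously consistent with all 39 rules --- in particular, that no two rules demand incompatible orderings among $\sigma$, $\tau$, ${\tt sub_L}$, ${\tt sub_R}$, and the eliminator symbols, and that every context-schematic rule admits a uniform argument independent of $\mathcal{C}$. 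If an incompatibility arose, the fallback would be semantic labelling or a lexicographic combination of RPO with a size measure; a preliminary scan of the rules suggests no such conflict, so the verification should reduce to routine case analysis.
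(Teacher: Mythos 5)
Your proposal follows essentially the same route as the paper: Dershowitz's recursive path ordering over a precedence on the rewrite operators (with $\sigma$ maximal, $\rho$ minimal, and $\tau$ above the ${\tt sub}$ operators), verified rule by rule, just as the paper does (it displays only the checks for rules 26 and 27 and defers the remaining cases to the cited monograph). Your two deviations --- placing $\tau$ above both ${\tt sub_L}$ and ${\tt sub_R}$, and giving $\tau$ lexicographic status so that rule 37 decreases --- are refinements within the same method rather than a different argument.
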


The proofs uses a special kind of ordering, known as \textit{recursive path ordering}, proposed by ~\cite{dershowitz}:

\begin{definition}[Recursive path ordering ~\cite{dershowitz, Ruy1}]
	Let $>$ be a partial ordering 
	on a set of operators F. The recursive path ordering $>^*$ on the set 
	T(F) of terms over F is defined recursively as follows:
	
	$$ s = f(s_1,\ldots , s_m) >^* g(t_1,\ldots, t_n) = t,$$
	if and only if
	\begin{enumerate}
		\item $f=g$ and $\{ s_1, \ldots , s_m\} \gg^*  \{ t_1, \ldots , t_n\}$, or
		\item $f>g$ and $\{s\} \gg^* \{t_1, \ldots , t_n\}$, or
		\item $f \ngeq g$ and $\{s_1, \ldots , s_m \} \gg^* $ or $=$ $\{t\}$
	\end{enumerate}
	where $\gg^*$ is the extension of $>^*$ to multisets.
	
\end{definition}

This definition uses the notion of partial ordering in multisets.  A 
given partial ordering $>$ on a set $S$ may be extended to a partial 
ordering $\gg$ on finite multisets of elements of $S$, wherein a multiset 
is reduced by removing one or more elements and replacing them with any 
finite number of elements, each one smaller than one of the 
elements removed ~\cite{dershowitz}.

Thus, one can prove the termination property by demonstrating that in all rules $e \rightarrow d$ of the system, one has that $e  >^{*} d$. We also need to define the precedence ordering on the rewrite operators. We define it as follows ~\cite{Ruy1, RuyAnjolinaLivro}:
$$\begin{array}{l}
\sigma > \tau > \rho, \\
\sigma > \xi, \\
\sigma > \xi_\land, \\
\sigma > \xi_1, \\
\sigma > \xi_2, \\
\sigma > \mu, \\
\sigma > \mu_1, \\
\sigma > \mu_2, \\
\sigma > {\tt sub_L}, \\
\sigma > {\tt sub_R}, \\
\tau > {\tt sub_L}
\end{array}$$

Thus, one can prove the termination by evidencing that for every rule of $e \rightarrow d$ of $\mathit{LND_{EQ}-TRS}$, $e  >^{*} d$. For almost every rule this is a straightforward and tedious process.  We are not going to display all those steps in this work, but we can give the proof of two examples.

\begin{itemize}
    \item[26.] $\sigma(sub_{L}(r,s))>^{*}sub_{R}(\sigma(s),\sigma(r)):$
    \begin{itemize}
        \item $\sigma>sub_{R}$ from the precedence ordering on the rewrite operators.
        \item $\{\sigma(sub_{L}(r,s))\}\gg^{*}\{\sigma(r),\sigma(r)\}:$
        \begin{itemize}
            \item [-] $\sigma(sub_{L}(r,s))>^{*}\sigma(s)$ and $\sigma(sub_{L}(r,s))>^{*}\sigma(r)$:
            \begin{itemize}
                \item $\sigma=\sigma$
                \item $\{sub(r,s)\}\gg\{s\}$ from the subterm condition.
                \item $\{sub(r,s)\}\gg\{r\}$ from the subterm condition.
            \end{itemize}
           
        \end{itemize}
    \end{itemize}
     
    \item [27.] $\sigma(sub_{R}(r,s))\rhd sub_{L}(\sigma(s),\sigma(r)):$
    \begin{itemize}
        \item $\sigma>sub_{L}$ from the precedence ordering on the rewrite operators.
        \item $\{\sigma(sub_{R}(r,s))\}\gg^{*}\{\sigma(r),\sigma(r)\}:$
            \begin{itemize}
                \item $\sigma=\sigma$
                \item $\{sub_{R}(r,s)\}\gg\{s\}$ from the subterm condition.
                \item $\{sub_{R}(r,s)\}\gg\{r\}$ from the subterm condition.
            \end{itemize}
           
        \end{itemize}
\end{itemize}

All other proofs can be verified at ~\cite{RuyAnjolinaLivro}. 
\subsubsection{Confluence}

Before we go to the proof of confluence, one needs to observe that $\mathit{LND_{EQ}-TRS}$ is a conditional term rewriting system. This means that some rules can only be applied if the terms of the associated equation follow some rules. For example, for the rule  $\mu_1 ( \xi_\land ( r,s))\triangleright_{mx2l2} r$, it is necessary to have a $\beta$-Reduction such as $FST\langle x, y \rangle$. With that in mind, we have the following definition ~\cite{RuyAnjolinaLivro}:

\begin{definition}[Conditional term rewriting system]
	In conditional term rewriting systems, the rules have conditions attached, which must be true for the rewrite to occur. For example, a rewrite rule $e \rightarrow d$ with condition $C$ is expressed as:
	
	\begin{center}
		$C | e \rightarrow d$
	\end{center}
	
	\end{definition}

To prove the confluence, it is necessary to analyze all possible critical pairs using the superposition algorithm proposed by ~\cite{knuth1}. Thus, there should not be any divergent critical pair. For example, we can take the superposition of rules $1$ and $2$, obtaining: $\sigma(\sigma(\rho))$. We have two possible rewrites ~\cite{RuyAnjolinaLivro}:

\begin{itemize}
	\item $\sigma(\sigma(\rho)) \rhd_{sr} \sigma(\rho) \rhd_{sr} \rho$
	\item $\sigma(\sigma(\rho)) \rhd_{ss} \rho$.
\end{itemize}

As can be seen, we ended up with the same term $\rho$. Thus, no divergence has been generated.

One should compare every pair of rules to find all critical pairs and see if there are any divergences. If some divergence occurs, the superposition algorithm proposed by ~\cite{knuth1} illustrates how to add new rules to the system in such a way that it becomes confluent. As a matter of fact, that was the reason why rules $38$ and $39$ of $\mathit{LND_{EQ}-TRS}$ have been introduced to the system ~\cite{Ruy1}:\\

\noindent 38. $\tau ({\cal C}[u], \tau ({\cal C}[\sigma(u)] , v)) \triangleright_{tts} v$\\
39. $\tau ({\cal C}[\sigma(u)] , \tau ({\cal C}[u] , v)) \triangleright_{tst} u$.\\

Those two rules introduced the following reductions to the system ~\cite{RuyAnjolinaLivro}:

\bigskip

\begin{center}
	
\begin{bprooftree}
	\AxiomC{$x =_{s} u : D$}
	\AxiomC{$ x =_{s} u : D$}
	\UnaryInfC{$ u =_{\sigma(s)} x : D$}
	\AxiomC{$x =_{v} w : D$}
	\BinaryInfC{$u =_{\tau(\sigma(s), v)} w : D$}
	\RightLabel{\quad \quad \quad \quad \quad \quad $\rhd_{tts} \quad x =_{v} w$}
	\BinaryInfC{$ x =_{\tau(s,\tau(\sigma(s),v))} w : D$}
\end{bprooftree}

\end{center}

\bigskip

\begin{center}
\begin{bprooftree}
	\AxiomC{$x =_{s} w : D$}
	\UnaryInfC{$w =_{\sigma(s)} x : D$}
	\AxiomC{$ x =_{s} w : D$}
	\AxiomC{$w =_{v} z : D$}
	\BinaryInfC{$x =_{\tau(s,v)} z : D$}
	\RightLabel{\quad \quad \quad \quad \quad \quad $\rhd_{ss} \quad w =_{v} z$}
	\BinaryInfC{$w =_{\tau(\sigma(s), \tau(s,v))} z : D$}
\end{bprooftree}

\end{center}

\bigskip

A full proof of confluence can be found in ~\cite{Anjo1,Ruy2,Ruy3,RuyAnjolinaLivro}.
 
\subsubsection{Normalization procedure}

We can now provide two normalization theorems:

\begin{theorem}[normalization ~\cite{RuyAnjolinaLivro}]
	Every derivation in the $\mathit{LND_{EQ}-TRS}$ converts to a normal form.
\end{theorem}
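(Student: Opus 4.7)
The plan is to derive this as an immediate consequence of Theorem 1, the termination property for $\mathit{LND_{EQ}-TRS}$. Given an arbitrary derivation $d$, I would describe the reduction procedure explicitly: if $d$ is irreducible under every rule of the system, then $d$ is already a normal form; otherwise, select any applicable rewrite rule and apply it, obtaining a new derivation $d'$. Iterating this step produces a reduction sequence $d \triangleright d' \triangleright d'' \triangleright \cdots$, and by termination no such sequence can be infinite.

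More precisely, the termination proof works by exhibiting that for every rewrite rule $e \to d$ of Definition \ref{LND-TRS} one has $e >^{*} d$ in the recursive path ordering induced by the precedence on operators listed in the previous subsection. Since $>^{*}$ is well-founded on terms built over a well-founded precedence (this is the content of Dershowitz's theorem on recursive path orderings, invoked earlier), every reduction sequence in $\mathit{LND_{EQ}-TRS}$ is finite. Hence the iterative procedure above halts at a term to which no rule applies, which is by definition a normal form of the starting derivation.

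The main obstacle is handling the conditional nature of several rules in $\mathit{LND_{EQ}-TRS}$ (e.g., the $\beta_{rewr}$-reductions, which presuppose specific introduction-elimination patterns in the proof tree). One must verify that the set of rules whose side conditions are satisfied at a given term is well-defined and determined purely by the syntactic shape of that term, so that the iterative procedure genuinely makes progress whenever the term is not normal. Given the explicit enumeration of rules and their side conditions in the preceding subsections, this reduces to a routine syntactic check. Together with the confluence analysis via superposition of critical pairs discussed just above, this argument yields not only existence but also uniqueness of the normal form, completing the normalization theorem.
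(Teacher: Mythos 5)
Your proposal is correct and takes essentially the same route as the paper, which simply derives normalization as a direct consequence of the termination property of $\mathit{LND_{EQ}-TRS}$; your iterative reduction argument merely spells out that consequence in more detail. The additional remarks about conditional rules and uniqueness via confluence go beyond what the paper's normalization theorem requires (uniqueness is handled separately in the strong normalization theorem), but they do not change the underlying argument.
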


\begin{proof}
	Direct consequence of the termination property.
\end{proof}

\begin{theorem}[strong normalization ~\cite{RuyAnjolinaLivro}]
	Every derivation in the $\mathit{LND_{EQ}-TRS}$ converts to a unique normal form.
\end{theorem}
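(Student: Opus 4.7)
The plan is to combine the two structural properties of $\mathit{LND_{EQ}-TRS}$ already established in the paper, namely termination (Theorem 1) and confluence (established in the preceding subsection via the Knuth--Bendix critical-pair analysis and the addition of rules 38 and 39), and apply the classical result that a terminating and confluent abstract reduction system has unique normal forms. Concretely, I would invoke Newman's Lemma in the following form: if a rewrite system is strongly normalising and locally confluent, then it is confluent; and in a confluent, terminating system every term possesses exactly one normal form.

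The key steps, in order, would be: first, record that Theorem 1 supplies strong normalisation, so that from any computational path $s$ every reduction sequence terminates in some normal form $\hat{s}$; second, appeal to the confluence analysis of $\mathit{LND_{EQ}-TRS}$ (the critical-pair inspection discussed for superposed rules, together with the full proof in~\cite{Anjo1,Ruy2,Ruy3,RuyAnjolinaLivro}) to obtain local confluence for every overlapping redex pattern; third, invoke Newman's Lemma to lift local confluence to global confluence given termination; fourth, conclude uniqueness by the standard argument: if $s \rhd^{*} \hat{s}_1$ and $s \rhd^{*} \hat{s}_2$ with both $\hat{s}_i$ in normal form, confluence yields a common reduct $t$ with $\hat{s}_i \rhd^{*} t$, but since the $\hat{s}_i$ are normal there are no further reductions, forcing $\hat{s}_1 = t = \hat{s}_2$.

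The main subtlety, and the step I would flag as the real obstacle, is the status of local confluence in a \emph{conditional} term rewriting system. As noted in the discussion preceding the theorem, several rules carry side-conditions (e.g.\ $\mu_1(\xi_\wedge(r,s)) \triangleright r$ applies only in the presence of an appropriate $\beta$-redex shape such as $\mathrm{FST}\langle x, y\rangle$). The standard Knuth--Bendix completion and Newman's Lemma apply cleanly to unconditional systems; for $\mathit{LND_{EQ}-TRS}$ one must verify that the side-conditions are preserved along reduction, so that every critical pair produced by the superposition algorithm is a \emph{feasible} critical pair whose joinability can actually be witnessed. This is precisely the content of the confluence proofs of~\cite{Anjo1,Ruy2,Ruy3,RuyAnjolinaLivro}, which handle the conditional overlaps case by case and motivated the introduction of rules 38 and 39.

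Given all of this, the proof I would write is essentially a two-line invocation: by Theorem 1, $\mathit{LND_{EQ}-TRS}$ is terminating; by the confluence result cited above, it is confluent; hence by Newman's Lemma every computational path has a normal form which, by confluence, is unique. The previous theorem provides existence, and confluence promotes it to uniqueness, which is the precise content of strong normalisation as stated here.
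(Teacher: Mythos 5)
Your proposal is correct and follows the same route as the paper, whose proof simply states that the result is a direct consequence of the termination and confluence properties; your additional remarks on Newman's Lemma and the conditional-rule subtlety only make explicit what the paper leaves implicit by citing the confluence proofs of the references.
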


\begin{proof}
	Direct consequence of the termination and confluence properties.
\end{proof}

In this sense, every proof can be reduced to a normal one. To do so, one should identify the redundancies and, based on the rewrite rules, a proof can be constructed without any redundancies. We demonstrate this in the following example ~\cite{RuyAnjolinaLivro}:

\bigskip

\begin{center}
\begin{bprooftree}
	\AxiomC{$f(x,z) =_{s} f(w,y) : D$}
	\UnaryInfC{$f(w,y) =_{\sigma(s)} f(x,z) : D$}
	\AxiomC{$x =_{r} c : D$}
	\BinaryInfC{$f(w,y) =_{sub_{L}(\sigma(s), r)} f(c,z) : D$}
	\UnaryInfC{$f(c,z) =_{\sigma(sub_{L}(\sigma(s), r))} f(w,y) : D$}
	\AxiomC{$ y =_{t} b : D$}
	\BinaryInfC{$f(c,z) =_{sub_{L}(\sigma(sub_{L}(\sigma(s), r)))} f(w,b) : D$}
\end{bprooftree}

\end{center}

\bigskip

This deduction generates the following path: $sub_{L}(\sigma(sub_{L}(\sigma(s), r)))$. This path is not in normal form, having two redundancies ~\cite{RuyAnjolinaLivro}:
	
\begin{center}
	$sub_{L}(\sigma(sub_{L}(\sigma(s), r))) \rhd_{ssbl} sub_{L}(sub_{R}(\sigma(r),\sigma(\sigma(s)),t)$
	
	$sub_{L}(sub_{R}(\sigma(r),\sigma(\sigma(s)),t) \rhd_{ss} sub_{L}(sub_{R}(\sigma(r), s),t)$
\end{center}
	
Thus, we can identify those reductions and conceive a deduction without any redundancies ~\cite{RuyAnjolinaLivro}:

\bigskip

\begin{bprooftree}
	\AxiomC{$x =_{r} c : D$}
	\UnaryInfC{$c =_{\sigma(r)} x : D$}
	\AxiomC{$f(x,z) =_{s} f(w,y) : D$}
	\BinaryInfC{$f(c,z) =_{sub_{R}(\sigma(r),s)} f(w,y) : D$}
	\AxiomC{$y =_{t} b : D$}
	\BinaryInfC{$f(c,z) =_{sub_{L}(sub_{R}(\sigma(r), s),t)} f(w,b) : D$}
\end{bprooftree}

\bigskip

\subsection{Rewrite equality}

As we have just seen, the $\mathit{LND_{EQ}-TRS}$ has $39$ rewrite rules.  We call each rule  a {\em rewrite rule\/} (abbreviation: {\em rw-rule\/}). We provide the following definition:

\begin{definition}[Rewrite Rule ~\cite{Art1}]
 An $rw$-rule is any of the rules defined in $\mathit{LND_{EQ}-TRS}$.
\end{definition}

Similarly to the $\beta$-reduction of $\lambda$-calculus, we have a definition for rewrite reduction:

\begin{definition}[Rewrite reduction ~\cite{Art1}]
	Let $s$ and $t$ be computational paths. We say that $s \rhd_{1rw} t$ (read as: $s$ $rw$-contracts to $t$) iff we can obtain $t$ from $s$ by the application of only one $rw$-rule. If $s$ can be reduced to $t$ by finite number of $rw$-contractions, then we say that $s \rhd_{rw} t$ (read as $s$ $rw$-reduces to $t$).
	
\end{definition}

We shall also define rewrite contractions and equality:

\begin{definition}[Rewrite contraction and equality ~\cite{Art1}]
Let $s$ and $t$ be computational paths. We say that $s =_{rw} t$ (read as: $s$ is $rw$-equal to $t$) iff $t$ can be obtained from $s$ by a finite (perhaps empty) series of $rw$-contractions and reversed $rw$-contractions. In other words, $s =_{rw} t$ iff there exists a sequence $R_{0},....,R_{n}$, with $n \geq 0$, such that

\centering $(\forall i \leq n - 1) (R_{i}\rhd_{1rw} R_{i+1}$ or $R_{i+1} \rhd_{1rw} R_{i})$

\centering  $R_{0} \equiv s$, \quad $R_{n} \equiv t$
\end{definition}

A fundamental result is the fact that rewrite equality is an equivalence relation ~\cite{Art1}:
 
\begin{proposition}\label{proposition3.7} Rewrite equality is transitive, symmetric and reflexive.
\end{proposition}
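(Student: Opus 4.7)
The plan is to prove each of the three properties directly from the definition of rewrite equality, using the existence of a finite (possibly empty) zig-zag sequence $R_0, \ldots, R_n$ of $rw$-contractions and reversed $rw$-contractions connecting the two paths. The structure mirrors the standard argument that $\beta$-equality in $\lambda$-calculus is an equivalence relation, since our notion of $=_{rw}$ is defined analogously.

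For reflexivity, given any computational path $s$, I would take the degenerate sequence of length $n = 0$, namely just $R_0 \equiv s$. Then $R_0 \equiv s$ and $R_n \equiv R_0 \equiv s$, and the condition $(\forall i \leq n-1)(R_i \rhd_{1rw} R_{i+1}$ or $R_{i+1} \rhd_{1rw} R_i)$ is vacuously satisfied since there is no $i \leq -1$. Hence $s =_{rw} s$.

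For symmetry, suppose $s =_{rw} t$ witnessed by the sequence $R_0, \ldots, R_n$ with $R_0 \equiv s$ and $R_n \equiv t$. I would define the reversed sequence $R'_j := R_{n-j}$ for $0 \leq j \leq n$, so that $R'_0 \equiv t$ and $R'_n \equiv s$. For each $j \leq n-1$, the original clause $R_{n-j-1} \rhd_{1rw} R_{n-j}$ becomes $R'_{j+1} \rhd_{1rw} R'_j$, while the original clause $R_{n-j} \rhd_{1rw} R_{n-j-1}$ becomes $R'_j \rhd_{1rw} R'_{j+1}$. In either case the required disjunction holds for $R'_j, R'_{j+1}$, so $t =_{rw} s$. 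The key observation here is simply that the definition treats contractions and reversed contractions symmetrically, so reversing the list of witnesses yields a valid witness of equality in the opposite direction.

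For transitivity, suppose $s =_{rw} t$ with sequence $R_0, \ldots, R_n$ (so $R_0 \equiv s$, $R_n \equiv t$) and $t =_{rw} u$ with sequence $S_0, \ldots, S_m$ (so $S_0 \equiv t$, $S_m \equiv u$). I would concatenate, defining
\[
T_k := \begin{cases} R_k & \text{if } 0 \leq k \leq n, \\ S_{k-n} & \text{if } n \leq k \leq n+m. \end{cases}
\]
The overlap at $k = n$ is consistent since $R_n \equiv t \equiv S_0$, and $T_0 \equiv s$, $T_{n+m} \equiv u$. The condition on consecutive pairs is inherited from the two given sequences on each segment, so $s =_{rw} u$. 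No real obstacle is expected; the only point requiring care is the bookkeeping at the splicing index $k = n$, which is handled by the defining equation $R_n \equiv S_0$.
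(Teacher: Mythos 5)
Your proof is correct and follows essentially the same route as the paper: the paper simply notes that $rw$-equality is by definition the reflexive, symmetric and transitive closure of one-step rewriting, and your empty-sequence, sequence-reversal and sequence-concatenation arguments are just the explicit verification of that fact from the zig-zag definition.
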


\begin{proof}
	Stems directly from the fact that $rw$-equality is the transitive, reflexive and symmetric closure of $rw$.
\end{proof}

Rewrite reduction and equality play fundamental roles in the groupoid model of a type based on computational paths, as we shall see hereafter.

\subsection{$\mathit{\textbf{LND}_{\textbf{EQ}}\textbf{-TRS}}$(2)}

Until now, we have concluded in this subsection that there exist redundancies which are resolved by a system called $\mathit{LND_{EQ}-TRS}$. This system establishes rules that reduce these redundancies. Moreover, we have concluded that these redundancies are simply redundant uses of the equality axioms shown in \emph{section 2}. In fact, since these axioms only define an equality theory for type theory, we can be more specific and say that these are redundancies of the equality of type theory. As we have mentioned, the $\mathit{LND_{EQ}-TRS}$ has a total of $39$ rules~\cite{Anjo1, Ruy1}. 

Since the $rw$-equality is based on the rules of  $\mathit{LND_{EQ}-TRS}$, one can imagine the high number of redundancies that $rw$-equality could cause. In fact, a thorough study of all the redundancies caused by these rules led to the work done in ~\cite{Arttese}, which is solely interested in the redundancies caused by the fact that $rw$-equality is transitive, reflexive and symmetric with the addition of only one specific $rw_{2}$-rule. Thus, a system called $LND_{EQ}-TRS_{2}$ was created, which resolves all the redundancies caused by $rw$-equality (in the same way that $\mathit{LND_{EQ}-TRS}$ resolves all the redundancies caused by equality). 

Since we know that $rw$-equality is transitive, symmetric and reflexive, it should have the same redundancies that the equality had involving only those properties. Given that $rw$-equality is merely a sequence of $rw$-rules (which is also similar to equality, since equality is only a computational path, i.e., a sequence of identifiers), we could identify these sequences. Thus, if $s$ and $t$ are $rw$-equal because there exists a sequence $\theta: R_{0},....,R_{n}$ that justifies the $rw$-equality, then we can write that $s =_{rw_{\theta}} t$. Thus, by using $rw$-equality, we are able to rewrite all the rules which originated the ones involving $\tau$, $\sigma$ and $\rho$. For example, we have ~\cite{Art1}:

\bigskip

\begin{prooftree}
	\hskip - 155pt
	\AxiomC{$x =_{rw_{t}} y : A$}
	\AxiomC{$y =_{rw_{r}} w : A$}
	\BinaryInfC{$x =_{rw_{\tau(t,r)}} w : A$}
	\AxiomC{$w =_{rw_{s}} z : A$}
	\BinaryInfC{$x =_{rw_{\tau(\tau(t,r),s)}} z : A$}
\end{prooftree}

\begin{prooftree}
	\hskip 4cm
	\AxiomC{$x =_{rw_{t}} y : A$}
	\AxiomC{$y=_{rw_{r}} w : A$}
	\AxiomC{$w=_{rw_{s}} z : A$}
	\BinaryInfC{$y =_{rw_{\tau(r,s)}} z : A$}
	\LeftLabel{$\rhd_{tt_{2}}$}
	\BinaryInfC{$x =_{rw_{\tau(t,\tau(r,s))}} z : A$}
\end{prooftree}

\bigskip

Therefore, we obtain the rule $tt_{2}$ which resolves one of the redundancies caused by the transitivity of $rw$-equality (the $2$ in $tt_{2}$ indicates that it is a rule that resolves a redundancy of $rw$-equality). In fact, using the same reasoning, we can obtain, for $rw$-equality, all the redundancies that we have shown in Definition \ref{LND-TRS}. In other words, we have $tr_{2}$, $tsr_{2}$, $trr_{2}$, $tlr_{2}$, $sr_{2}$, $ss_{2}$ and $tt_{2}$. Since we now are provided with rules of $LND_{EQ}-TRS_{2}$, we can use all the concepts that we have just defined for $\mathit{LND_{EQ}-TRS}$. The only difference is that instead of having $rw$-rules and $rw$-equality, we have $rw_{2}$-rules and $rw_{2}$-equality.

There is an important rule specific to this system. It stems from the fact that transitivity of reducible paths can be reduced in different ways, but generating the same result. For example, consider the simple case of $\tau(s,t)$ and consider that it is possible to reduce $s$ to $s'$ and $t$ to $t'$. There are two possible $rw$-sequences which reduce this case: The first one is $\theta: \tau(s,t) \rhd_{1rw} \tau(s',t) \rhd_{1rw} \tau(s',t')$ and the second $\theta': \tau(s,t) \rhd_{1rw} \tau(s,t') \rhd_{1rw} \tau(s',t')$. Both $rw$-sequences obtained the same result in similar ways, the only difference being the choices that have been made at each step. Since the variables, when considered individually,  followed the same reductions, these $rw$-sequences should be considered redundant relative to each other and, for that reason, there should be an $rw_{2}$-rule that establishes this reduction. This rule is called \emph{independence of choice} and is denoted by $cd_{2}$. Since we already understand the necessity of such a rule, we can define it formally:

\begin{definition}[Independence of choice ~\cite{Art1}]
	Let $\theta$ and $\phi$ be $rw$-equalities expressed by two $rw$-sequences: $\theta: \theta_{1},...,\theta_{n}$, with $n \geq 1$, and $\phi: \phi_{1},...,\phi_{m}$, with $m \geq 1$. Let $T$ be the set of all possible $rw$-equalities from $\tau(\theta_{1},\phi_{1})$ to $\tau(\theta_{n},\theta_{m})$ described by the following process: $t \in T$ is of the form  $\tau(\theta_{l_{1}},\phi_{r_{1}}) \rhd_{1rw} \tau(\theta_{l_{2}},\phi_{r_{2}}) \rhd_{1rw} ... \rhd_{1rw} \tau(\theta_{l_{x}},\phi_{r_{y}})$, with $l_{1} = 1, r_{1} = 1$,  $l_{x} = n, r_{y} = m$ and $l_{i + 1} = 1 + l_{i}$ and $r_{i + 1} = r_{i}$ or $l_{i + 1} = l_{i}$ and $r_{i + 1} = 1 + r_{i}$. The independence of choice, denoted by $cd_{2}$, is defined as the rule of $LND_{EQ}-TRS_{2}$ that establishes the equality between any two different terms of $T$. In other words, if $x,y \in T$ and $x \neq y$, then $x =_{cd_{2}} y$ and $y =_{cd_{2}} x$.
	
\end{definition}

Analogously to the $rw$-equality, $rw_{2}$-equality is also an equivalence relation ~\cite{Art1}:

\begin{proposition}
	$rw_{2}$-equality is transitive, symmetric and reflexive.
\end{proposition}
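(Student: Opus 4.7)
The plan is to mirror the proof of Proposition \ref{proposition3.7} almost verbatim, since the definitional setup of $rw_2$-equality has been deliberately arranged to parallel that of $rw$-equality. Specifically, $rw_2$-equality is introduced as the closure of the $rw_2$-contraction relation under finite sequences of contractions and reversed contractions, exactly as $rw$-equality was built from $rw$-contraction. Hence the three properties should fall out just from the shape of the definition, not from any particular rule of $LND_{EQ}\textrm{-}TRS_2$.

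First, I would state explicitly the (implicit) definition of $rw_2$-equality: $s =_{rw_2} t$ iff there is a sequence $R_0,\ldots,R_n$ with $R_0\equiv s$, $R_n\equiv t$, and for every $i\leq n-1$ either $R_i \rhd_{1rw_2} R_{i+1}$ or $R_{i+1}\rhd_{1rw_2} R_i$ (the obvious analogue of Definition of rewrite equality for the $rw_2$ setting, including $cd_2$ among the admissible contractions). Reflexivity is then the empty-sequence case $n=0$. Symmetry follows by reversing the sequence: given $R_0,\ldots,R_n$ witnessing $s=_{rw_2}t$, the reversed list $R_n,\ldots,R_0$ witnesses $t=_{rw_2}s$, because each step condition ``$R_i\rhd_{1rw_2} R_{i+1}$ or $R_{i+1}\rhd_{1rw_2} R_i$'' is visibly symmetric in $i$ and $i+1$. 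Transitivity follows by concatenation: if $R_0,\ldots,R_n$ witnesses $s=_{rw_2} t$ and $S_0,\ldots,S_m$ witnesses $t=_{rw_2} u$ (so $R_n\equiv t\equiv S_0$), then the concatenated sequence $R_0,\ldots,R_n,S_1,\ldots,S_m$ witnesses $s=_{rw_2} u$.

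I expect no real obstacle: the proof is purely formal bookkeeping on sequences, and the proposition is essentially a restatement of ``the reflexive, symmetric, transitive closure of any binary relation is an equivalence relation.'' The only thing worth flagging is that the rule $cd_2$ must be counted as one of the admissible $rw_2$-contractions when forming the closure; once that is observed, nothing in the presence of $cd_2$ interferes with reflexivity, symmetry, or transitivity, since these three properties depend only on the closure operation and not on which primitive contractions are being closed up. Thus the proof can be discharged in one line, exactly paralleling Proposition \ref{proposition3.7}: it stems directly from the fact that $rw_2$-equality is the reflexive, symmetric, and transitive closure of the one-step $rw_2$-contraction relation.
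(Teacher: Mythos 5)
Your proof is correct and takes essentially the same route as the paper, which simply observes that the argument is analogous to Proposition \ref{proposition3.7}: $rw_2$-equality is the reflexive, symmetric, transitive closure of one-step $rw_2$-contraction, so the three properties follow from the closure construction (empty sequence, sequence reversal, concatenation). Your explicit spelling-out of the sequences and the remark about $cd_2$ being among the admissible contractions only adds detail to what the paper leaves implicit.
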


\begin{proof}
	Analogous to Proposition \ref{proposition3.7}.
\end{proof}

\section{A topological application  of labelled natural deduction}	
Once we have built up all the necessary bases of computational paths to develop our work, it would be interesting to consult two proofs of the calculation of the fundamental group of the circle: The first is the mathematical proof that appears in the book of algebraic topology ~\cite {Munkres} in chapter $9$, section $54$. The second is a proof using homotopic type theory, which is in the book in ~\cite{hott} in chapter 8. Both cases provide the proofs of the fundamental group of the circle, but in order to obtain such success the amount of information needed is much higher and much more complex than we will propose in the next chapters. 

In homotopy theory, the fundamental group is the one formed by all equivalence classes up to homotopy of paths (loops) starting from a point $x_0$ and also ending at $x_0$. Since we use computational paths as the syntactic counterpart of homotopic paths in type theory, we will use computational paths to propose some definitions that will be seen below.

Before we begin these definitions, we are going to formalize some notations that will be recurrent in the text. Consider the type $S^{1}$ (Circle), and let $x_0:S^{1} $ be a base term of the type, and $x_{0}\underset{\alpha}{=}x_{0}$ be a computational path that starts and ends at the term $x_0 $, going around the circle clockwise. We can then define this path as a base path, capable of generating any path in the circle, and denote it by $loop_{x_0}$. However, for simplicity, we will omit the $x_0$, but it is implied that our loops will be made at the base point and we will denote these loops by $loop^1$. 

Thus, the path (loop) formed by two turns based on $x_0$, around the circle in a clockwise direction, can be denoted by $loop^2$; a counterclockwise loop for $loop^{- 1} $, in general, $loop^{n}$ denotes the path formed by $n$ clockwise turns in the circle, based on $ x_0 $, with $n\in\mathbb{Z}$ . Particularly, if $n =0$, we can say that this is the homotopic path to the point and denotes it by $loop^{0}$. 

Now, imagine the path formed by three clockwise turns and two counterclockwise turns. This path is different from $loop^1$, but it is equivalent to it, that is, we can say that it is a rewrite of the  computational path $loop$ or $loop^{1}$, so it is relevant here to define a rewrite equivalence, and we can simply denote for $[loop^n]_{rw}$ every computational path that is equivalent, or a rewrite, of the $loop^{n}$ path. Now, we can proceed with the following definitions: 

\begin{definition} Let
\begin{itemize}
    \item [(i)] $A$ be a type.
    \item [(ii)] $x_0:A$ a base point.
    \item [(iii)]  $x_{0}\underset{\alpha_i}{=}x_{0}$, be a family of generator paths with $i\in I$.
    \item [(iv)] A family of relationships between the terms paths $\tau_{j}(x_0\underset{\alpha_r}{=}x_0,x_0\underset{\alpha_s}{=}x_0)$.
    \end{itemize}
    We can define the structure $\Pi_{1}(A, x_0)$ as the set of terms $\alpha_{x_0}$, given by finite applications of $\tau$, $\sigma$, and $\rho$ in $\alpha_i$, modulo $rw$ equality and modulo family of identity type terms $Id_{\tau_j}$.

 \label{defestruturageral}
\end{definition}

Since each element in $\Pi_{1}(A, x_0)$ is a loop in $x_0$, we shall give an important definition indispensable to our work:

\begin{definition}
	 We can define and denote by 
	 $$[loop^n]_{rw}$$
the path naturally obtained by the application of the path-axioms $\rho$, $\tau$ and $\sigma$ to the base path  $x_0 \underset{loop}{=} x_0$, where $n\in \mathbb{N}.$ Particularly we can say:
	 
	 \begin{itemize}
	     \item [(i)] $[loop^0]_{rw}=[\rho_{x_0}]_{rw}$,  $n=0$.
	     \item [(ii)]
	     $[loop^1]_{rw}=[loop]_{rw}$
	     \item [(iii)] $[loop^{n}]_{rw}=\tau\big([loop^{n-1}]_{rw},[loop^1]_{rw}\big)$, $n>0$.
	     \item [(iv)] $[loop^n]_{rw} = \sigma([loop^{-n}]_{rw})$, $-n>0$.
	 \end{itemize}
	\end{definition}

	For example, we have: 
	\begin{itemize}
	    \item  [a)]$\tau([loop^1]_{rw}, [loop^1]_{rw})= [loop^{2}]_{rw}$
	    \item  [a)]$\tau \Big(\sigma([loop^1]_{rw}), \sigma([loop^1]_{rw})\Big)=\sigma([loop^2]_{rw})=  [loop^{-2}]_{rw}$
	    \item [c)] $\tau \Big(\sigma([loop^1]_{rw}), [loop^1]_{rw}\Big)\underset{tsr}{=} [\rho]_{rw}$.
	 
	\end{itemize}

Here we need to provide relevant information regarding the equalities we can obtain using these paths. Consider the following examples:

\small
	\begin{itemize}
	    \item [($p_1$)]	\begin{eqnarray*}
            \tau \bigg( \tau\big( [loop^1]_{rw},[loop^1]_{rw}\big),\sigma \big([loop^1]_{rw}\big)\bigg) &\underset{tt}{=}&  \tau \bigg( [loop^1]_{rw}, \tau\Big([loop^1]_{rw},\sigma \big([loop^1]_{rw}\big)\Big)\bigg) \\
             &\underset{tr}{=}&  \tau\big( [loop^1]_{rw},[\rho]_{rw}\big)  \\
             &\underset{trr}{=}& [loop^1]_{rw} \\
             \end{eqnarray*}

	    \item  [($p_2$)]\begin{eqnarray*}
            \tau \bigg( \tau\Big( [loop^1]_{rw},\sigma \big( [loop^1]_{rw}\big)\Big),[loop^1]_{rw}\bigg) &\underset{tr}{=}&  \tau\big( [\rho]_{rw},[loop^1]_{rw}\big)  \\
             &\underset{tlr}{=}& [loop^1]_{rw} \\
             \end{eqnarray*} 
	    
	\end{itemize}

Notice that the paths $(p_1)$ and $(p_2)$ initially appear to be distinct paths. However, by only applying the properties of computational paths, together with the rewrite rules (\textit{rw-rules}), we end up with the path $[loop^1]_{rw}$ in both derivations. So we can affirm that:

By $(p_1)$,  $$\tau \bigg( \tau\big( [loop^1]_{rw},[loop^1]_{rw}\big),\sigma \big([loop^1]_{rw}\big)\bigg)\underset{trr}{=} [loop^1]_{rw}$$ and by $(p_2)$,  $$\tau \bigg( \tau\Big( [loop^1]_{rw},\sigma \big( [loop^1]_{rw}\big)\Big),[loop^1]_{rw}\bigg)\underset{tlr}{=} [loop^1]_{rw}.$$

They are said to be \textit{rw-equal} to the base path  $[loop^1]_{rw}$ because they can be rewritten to $[loop^1]_{rw}$ after the \textit{rw-rules} are applied. Therefore, it can be said that these paths are in the same equivalence class as $[loop^1]_{rw}$ and thus, they are equal up to \textit{rw}-equality.

\subsection {Fundamental group of the circle}

	\begin{definition}[The circle $S^1$]
		The circle is the type generated by:
		
		\begin{itemize}
			\item [(i)] A base point - $x_0 : S^1$	
			\item [(ii)] A base computational path - $x_0 \underset{loop}{=} x_0 : S^1$.
		\end{itemize}
	\end{definition}
	
	The first thing one should notice is that this definition does not use only the points of the type $S^1$, but also a base computational path  called $loop$ between those points. That is why it is called a higher inductive type ~\cite{hott}. Our approach differs from the one developed in the HTT book ~\cite{hott} in the fact that we do not need to simulate the path-space between those points, since we add computational paths to the syntax of the theory. 
	
	In Martin-L\"of's type theory, the existence of those additional paths emerges from establishing that the paths should be freely generated by the constructors ~\cite{hott}. In our approach, we do not have to appeal to this kind of argument, since all paths naturally emerge from direct applications of the axioms and the inference rules which define the theory of equality. We proceed with the following definition:

\begin{definition} In $S^1$,  we define the following canonical loops (canonical paths):

 \begin{itemize}
	     \item [(i)] $[loop^0]_{rw}=[\rho_{x_0}]_{rw}$, $n=0$
	     \item [(ii)]
	     $[loop^1]_{rw}=[loop]_{rw}$, $n=1$.
	     \item [(iii)] $[loop^{n}]_{rw}=\sigma([loop^{-n}]_{rw})$, $n<0$. 
	     
	     \item [(iv)] $[loop^{n}]_{rw}=\tau\big([loop^{n-1}]_{rw},[loop^1]_{rw}\big)$, $n>0$. 
	 \end{itemize}
	 
\end{definition}

	\begin{lemma}\label{S1=loopn}
		All paths in $S^1$  are $rw$-equal to a path  $[loop^n]_{rw}$, for some $n \in \mathbb N$.
	\end{lemma}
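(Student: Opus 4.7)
The plan is to proceed by structural induction on the computational path $p : x_0 = x_0$ in $S^1$. Since $S^1$ is generated by the single point $x_0$ and the single base path $loop$, every such path is built by finitely many applications of $\rho$, $\sigma$, and $\tau$ to $loop$ itself; the goal is to exhibit, for each $p$, an $n$ with $p =_{rw} [loop^n]_{rw}$ (I read the statement as ranging over $n\in\mathbb{Z}$, in agreement with the canonical-loop definition immediately preceding the lemma, which uses negative indices via clauses~(iii)--(iv)).

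The atomic cases are immediate from the definition: the path $\rho$ is $[loop^0]_{rw}$ by clause~(i), and $loop$ itself is $[loop^1]_{rw}$ by clause~(ii). For the symmetry step, if $p = \sigma(q)$ and by induction hypothesis $q =_{rw} [loop^n]_{rw}$, then $p =_{rw} \sigma([loop^n]_{rw})$, which clauses~(iii)--(iv) identify with $[loop^{-n}]_{rw}$, appealing to rule $ss$ ($\sigma(\sigma(r)) \triangleright r$) to collapse a double symmetry when $n < 0$.

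The main obstacle is the transitivity step: if $p = \tau(q_1, q_2)$ with $q_i =_{rw} [loop^{n_i}]_{rw}$ for $i=1,2$, one needs the auxiliary claim
\[
\tau\bigl([loop^{m}]_{rw},\,[loop^{n}]_{rw}\bigr) \;=_{rw}\; [loop^{m+n}]_{rw}
\]
for all $m,n\in\mathbb{Z}$. I would prove this by a secondary induction on $|m|+|n|$, splitting on the signs of $m$ and $n$. When the signs agree, the associativity rule $tt$ ($\tau(\tau(t,r),s) \triangleright \tau(t,\tau(r,s))$) together with the recursive clauses~(iii)--(iv) of the canonical-loop definition collapse the expression to $[loop^{m+n}]_{rw}$ in a single step. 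When the signs differ, the cancellation rules $tr$, $tsr$, $trr$, $tlr$, and especially the ``long-range'' rules $tts$ ($\tau({\cal C}[u], \tau({\cal C}[\sigma(u)], v)) \triangleright v$) and $tst$ of $\mathit{LND_{EQ}-TRS}$, absorb a successive $loop$--$\sigma(loop)$ pair sitting at the interface, decreasing $|m|+|n|$ by two and allowing the inner induction to finish. Once this auxiliary claim is established, the outer structural induction closes immediately, since any occurrence of $\mathtt{sub}_{\mathtt{L}}$ or $\mathtt{sub}_{\mathtt{R}}$ on the one-point type $S^1$ presents trivial contexts and is reducible via rules~7--12 before entering the main argument. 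The bookkeeping on signs in the $\tau$-step is where the real work lies; everything else is routine manipulation inside $\mathit{LND_{EQ}-TRS}$.
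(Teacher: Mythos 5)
Your proposal is correct in outline, but it takes a genuinely more general route than the paper. The paper's proof is a case analysis that only considers paths of three shapes: $\varphi=\rho$, $\varphi=\sigma([loop^{n}]_{rw})$, and $\varphi=\tau([loop^{n-1}]_{rw},[loop^{1}]_{rw})$, with an inner induction on $n$ in the last two cases (using $ss$, $stss$, $tsr$, $tr$, etc.); in effect it verifies that the canonical forms are closed under the clauses of the canonical-loop definition, and the fully general composition of two arbitrary loops is never treated directly in this lemma (the following lemma in the paper only handles concatenation with $[loop^{\pm 1}]_{rw}$). You instead run a genuine structural induction over the path constructors $\rho$, $\sigma$, $\tau$ applied to $loop$, and isolate the general addition law $\tau([loop^{m}]_{rw},[loop^{n}]_{rw})=_{rw}[loop^{m+n}]_{rw}$, proved by a secondary induction on $|m|+|n|$ with a sign split using $tt$, the cancellation rules $tr$, $tsr$, $trr$, $tlr$, and $tts$/$tst$. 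This buys completeness: an arbitrary path in $S^{1}$ is a $\tau$ of two arbitrary subpaths, not necessarily of the paper's restricted shapes, so your decomposition is the one that actually covers all cases, at the cost of the bookkeeping in the addition lemma and of making explicit the congruence of $rw$-equality under the $\sigma(\cdot)$ and $\tau(\cdot,\cdot)$ contexts (which both arguments tacitly use, and which is justified by the contextual ${\cal C}[\ ]$ forms of the rules). Two small repairs to your sketch: in the equal-signs case with $m,n<0$ the rule $tt$ and the recursive clauses alone do not suffice -- you need $stss$ (i.e.\ $\sigma(\tau(r,s))\triangleright\tau(\sigma(s),\sigma(r))$), exactly the rule the paper uses in its $\sigma$-case, to push $\sigma$ inside before invoking the positive case; and in the symmetry step the subcase $n=0$ needs $sr$ ($\sigma(\rho)\triangleright\rho$) rather than $ss$. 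Your reading of $n\in\mathbb{Z}$ rather than $\mathbb{N}$ agrees with the paper's intended use of the lemma.
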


	\begin{proof}
	Let $\varphi$ be a computational path in $S^1$.
	
	\begin{itemize}
	    \item [\textbf{I.}] If $\varphi=\rho$:
	        \begin{itemize}
	            \item [(i)] $\varphi=[loop^0]_{rw}$, $n=0$.
	            
	            \item [(ii)] $\varphi=\sigma([loop^n]_{rw})=\sigma(\sigma([loop^{-n}]_{rw}))\underset{ss}{=}[loop^{-n}]_{rw}=\rho$, $n=0.$
	      	    \item [(iii)] $\varphi=\tau\big([loop^m]_{rw}, [loop^n]_{rw}\big)=\rho$, if $m+n=0$. Therefore,
	            
	            \begin{eqnarray*}
	            \varphi=\tau\big([loop^m]_{rw}, [loop^n]_{rw}\big)&&=\tau\big([loop^{-n}]_{rw}, [loop^n]_{rw}\big)\\
	            &&=\tau\big([loop^{-n}]_{rw}, \sigma([loop^{-n}]_{rw})\big)\\
	            &&\underset{tr}{=}\rho.
	            \end{eqnarray*}
	      \end{itemize}
	    
	    \item [\textbf{II.}.] If $\varphi=\sigma([loop^{n}]):$
	    
	    \begin{itemize}
	        \item [(i)] For $n=0$ we have $\varphi=\sigma([loop^0]_{rw})=\rho$.
	        
	        \item [(ii)] Suppose true for $n=k$ that every path in $S^1$  is $rw$-equal to a path  $[loop^n]_{rw}$ . For $n=k+1$ we have:
	        
	        \begin{eqnarray*}
	            \varphi&&=\sigma([loop^{k+1}]_{rw})\\
	            &&=\sigma\Big(\tau\big([loop^k]_{rw}, [loop^1]_{rw}\big)\Big)\\
	            &&\underset{stss}{=}\tau\big(\sigma([loop^{k}]_{rw}),\sigma([loop^1]_{rw})\big)\\
	            &&=\tau\big([loop^{-k}]_{rw},[loop^{-1}]_{rw}\big)\\
	            &&\underset{}{=}[loop^{-k-1}]_{rw}\\
	            &&\underset{}{=}[loop^{-(k+1)}]_{rw}.
	            \end{eqnarray*}
	        
	    \end{itemize}
	    
	     \item [\textbf{III.}] If $\varphi=\tau\big([loop^{n-1}]_{rw},[loop^{1}]_{rw}\big):$
	     
	     \begin{itemize}
	         \item [(i)] For $n=0$, we have:
	         
	         $$\varphi=\tau([loop^{-1}]_{rw},[loop^{1}]_{rw})=\tau(\sigma([loop^{1}]_{rw}),[loop^{1}]_{rw})\underset{tsr}{=}\rho=[loop^{0}]_{rw}.$$
	         
	          \item [(ii)] Suppose true for $n=k$, to $n=k+1$ we have:

	     \end{itemize}
	      \begin{eqnarray*}
	            \varphi&&=\tau\big([loop^{k+1-1}]_{rw},[loop^{1}]_{rw}\big)\\
	            &&=\tau\Big([loop^{k}]_{rw},\tau([loop^{1}]_{rw})\Big)\\
	            &&\overset{hip}{=}\tau\Big(\tau([loop^{k-1}]_{rw},[loop^{1}]),[loop^{1}]_{rw}\Big)\\
	            &&\underset{rw}{=}[loop^{1}]_{rw}\circ[loop^{k}]_{rw}\\
	            &&=[loop^{k+1}]_{rw}.
	            \end{eqnarray*}
	\end{itemize}
	
	\end{proof}
	
		All paths in $S^1$  are $rw$-equal to a path  $[loop^n]_{rw}$, for some $n \in \mathbb N$.

	\begin{lemma}
	All paths $[loop^n]_{rw}$ in $S^1$ can be expressed in terms of $\rho$,$\tau$, $\sigma$ and their applications, starting from the base path  $[loop^1]_{rw}$.
	\end{lemma}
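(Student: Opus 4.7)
The plan is to prove the statement by strong induction on $|n|$, using the four-clause definition of $[loop^n]_{rw}$ directly. The point of the lemma is essentially a bookkeeping observation: the recursive definition already writes each canonical loop in terms of $\rho$, $\tau$, $\sigma$ and $[loop^1]_{rw}$, so one only has to unroll the recursion and check the base cases.

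First I would handle the base cases. For $n=0$ the definition gives $[loop^0]_{rw}=[\rho_{x_0}]_{rw}$, which is already an expression involving only $\rho$. For $n=1$ we have $[loop^1]_{rw}=[loop]_{rw}$ by definition, which is the starting path itself and hence trivially expressible.

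Next I would treat the case $n>1$ by induction. Assuming as inductive hypothesis that $[loop^{n-1}]_{rw}$ has already been written as a finite expression built from $\rho$, $\tau$, $\sigma$ applied to $[loop^1]_{rw}$, clause (iv) of the definition gives
\[
[loop^n]_{rw}=\tau\bigl([loop^{n-1}]_{rw},[loop^1]_{rw}\bigr),
\]
and substituting the inductive expression for $[loop^{n-1}]_{rw}$ produces the desired form. Unrolling this recursion explicitly shows that $[loop^n]_{rw}$ is $\tau\bigl(\tau(\cdots\tau([loop^1]_{rw},[loop^1]_{rw})\cdots,[loop^1]_{rw}),[loop^1]_{rw}\bigr)$, an $n{-}1$–fold nested application of $\tau$ to $[loop^1]_{rw}$.

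Finally, for $n<0$ I would apply clause (iii): $[loop^n]_{rw}=\sigma([loop^{-n}]_{rw})$. Since $-n>0$, the previous case already supplies an expression of $[loop^{-n}]_{rw}$ using only $\rho$, $\tau$ and $[loop^1]_{rw}$, and wrapping this in $\sigma$ gives the required presentation for $[loop^n]_{rw}$. No real obstacle arises here; the only thing one has to be slightly careful about is that the inductive hypothesis is phrased on $|n|$ (so that the negative case can invoke the positive case), and that when $n=0$ one does not need to produce $[loop^1]_{rw}$ at all, since $\rho$ alone suffices.
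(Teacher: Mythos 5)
Your proof is correct for the lemma as literally stated, but it takes a different route from the paper's. You argue by strong induction on $|n|$ and simply unroll the recursive definition of $[loop^n]_{rw}$: clause (iv) for positive $n$, clause (iii) to reduce negative $n$ to the positive case, and the trivial base cases $n=0,1$. Since the canonical loops are \emph{defined} as expressions in $\rho$, $\tau$, $\sigma$ and $[loop^1]_{rw}$, this makes the lemma essentially a bookkeeping observation, and your induction is sound. The paper instead proceeds by analysing concatenations: it starts from $[\rho]_{rw}$ and shows, case by case, that composing $[\rho]_{rw}$, $[loop^n]_{rw}$ or $[loop^{-n}]_{rw}$ with either $[loop^1]_{rw}$ or $\sigma([loop^1]_{rw})$ always yields, up to $rw$-equality, the canonical loop whose index is shifted by $\pm 1$; this requires genuine use of the rewrite rules ($trr$, $tlr$, $tr$, $tsr$, $tt$) rather than mere definition-unfolding. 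The trade-off: your argument is shorter and shows the stated claim is nearly immediate, but it does not produce the composition identities such as $[loop^n]_{rw}\circ\sigma([loop^1]_{rw}) =_{rw} [loop^{n-1}]_{rw}$, which are the substantive content of the paper's proof and are what is actually exploited later (the generator-and-inverse step behaviour underlying the group structure and the isomorphism with $\mathbb{Z}$). If you want your proof to serve the same role in the overall development, you would need to supplement it with those $rw$-reductions, or note explicitly where they are established.
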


	\begin{proof}
		For the base case $[\rho]_{rw}$, it is trivially true, since we define it as being equal to $[loop^0]_{rw}$. From $[\rho]_{rw}$, one can construct more complex paths by composing with $[loop^1]_{rw}$ or $\sigma([loop^1]_{rw})$ at each step.
		Concatenating the paths we obtain:
		\begin{itemize}
			\item [(i)] A path of the form   $[\rho]_{rw}$ concatenated with $[loop^1]_{rw}$:   $$[\rho]_{rw} \circ[loop^1]_{rw} = \tau([loop^1]_{rw},[\rho]_{rw}) \underset{trr}{=} [loop^1]_{rw}.$$ 
			\item [(ii)] A path of the form $[\rho]_{rw}$ concatenated with $\sigma([loop^1]_{rw})$:  $$[\rho]_{rw} \circ \sigma([loop^1]_{rw}) = \tau(\sigma([loop^1]_{rw}),[\rho]_{rw}) \underset{trr}{=}  \sigma([loop^1]_{rw}) = [loop^{-1}]_{rw}.$$
		
			\item [(iii)] A path of the form $[loop^n]_{rw}$ concatenated with $[loop^1]_{rw}$:  $$[loop^n]_{rw} \circ [loop^1]_{rw} = \tau([loop^1]_{rw}, [loop^n]_{rw}) = [loop^{n+1}]_{rw}.$$
		
			\item [(iv)] A path of the form  $[loop^n]_{rw}$ concatenated with $\sigma([loop^1]_{rw})$:
			\begin{eqnarray*}
			[loop^n]_{rw} \circ \sigma([loop^1]_{rw})&=&  \tau\big(\sigma([loop^1]_{rw}),[loop^n]_{rw}\big)\\
			&=& \tau\bigg(\sigma([loop^1]_{rw}),\tau\Big([loop^{1}]_{rw},[loop^{n-1}]_{rw}\Big)\bigg)\\
			&\underset{\sigma(tt)}{=}&\tau\bigg(\tau\Big(\sigma([loop^1]_{rw}),[loop^{1}]_{rw}\Big),[loop^{n-1}]_{rw}\bigg)\\
			&\underset{tsr}{=}&\tau\Big([\rho]_{rw},[loop^{n-1}]_{rw}\Big)\\
			&\underset{tlr}{=}& [loop^{n-1}]_{rw}.\\
			\end{eqnarray*}
		
		\item [(v)] A path of the form  $[loop^{-n}]_{rw}$ concatenated with $[loop^1]_{rw}$:
			\begin{eqnarray*}
			[loop^{-n}]_{rw} \circ [loop^1]_{rw}&=&  \tau\big([loop^1]_{rw},[loop^{-n}]_{rw}\big)\\
			&=& \tau\bigg([loop^1]_{rw},\tau\Big(\sigma([loop^{1}]_{rw}),[loop^{-(n-1)}]_{rw}\Big)\bigg)\\
			&\underset{\sigma(tt)}{=}&\tau\bigg(\tau\Big([loop^{1}]_{rw},\sigma([loop^1]_{rw})\Big),[loop^{-(n-1)}]_{rw}\bigg)\\
			&\underset{tr}{=}&\tau\Big([\rho]_{rw},[loop^{-(n-1)}]_{rw}\Big)\\
			&\underset{tlr}{=}& [loop^{-(n-1)}]_{rw}.\\
			\end{eqnarray*}
			
			\item [(vi)] a path of the form   $[loop^{-n}]_{rw}$ concatenated with $\sigma([loop^{1}]_{rw})$:
			
			\begin{eqnarray*}
			[loop^{-n}]_{rw} \circ \sigma([loop^{1}]_{rw})&=&  \tau\big(\sigma([loop^{1}]_{rw}),[loop^{-n}]_{rw}\big)\\
			&=& \tau\bigg(\sigma([loop^{1}]_{rw}),\tau\Big([loop^{1}]_{rw},[loop^{-(n+1)}]_{rw}\Big)\bigg)\\
			&\underset{\sigma(tt)}{=}&\tau\bigg(\tau\Big(\sigma([loop^1]_{rw}),[loop^{1}]_{rw}\Big),[loop^{-(n+1)}]_{rw}\bigg)\\
			&\underset{tsr}{=}&\tau\Big([\rho]_{rw},[loop^{-(n+1)}]_{rw}\Big)\\
			&\underset{tlr}{=}& [loop^{-(n+1)}]_{rw}.\\
			\end{eqnarray*}

		\end{itemize}

	\end{proof}

For simplicity, we will denote by  $x_0 \underset{r}{=} x_0$  whenever we refer to a computational path $r$ generated by $\rho,\sigma$ and $\tau$.
	\begin{proposition}
	 $\Pi_{1}(S^1,x_0)$ provided with operations $\rho, \sigma, \tau$ is a group.
	\end{proposition}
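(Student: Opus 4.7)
My plan is to identify the group operation as the transitivity constructor $\tau$, with $\rho$ acting as the identity and $\sigma$ providing inverses, and then verify the four group axioms — well-definedness/closure, associativity, identity, and inverses — each directly from the \textit{rw}-rules of $\mathit{LND_{EQ}\text{-}TRS}$ combined with Lemma~\ref{S1=loopn}.

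First, I would establish well-definedness and closure. An element of $\Pi_1(S^1, x_0)$ is an $rw$-equivalence class $[r]_{rw}$ of a computational path $x_0 =_r x_0$, and by Lemma~\ref{S1=loopn} every such path is $rw$-equal to some $[loop^n]_{rw}$ with $n \in \mathbb{Z}$. Define $[r]_{rw} \cdot [s]_{rw} := [\tau(r,s)]_{rw}$. I would show this is independent of representatives: if $r =_{rw} r'$ and $s =_{rw} s'$, then $\tau(r,s) =_{rw} \tau(r',s')$, which follows because an \textit{rw}-contraction applied inside $r$ or $s$ lifts to a contraction inside $\tau(r,s)$ (the rewrite system operates modulo subterm context). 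Closure is then immediate: $\tau([loop^m]_{rw},[loop^n]_{rw}) =_{rw} [loop^{m+n}]_{rw}$, which is again a loop at $x_0$.

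Next, I would verify the three algebraic axioms, each a one-line appeal to a specific rule of $\mathit{LND_{EQ}\text{-}TRS}$:
\begin{itemize}
    \item[(i)] \emph{Associativity}: by rule $37$, $\tau(\tau(t,r),s) \triangleright_{tt} \tau(t,\tau(r,s))$, so $([t]_{rw}\cdot[r]_{rw})\cdot[s]_{rw} = [t]_{rw}\cdot([r]_{rw}\cdot[s]_{rw})$.
    \item[(ii)] \emph{Identity}: taking $e := [\rho]_{rw} = [loop^0]_{rw}$, rules $5$ and $6$ give $\tau(r,\rho) \triangleright_{trr} r$ and $\tau(\rho,r) \triangleright_{tlr} r$, so $[r]_{rw}\cdot e = e\cdot[r]_{rw} = [r]_{rw}$.
    \item[(iii)] \emph{Inverses}: for $[r]_{rw}$, put $[r]_{rw}^{-1} := [\sigma(r)]_{rw}$; rules $3$ and $4$ yield $\tau(r,\sigma(r)) \triangleright_{tr} \rho$ and $\tau(\sigma(r),r) \triangleright_{tsr} \rho$, hence $[r]_{rw}\cdot[\sigma(r)]_{rw} = [\sigma(r)]_{rw}\cdot[r]_{rw} = e$.
\end{itemize}

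The most delicate step I anticipate is the well-definedness argument in the first paragraph: one must be careful that the $rw$-rules are indeed compatible with the context $\tau(\cdot,\cdot)$, i.e.\ that replacing a subterm by an $rw$-equal one inside $\tau$ yields an $rw$-equal path. This is true because each rule in Definition~\ref{LND-TRS} is given schematically with a context $\mathcal{C}[\cdot]$, and the subterm substitution rules ${\tt sub_L}, {\tt sub_R}$ ensure that rewrites propagate through $\tau$. Once that is granted, the remaining three axioms are immediate rewrites, so the proof reduces to organizing these observations and invoking Proposition~\ref{proposition3.7} to guarantee that $=_{rw}$ is the equivalence relation on which the quotient is formed.
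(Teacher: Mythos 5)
Your proposal is correct and takes essentially the same route as the paper's proof: a direct verification of closure, inverses (via $\triangleright_{tr}$ and $\triangleright_{tsr}$), identity (via $\triangleright_{trr}$ and $\triangleright_{tlr}$), and associativity (via $\triangleright_{tt}$). The only difference is that you make explicit the well-definedness of $\tau$ on $rw$-equivalence classes, which the paper handles implicitly with the remark that all equalities are taken up to $rw$-equality.
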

	
	\begin{proof}
		 Given any $x_0 \underset{r}{=} x_0 : S^1$ and $x_0 \underset{t}{=} x_0 : S^1$, we need to check the group conditions:
		
		\begin{itemize}
			
			\item [(i)] \textbf{Closure:} Given $x_0 \underset{r}{=} x_0 : S^1$ and $x_0 \underset{s}{=} x_0 : S^1$, $r \circ s$ must be a member of the group. Indeed, $r \circ s = \tau(s,r)$ is a computational path $x_0 \underset{\tau(s,r)}{=} x_0 : S^1$.
		
			\item [(ii)]\textbf{Inverse:} Every member of the group must have an inverse. Indeed, if we have a path $r$, we can apply $\sigma(r)$. We claim that $\sigma(r)$ is the inverse of $r$, since:
			
			\begin{center}
				$\sigma(r) \circ r = \tau(r, \sigma(r)) \underset{tr}{=} \rho$
				
				$r \circ \sigma(r) = \tau(\sigma(r), r) \underset{tsr}{=} \rho$
			\end{center}

			Since we are working up to $rw$-equality, the equalities hold strictly.
			
			\item [(iii)] \textbf{Identity:} We use the path $x_0 \underset{\rho}{=} x_0 : S^1$ as the identity. Indeed, we have:
			
			\begin{center}
				$r \circ \rho = \tau(\rho,r) \underset{tlr}{=} r$
				
				$\rho \circ r = \tau(r,\rho) \underset{trr}{=} r$.
			\end{center}
			
			\item [(iv)]\textbf{Associativity:} Given any member of the group $x_0 \underset{r}{=}x_0 : S^1$, $x_0 \underset{t}{=} x_0$ and $x_0 \underset{s}{=} x_0$, we want that $r \circ (s \circ t) = (r \circ s) \circ t$:
			
			\begin{center}
				$r \circ (s \circ t) = \tau(\tau(t,s), r) \underset{tt}{=} \tau(t,\tau(s,r)) = (r \circ s) \circ t$
			\end{center}
		\end{itemize}
		
		Thus, all conditions have been satisfied. 
	\end{proof}
	
	Therefore, the structure $\Big(\Pi_{1}(S^1,x_0), \rho, \sigma, \tau \Big)$ is indeed a group. We will, for simplicity, denote it in what follows for $\Pi_{1}(S^1,x_0)$, and we will call it The Fundamental Group of $S^1$.
	
	In ~\cite{hott}, the next theorem was proved by defining a pair of encode and decode functions. There it was necessary to simulate a path-space, and in the end the work was very laborious. Nevertheless, since our computational paths are already part of the syntax, there is no need to rely on this kind of approach to simulate a path-space. In ~\cite{Munkres} the proof of this theorem is quite laborious. By working directly with the concept of computational paths, we hope that these same calculations can be performed more simply and in such a way that is accessible to more readers. 

	\begin{theorem}
		$\Pi_{1}(S,x_0) \simeq \mathbb{Z}$
	\end{theorem}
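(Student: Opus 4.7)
The plan is to exhibit an explicit group isomorphism $\psi:\mathbb{Z}\to\Pi_{1}(S^{1},x_{0})$ by sending the integer $n$ to the equivalence class $[loop^{n}]_{rw}$, and then show it is a well-defined bijective homomorphism. The definition of $[loop^{n}]_{rw}$ is already available from the clauses $[loop^{0}]_{rw}=[\rho]_{rw}$, $[loop^{n}]_{rw}=\tau([loop^{n-1}]_{rw},[loop^{1}]_{rw})$ for $n>0$, and $[loop^{n}]_{rw}=\sigma([loop^{-n}]_{rw})$ for $n<0$, so the map $\psi$ is set-theoretically immediate; the work lies in showing it respects the group operations and is a bijection.

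First, I would verify the homomorphism property $\psi(m+n)=[loop^{m}]_{rw}\circ[loop^{n}]_{rw}$, i.e.\ $[loop^{m+n}]_{rw}=_{rw}\tau([loop^{n}]_{rw},[loop^{m}]_{rw})$, by induction on $|m|+|n|$, splitting into the cases where $m,n$ have the same or opposite signs. The positive/positive case is essentially the defining clause (iv); the mixed-sign case reduces by repeated use of the rules $tr$, $tsr$, $trr$, $tlr$ and associativity $tt$ — exactly the calculations already performed in cases (iv), (v), (vi) of the second lemma about concatenating $[loop^{n}]_{rw}$ with $[loop^{\pm 1}]_{rw}$. In particular, $\psi(0)=[\rho]_{rw}$ and $\psi(-n)=\sigma(\psi(n))$ hold directly by the definition of $[loop^{n}]_{rw}$ together with the involution rule $\sigma\sigma\triangleright_{ss}r$, which also confirms that $\psi$ sends inverses to inverses.

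Next, surjectivity of $\psi$ is precisely the content of Lemma \ref{S1=loopn}: every computational path in $S^{1}$ is $rw$-equal to some $[loop^{n}]_{rw}$, so every element of $\Pi_{1}(S^{1},x_{0})$ lies in the image. For injectivity, I would proceed by contrapositive: suppose $\psi(m)=\psi(n)$ in $\Pi_{1}(S^{1},x_{0})$, i.e.\ $[loop^{m}]_{rw}$ and $[loop^{n}]_{rw}$ are identified modulo $rw$-equality and the family $Id_{\tau_{j}}$. Since $\Pi_{1}(S^{1},x_{0})$ is a group under $(\rho,\sigma,\tau)$ (already established), we may multiply both sides by $\sigma(\psi(n))=\psi(-n)$ and use the homomorphism property just proved to reduce to the case $\psi(m-n)=[\rho]_{rw}$; it therefore suffices to show that $[loop^{k}]_{rw}=_{rw}[\rho]_{rw}$ forces $k=0$.

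The main obstacle is precisely this last step, the \emph{non-triviality} of the nonzero canonical loops: the normalisation machinery developed earlier gives unique normal forms for computational paths under the $\mathit{LND_{EQ}\text{-}TRS}$ rewrite system, and I would invoke the strong normalisation theorem together with confluence to argue that the normal form of $[loop^{k}]_{rw}$ is a reduced word in $[loop^{1}]_{rw}$ and $\sigma([loop^{1}]_{rw})$ of length $|k|$ containing no $\tau(r,\sigma(r))$ or $\tau(\sigma(r),r)$ cancellations, hence distinct from $[\rho]_{rw}$ whenever $k\neq 0$. This is where care is required: one must verify that no $rw$-rule nor any admissible $Id_{\tau_{j}}$ relation collapses $[loop^{k}]_{rw}$ to $[\rho]_{rw}$, which amounts to checking that $S^{1}$ imposes no identifications beyond those generated by the generic path-axioms. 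Once this is secured, $\psi$ is bijective, hence an isomorphism, and we conclude $\Pi_{1}(S^{1},x_{0})\simeq\mathbb{Z}$.
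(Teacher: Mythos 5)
Your proposal follows essentially the same route as the paper: the same explicit map $n\mapsto[loop^{n}]_{rw}$ (the paper's $toPath$), the same verification that it is a homomorphism using the defining clauses and the rules $tt$, $tr$, $tsr$, $trr$, $tlr$, and the same appeal to Lemma \ref{S1=loopn} for surjectivity, so up to that point you and the paper coincide. The one place where you genuinely diverge is the injectivity/trivial-kernel step, which is the real mathematical content of the theorem. The paper argues briefly: if $z\neq 0$ lay in the kernel, then $toPath(z)=\tau(\rho,[loop^{z}]_{rw})=\rho$, and it claims via the rewrite rule $tr$ that this forces $z=0$; as written this is terse, since applying $tlr$ to $\tau(\rho,[loop^{z}]_{rw})$ only reduces the question to whether $[loop^{z}]_{rw}=_{rw}\rho$ can happen for $z\neq 0$, which is exactly the point at issue. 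You instead reduce $\psi(m)=\psi(n)$ to $\psi(m-n)=[\rho]_{rw}$ using the already-established group structure, and then invoke termination and confluence of $\mathit{LND_{EQ}}$-$TRS$ to argue that the unique normal form of $[loop^{k}]_{rw}$ is a reduced word of length $|k|$ in $[loop^{1}]_{rw}$ and $\sigma([loop^{1}]_{rw})$, hence not $[\rho]_{rw}$ for $k\neq 0$; you also correctly flag that one must check $S^{1}$ imposes no identifications (no $Id_{\tau_{j}}$ relations) beyond the generic path axioms. This buys a more principled treatment of the hard direction, anchored in the strong normalization theorem the paper has already proved, at the cost that the claim about the shape and length of the normal form of $[loop^{k}]_{rw}$ still needs to be carried out by an induction on $k$ (it is plausible, since no rule collapses a reduced alternating-free word, but it is not yet a proof); the paper's version buys brevity and matches its informal style, but leaves the same crucial non-collapse fact essentially asserted rather than derived.
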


\begin{proof}

Consider the application defined and denoted by: 

\begin{eqnarray*}
toPath: &&\mathbb Z \rightarrow \Pi_{1}(S)\\
&&z \rightarrow toPath(z)= [loop^{z}]_{rw}.
\end{eqnarray*}

\begin{itemize}
    \item [(i)] $toPath$ is a homomorphism.
    
    Let $z=n+m \in \mathbb{Z}$, then:
   \begin{eqnarray*}
         toPath(z)&=&toPath(n+m)\\
         &=&[loop^{n+m}]_{rw}\\
         &=&\tau([loop^{n}]_{rw},[loop^{m}]_{rw})\\
         &=&toPath(m)\circ toPath(n).
  \end{eqnarray*}  
  
  On the other hand, as $z=m+n$ we have:
  
  \begin{eqnarray*}
         toPath(z)&=&toPath(m+n)\\
         &=&[loop^{m+n}]_{rw}\\
         &=&\tau([loop^{m}]_{rw},[loop^{n}]_{rw})\\
         &=&toPath(n)\circ toPath(m).
  \end{eqnarray*}  
  
  Thus, $toPath(n+m)=toPath(n)\circ toPath(m)$.
  
  \item [(ii)] $toPath$ is surjective.

By \textbf{Lemma \ref{S1=loopn}}, as every path in $S^1$ is $rw$-equal to a path $[loop^{i}]_{rw}$, we have that for all paths $[loop^{i}]_{rw} \in \Pi_{1}(S^1), \exists i \in \mathbb{Z}$, such that, $ toPath(i)=[loop^{i}]_{rw}.$

   \item [(iii)] $Ker(toPath)=\{0\}$.
   
   Suppose there is $z\neq 0 \in \mathbb{Z}$, such that $z \in Ker(toPath)$. Thus,
   
   $$toPath(z)=toPath(z+0)\overset{hom}{=}toPath(z)\circ toPath(0)=\tau(\rho,[loop^{z}]_{rw})\overset{Ker}{=}\rho.$$
   
   If $\tau(\rho,\alpha)=\rho$, by $rw$-rule $\underset{tr}{\triangleright}$ we have, $\alpha= \sigma(\rho) \Rightarrow z=0  \rightarrow \leftarrow$. Therefore,  $$Ker(toPath)=\{0\}.$$
\end{itemize}

As $ToPath$ is a surjective homomorphism with $Ker(toPath)=\{0\}$, then $toPath$ is an isomorphism, that is, $\Pi_{1}(S,x_0) \simeq \mathbb{Z}$.
\end{proof}

\subsection{Fundamental group of the torus}
	
Consider $\mathbb{T}^{2}$ as the surface known as Torus and the point $x_{0}\in \mathbb{T}^{2}$. We will prove using computational paths that the fundamental group of the torus is isomorphic to $\mathbb{Z} \times \mathbb{Z}$. Here we will also use Definition \ref{defestruturageral} with some simple adaptations. We will continue to work with paths up to $rw$-equality.

\begin{figure}[!htb]
\centering
\includegraphics[width=0.4\columnwidth]{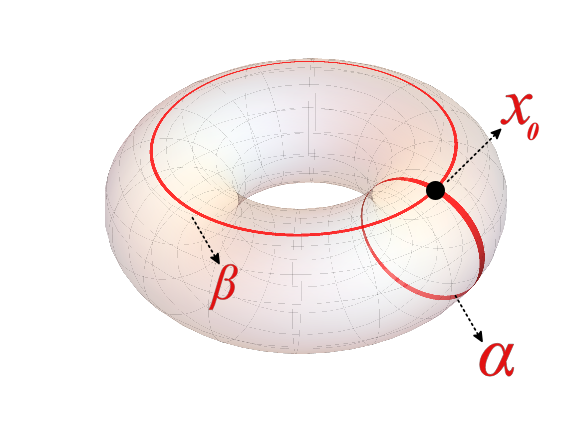}
\caption{Paths $\alpha$ and $\beta$ with base point $x_{0}$ in Torus} 
\label{fig1}
\end{figure}

Since the fundamental groups are obtained by analysing the loops, we will be interested in working with \textit{loops} that cannot be homotopic to base point $x_{0}$, like \textit{loops} $\alpha$ and $\beta$. These loops will be the generators of $\mathbb{T}^{2}$, as shown in\textbf{ Figure\ref{fig1}}, so we can give them a special definition for both. For simplicity, we will make two brief definitions to distinguish how the $\alpha$ $ \beta$ loops traverse the surface of Figure\ref{fig1}, though this does not alter the formal definition of the loops in any way. These definitions aim at making the evidence visually clearer for the reader, avoiding very visually cluttered expressions, thus enabling a better understanding.

\begin{definition}[{vertical loop}]
   We define and denote by 
   
    $$\alpha^n=[loop^{n}_{v}]_{rw}$$
    
    the path that passes through the inner part of $\mathbb{T}^{2}$ in the vertical direction, naturally obtained by applications of the path-axioms $\rho$, $\tau$ and $\sigma$ to the base path  $x_0 \underset{\alpha}{=} x_0$, where $n\in \mathbb{Z}.$ Particularly, we have
	 
	 \begin{itemize}
	     \item [(i)] $[loop^{0}_v]_{rw}=[\rho]_{rw}=\alpha^0$, $n=0$.
	     \item [(ii)] $[loop^{n+1}_v]_{rw}=\tau\big([loop^n_v]_{rw},[loop^1_v]_{rw}\big)=\alpha^{n+1}$, $n>0$.
	     \item [(iii)] $[loop^{n}_v]_{rw}=\sigma([loop^{-n}_v]_{rw}) = \alpha^{-n}$, $n<0$.
	 \end{itemize}
	 \label{defloopv}
	\end{definition}
   
    In \textbf{Figure \ref{fig1}}, this vertical path (loop) has the same orientation of the path denoted by $\alpha$.

\begin{definition}[{horizontal loop}] We define and denote by 
   
    $$\beta^m=[loop^{m}_{h}]_{rw}$$

   the path that passes through the inner part of  $\mathbb{T}^{2}$ in the horizontal direction, naturally obtained by applications of the path-axioms $\rho$, $\tau$ and $\sigma$ to the base path  $x_0 \underset{\beta}{=} x_0$, where $n\in \mathbb{Z}.$ Particularly, we have:
	 
	 \begin{itemize}
	    \item [(i)] $[loop^{0}_h]_{rw}=[\rho]_{rw}=\beta^0$, $m=0$.
	     \item [(ii)] $[loop^{m+1}_h]_{rw}=\tau\big([loop^m_h]_{rw},[loop^1_h]_{rw}\big)=\beta^{m+1}$, $m>0$.
	     \item [(iii)] $[loop^{m}_h]_{rw}=\sigma([loop^{-m}_h]_{rw}) = \beta^{-m}$, $m<0$.
	 \end{itemize}
\label{deflooph}
	\end{definition}
   
   In \textbf{Figure \ref{fig1}}, this horizontal path (loop) has the same orientation of the path denoted by $\beta$.  By Definitions \ref{defloopv} and \ref{deflooph}, we can also represent the path homotopic to the constant one by: $[\rho]_{rw}=\alpha^0\beta^0$, or $[\rho]_{rw}=\alpha^0$, or $[\rho]_{rw}=\beta^0$. For simplicity, we denote it by $\rho$.

We now give the formal definition of the torus in homotopy type theory:

\begin{definition} \label{torodef}
 The torus $\mathbb{T}^2$ is generated by:
 
 \item[ (i)] A base point  $x_0:\mathbb{T}^2$.
 \item[(ii)] Two base paths $\alpha$ and $\beta$ such that: $x_{0}\underset{\alpha}{=} x_{0}$ \hspace{0.2cm} and \hspace{0.2cm}  $x_{0}\underset{\beta}{=} x_{0}$.
 \item[(iii)] One path $co$ that establishes $\beta\alpha\underset{co}{=}\alpha\beta$, i.e., a term $co:Id(\beta\alpha,\alpha\beta)$.
\end{definition}

Based on definition \ref{torodef}, we can establish the following definition in computational paths:

\begin{definition} In $\mathbb{T}^2$,  we define the following canonical loops (canonical paths):

 \begin{itemize}
	     \item [(i)] A base point  $\alpha^0\beta^0\underset{rw}{=}[\rho_{x_0}]_{rw}.$
	     \item [(ii)] The path $\beta^{m}\alpha^{n}=\tau(\alpha^{n},\beta^{m})$.
	     
	     \item [(iii)] The path $\sigma(\beta^{m}\alpha^{n})=\sigma(\tau(\alpha^{n},\beta^{m}))$.
	     
	     \item [(iv)] One path $co$ that establishes $\tau(\alpha^n,\beta^m)\underset{co}{=} \tau(\beta^m,\alpha^n).$
	     
	 \end{itemize}
\end{definition}

By ~\cite{Munkres}, given a point $x_{0}\in \mathbb{T}^2$, the Torus can be expressed as the quotient of a square whose sides are the base paths (loops) $\alpha$ and $\beta$, as shown in \textbf{Figure \ref{fig2}}.

\begin{figure}[H]
\centering
\includegraphics[width=0.4\columnwidth]{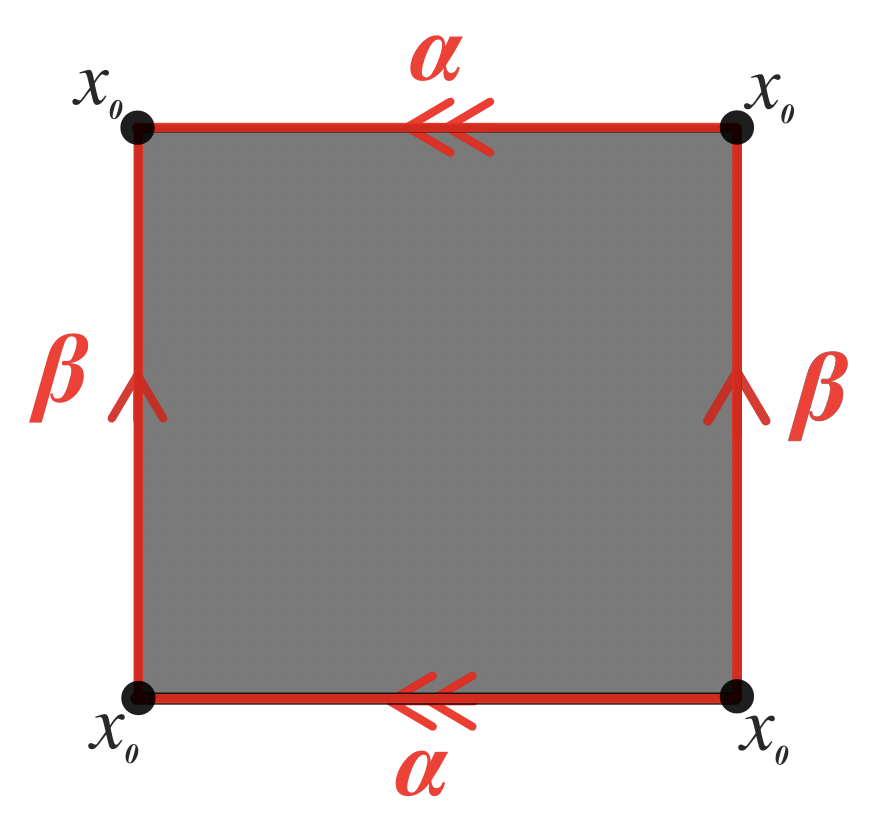}
\caption{Square Torus representation with oriented paths $\alpha$ and $\beta$ } 
\label{fig2}
\end{figure}

Consider the following path in the figure: $$\alpha^{-1}\circ \beta^{-1} \circ \alpha \circ \beta = \tau\bigg( \tau\Big(\tau(\beta,\alpha),\sigma(\beta)\Big),\sigma(\alpha)\bigg).$$

\begin{proposition}
The aforementioned path is $rw$-equal to the reflexive path.
\end{proposition}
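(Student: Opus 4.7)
The key observation is that the expression $\tau\bigl(\tau(\tau(\beta,\alpha),\sigma(\beta)),\sigma(\alpha)\bigr)$ contains the subterm $\tau(\beta,\alpha)$, and the torus comes equipped with a commutativity path $co:Id(\tau(\beta,\alpha),\tau(\alpha,\beta))$. The plan is therefore to use $co$ once, up front, to turn the mixed order $\beta\alpha$ into $\alpha\beta$, and then let the $\mathit{LND_{EQ}-TRS}$ rules for $\tau$ cancel everything down to $\rho$. No cleverness beyond rule-bashing is required after the single application of $co$.

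More concretely, I would proceed as follows. First, invoking $co$ inside the innermost position (formally, via a subterm substitution ${\tt sub}_{\tt R}$ with context ${\cal C}[\,\cdot\,] = \tau(\tau(\,\cdot\,,\sigma(\beta)),\sigma(\alpha))$), rewrite
\[
\tau\bigl(\tau(\tau(\beta,\alpha),\sigma(\beta)),\sigma(\alpha)\bigr)\ =\ \tau\bigl(\tau(\tau(\alpha,\beta),\sigma(\beta)),\sigma(\alpha)\bigr).
\]
Next, apply the associativity rule $tt$ (rule $37$ of Definition \ref{LND-TRS}, $\tau(\tau(t,r),s)\triangleright_{tt}\tau(t,\tau(r,s))$) to the inner $\tau(\tau(\alpha,\beta),\sigma(\beta))$, obtaining $\tau(\alpha,\tau(\beta,\sigma(\beta)))$. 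Then rule $tr$ (rule $3$, $\tau(r,\sigma(r))\triangleright_{tr}\rho$) collapses $\tau(\beta,\sigma(\beta))$ to $\rho$, and rule $trr$ (rule $5$, $\tau(r,\rho)\triangleright_{trr}r$) collapses $\tau(\alpha,\rho)$ to $\alpha$. The whole expression has now been reduced to $\tau(\alpha,\sigma(\alpha))$, and a final application of $tr$ gives $\rho$.

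Packaging these $rw$-contractions in order:
\[
\tau\bigl(\tau(\tau(\beta,\alpha),\sigma(\beta)),\sigma(\alpha)\bigr)\ \underset{co}{=}\ \tau\bigl(\tau(\tau(\alpha,\beta),\sigma(\beta)),\sigma(\alpha)\bigr)\ \underset{tt}{=}\ \tau\bigl(\tau(\alpha,\tau(\beta,\sigma(\beta))),\sigma(\alpha)\bigr)\ \underset{tr}{=}\ \tau(\tau(\alpha,\rho),\sigma(\alpha))\ \underset{trr}{=}\ \tau(\alpha,\sigma(\alpha))\ \underset{tr}{=}\ \rho,
\]
which exhibits the path as $rw$-equal to $\rho$ (all identifications taken modulo $rw$-equality and the family of identity type terms $Id_{\tau_j}$, as in Definition \ref{defestruturageral}).

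The only conceptually non-trivial step is the first one: without the generating $2$-path $co$, the two sides of the square cannot be swapped, and the calculation stalls because none of the rules of $\mathit{LND_{EQ}-TRS}$ can ever turn $\tau(\beta,\alpha)$ into $\tau(\alpha,\beta)$ (the system is purely about cancelling inverses and re-bracketing, not about commuting distinct generators). So the main ``obstacle'' is really just to notice that $co$ must be applied \emph{inside} the outermost context, via subterm substitution, rather than at the top level; once that is done, the remainder is a short, mechanical sequence of $tt$, $tr$, $trr$, $tr$ reductions, each of which is one of the standard rewrite rules from Definition \ref{LND-TRS}.
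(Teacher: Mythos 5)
Your proof is correct and takes essentially the same route as the paper: a single use of the commutation path $co$ combined with the $\tau$-rewrite rules ($tt$, $tr$, $trr$/$tlr$) applied inside the outer context to cancel everything down to $\rho$. The only difference is cosmetic — the paper first reassociates with $tt$, uses $co$ to commute $\alpha$ past $\sigma(\beta)$, and reassociates back via $\sigma(tt)$, whereas you apply $co$ (in its symmetric direction) directly to the innermost $\tau(\beta,\alpha)$, which shortens the reduction by one step.
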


\begin{proof}
Indeed,

\begin{eqnarray*}
\alpha^{-1}\circ \beta^{-1} \circ \alpha \circ \beta &=& \tau\bigg( \tau\Big(\tau(\beta,\alpha),\sigma(\beta)\Big),\sigma(\alpha)\bigg)\\
	&\underset{tt}{=}& \tau\bigg( \tau\Big(\beta,\tau \big(\alpha,\sigma(\beta)\big)\Big),\sigma(\alpha)\bigg)\\
	&\underset{co}{=}&  \tau\bigg( \tau\Big(\beta,\tau \big(\sigma(\beta),\alpha\big)\Big),\sigma(\alpha)\bigg)\\
    &\underset{\sigma(tt)}{=}&\tau\bigg( \tau\Big(\tau(\beta,\sigma(\beta)),\alpha\Big),\sigma(\alpha)\bigg)\\
    &\underset{tr}{=}&\tau\bigg( \tau\Big(\rho,\alpha\Big),\sigma(\alpha)\bigg)\\
    &\underset{tlr}{=}&\tau\Big( \alpha, \sigma(\alpha) \Big)\\
    &\underset{tr}{=}&\rho.\\
\end{eqnarray*}
and thus,  $$\alpha^{-1}\circ \beta^{-1} \circ \alpha \circ \beta =\tau\bigg( \tau\Big(\tau(\beta,\alpha),\sigma(\beta)\Big),\sigma(\alpha)\bigg) \underset{rw}{=} [\rho]_{rw}.$$
\end{proof}

\begin{lemma}\label{pathloopMN}
 All paths in $\mathbb{T}^2$ are    \textit{rw-equal} to the path $\beta^{m}\alpha^{n}$, with $m,n \in \mathbb{Z}$.

\begin{proof} Let $\varphi$ be a computational path in $\mathbb{T}^2$.

\begin{itemize}
    \item If $\varphi=\rho$ then $\varphi=\tau(\alpha^0,\beta^0)=\beta^{0}\alpha^{0}$.
    
    \item If $\varphi=\sigma(\mu)=\sigma(\beta^{m}\alpha^{n})=\sigma\Big(\tau(\alpha^{n},\beta^{m})\Big)\underset{stss}{=}\tau\Big(\sigma(\beta^{m}),\sigma(\alpha^{n})\Big)=\tau(\alpha^{-n},\beta^{-m})=\beta^{-m}\alpha^{-n}$.
    
    \item If 
    
    \begin{eqnarray*}
    \varphi &&= \tau(\mu,\omega)\\
    &&=\tau(\beta^{m}\alpha^{n},\beta^{r}\alpha^{s})\\
    &&=\tau\Big(\tau(\alpha
    ^{n},\beta^{m}),\tau(\alpha^{s},\beta^{r})\Big)\\
    &&=(\beta^{r}\circ\alpha
    ^{s}) \circ(\beta^{m}\circ\alpha^{n})\\
    &&=\beta^{r}\circ\alpha
    ^{s} \circ\beta^{m}\circ\alpha^{n}\\
    &&\underset{co}{=}\beta^{r}\circ\beta^{m}\circ \alpha
    ^{s}\circ\alpha^{n}\\
    &&=\beta^{m+r}\alpha
    ^{n+s}\\
    \end{eqnarray*}
    
\end{itemize}
\end{proof}

\end{lemma}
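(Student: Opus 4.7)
I would mirror the structural induction used in Lemma \ref{S1=loopn} for the circle. Let $\varphi$ be an arbitrary computational path from $x_0$ to $x_0$ in $\mathbb{T}^2$; by the way such paths are built, $\varphi$ is either the reflexive path $\rho$, one of the base paths $\alpha$ or $\beta$, a symmetry $\sigma(\psi)$, or a transitivity $\tau(\mu,\omega)$. I will show by induction on the construction of $\varphi$ that in every case $\varphi$ is $rw$-equal to some $\beta^{m}\alpha^{n}$ with $m,n\in\mathbb{Z}$.

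The base cases are immediate: $\rho \underset{rw}{=} \beta^{0}\alpha^{0}$, while $\alpha\underset{rw}{=}\beta^{0}\alpha^{1}$ and $\beta\underset{rw}{=}\beta^{1}\alpha^{0}$ follow from the rules $trr$ and $tlr$ applied to $\tau(\alpha,\rho)$ and $\tau(\rho,\beta)$ together with the definitions of the vertical and horizontal loops. For the $\sigma$ case, suppose by induction that $\psi \underset{rw}{=} \beta^{m}\alpha^{n} = \tau(\alpha^{n},\beta^{m})$. Then one application of the rule $stss$ yields
\[
\sigma(\psi) \underset{rw}{=} \sigma\bigl(\tau(\alpha^{n},\beta^{m})\bigr) \underset{stss}{=} \tau\bigl(\sigma(\beta^{m}),\sigma(\alpha^{n})\bigr) = \tau(\beta^{-m},\alpha^{-n}),
\]
and one further application of the commutation path $co$ re-orders this to $\tau(\alpha^{-n},\beta^{-m}) = \beta^{-m}\alpha^{-n}$, which is of the required form.

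The interesting case is $\varphi = \tau(\mu,\omega)$. By induction $\mu \underset{rw}{=} \tau(\alpha^{n},\beta^{m})$ and $\omega \underset{rw}{=} \tau(\alpha^{s},\beta^{r})$, so $\tau(\mu,\omega)$ re-associates, via repeated uses of rule $tt$, into the flat composition $\beta^{r}\alpha^{s}\beta^{m}\alpha^{n}$. The bulk of the work is then to slide the middle block $\alpha^{s}$ past $\beta^{m}$ by repeated applications of $co$, producing $\beta^{r}\beta^{m}\alpha^{s}\alpha^{n}$; after that the two adjacent blocks of like generators combine, by exactly the $S^{1}$-style argument behind Lemma \ref{S1=loopn}, into $\beta^{m+r}\alpha^{n+s}$, which is of the desired form.

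The main obstacle is a careful justification that the commutation rule $co$, stated only for the single-power equation $\tau(\alpha,\beta)\underset{co}{=}\tau(\beta,\alpha)$, may be invoked inside a larger context and at arbitrary integer powers. Formally this requires the subterm substitution rules ${\tt sub_L}$ and ${\tt sub_R}$ together with the $tt$ and $stss$ normalisations, and one must also verify that when the exponents $n,s,m,r$ have opposite signs the $\tau,\sigma$-cancellations (rules $tr$, $tsr$, $trr$, $tlr$) yield the expected sums of exponents rather than a mismatched tower of inversions. Once this bookkeeping is in place the induction closes and the lemma follows.
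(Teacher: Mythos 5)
Your proposal follows essentially the same route as the paper's proof: a case analysis on the form of $\varphi$ (reflexive path, $\sigma(\mu)$, $\tau(\mu,\omega)$), handling the symmetry case via the rule $stss$ and the transitivity case by re-association and the commutation path $co$, then adding exponents to land on $\beta^{m+r}\alpha^{n+s}$. If anything you are more careful than the paper, which silently applies $co$ to the whole block $\beta^{r}\alpha^{s}\beta^{m}\alpha^{n}$ and silently reorders $\tau(\beta^{-m},\alpha^{-n})$ into $\beta^{-m}\alpha^{-n}$ in the $\sigma$ case, whereas you explicitly note that extending $co$ to arbitrary powers inside a context requires the ${\tt sub_L}$/${\tt sub_R}$ rules and sign bookkeeping.
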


 \begin{lemma}

All paths $\beta^{m}\alpha^{n}$ in $\mathbb{T}^{2}$  can be expressed in terms of $\rho$,$\tau$, $\sigma$ and their applications, starting from the base paths   $\alpha^1=[loop^1_v]_{rw}$ and $\beta^1=[loop^1_h]_{rw}$. 

\end{lemma}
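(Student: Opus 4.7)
The plan is to mimic the inductive structure used for the circle, but carried out in two variables, exploiting the commutativity path $co$ of Definition \ref{torodef}. The target is to build every $\beta^{m}\alpha^{n}$ from only $\alpha^{1}=[loop_{v}^{1}]_{rw}$, $\beta^{1}=[loop_{h}^{1}]_{rw}$, the constant $\rho$, and applications of $\tau$ and $\sigma$ (together with $co$ where needed to interchange factors).

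First, I would treat the two generators separately. For the vertical family $\alpha^{n}$, the very definition \ref{defloopv} gives it: $\alpha^{0}$ is $\rho$; $\alpha^{n+1}=\tau(\alpha^{n},\alpha^{1})$ handles the positive case by a straightforward induction on $n>0$; and $\alpha^{-n}=\sigma(\alpha^{n})$ handles the negative case. An analogous induction using Definition \ref{deflooph} produces every $\beta^{m}$ from $\beta^{1}$, $\rho$, $\tau$ and $\sigma$. Since $\alpha^{1}$ and $\beta^{1}$ are the only base paths being used and the only operations applied are $\tau$, $\sigma$, $\rho$, both $\alpha^{n}$ and $\beta^{m}$ are in the desired form for every $n,m\in\mathbb{Z}$.

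Next I combine them. By the canonical presentation, $\beta^{m}\alpha^{n}=\tau(\alpha^{n},\beta^{m})$, so once $\alpha^{n}$ and $\beta^{m}$ are constructed, a single $\tau$ yields $\beta^{m}\alpha^{n}$ as required. The proof amounts to a double induction: outer on $|m|$, inner on $|n|$, with four sign-cases ($m,n\geq 0$, $m\geq 0,n<0$, $m<0,n\geq 0$, $m,n<0$). In each case the passage from step $k$ to $k\pm1$ consists of composing with $\alpha^{\pm1}$ or $\beta^{\pm1}$ via $\tau$, together with $\sigma$ when the exponent is negative, using the reductions $\tau(\gamma,\rho)\triangleright_{trr}\gamma$, $\tau(\rho,\gamma)\triangleright_{tlr}\gamma$, $\tau(\gamma,\sigma(\gamma))\triangleright_{tr}\rho$, $\tau(\sigma(\gamma),\gamma)\triangleright_{tsr}\rho$ and associativity $\tau(\tau(r,s),t)\triangleright_{tt}\tau(r,\tau(s,t))$ of $\mathit{LND_{EQ}\textrm{-}TRS}$ to keep the path in the normal form $\tau(\alpha^{n},\beta^{m})$.

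The main obstacle I expect is bookkeeping the interplay between $\tau$ and $co$: when the two families get concatenated in an arbitrary order, e.g.\ $\tau(\beta^{m_{1}}\alpha^{n_{1}},\beta^{m_{2}}\alpha^{n_{2}})$, the inner factors appear in the ``wrong'' order and must be swapped via $co$ before the powers can be collected. This is exactly the step performed in the proof of Lemma \ref{pathloopMN}, and I would reuse that computation: apply $tt$ to reassociate, apply $co$ to swap an $\alpha^{s}$ past a $\beta^{m}$, and then apply $tt$ again to collect, arriving at $\tau(\alpha^{n_{1}+n_{2}},\beta^{m_{1}+m_{2}})=\beta^{m_{1}+m_{2}}\alpha^{n_{1}+n_{2}}$. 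With the two inductions from the previous paragraph and this rewriting step, every $\beta^{m}\alpha^{n}$ is visibly built from $\alpha^{1},\beta^{1},\rho,\tau,\sigma$ (and $co$, which is an axiomatic path of $\mathbb{T}^{2}$), completing the proof.
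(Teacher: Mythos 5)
Your proposal is correct, but it is organized differently from the paper's proof. You first manufacture each power family separately ($\alpha^{n}$ from Definition \ref{defloopv}, $\beta^{m}$ from Definition \ref{deflooph}, by induction on the exponent with $\sigma$ handling negative signs) and then observe that, by the canonical presentation $\beta^{m}\alpha^{n}=\tau(\alpha^{n},\beta^{m})$, a single application of $\tau$ finishes the job; the $co$-swap computation of Lemma \ref{pathloopMN} is only invoked when two general words are concatenated. The paper instead runs one induction in the style of the circle lemma: it starts from the base cases $\rho\circ\alpha$, $\rho\circ\beta$, $\rho\circ\alpha^{-1}$, $\rho\circ\beta^{-1}$ and, assuming the path already has the form $\beta^{m}\alpha^{n}$, composes with one generator at a time ($\rho$, $\alpha^{\pm1}$, $\beta^{\pm1}$), normalizing after each step with $co$, $tt$, $tsr$, $trr$, $tlr$ to return to the shape $\beta^{m'}\alpha^{n'}$. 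Your route proves the literal statement more economically, since the expression $\tau(\alpha^{n},\beta^{m})$ is available by definition; the paper's route does more rewriting but simultaneously establishes that composing a normal-form path with any single generator stays in normal form, which is the closure fact that is actually exercised later in the group-structure proposition and in the surjectivity of $toPath^{2}$. Your closing paragraph about reassociating with $tt$, swapping $\alpha^{s}$ past $\beta^{m}$ with $co$, and recollecting exponents is exactly the computation the paper performs inside its inductive step, so nothing essential is missing; just note that the double induction on $\lvert m\rvert$ and $\lvert n\rvert$ you sketch is not really needed once you take the one-$\tau$ pasting as the final step.
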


\begin{proof}
Consider the following cases

\item[(i)] Base case: $\beta^{0}\alpha^{0}=\rho$.
\item[(ii)] $\rho\circ \alpha=\tau(\alpha,\rho) \underset{trr}{=}\alpha =\beta^{0}\alpha^{1}.$
\item[(iii)] $\rho\circ \beta=\tau(\beta,\rho) \underset{trr}{=}\beta =\beta^{1}\alpha^{0}.$
\item[(iv)] $\rho\circ \alpha^{-1}=\tau(\sigma(\alpha),\rho) \underset{trr}{=}\sigma(\alpha) =\beta^{0}\alpha^{-1}.$
\item[(v)] $\rho\circ \beta^{-1}=\tau(\sigma(\beta),\rho) \underset{trr}{=}\sigma(\beta) =\beta^{-1}\alpha^{0}.$

Assuming, by the induction hypothesis that every path in $\mathbb{T}^2$ is \textit{rw-equal} to  $\beta^{m} \alpha^{n}$, we have:

\item[(1)] $\rho\circ \beta^{m}\alpha^{n}=\tau(\beta^{m}\alpha^{n},\rho) \underset{trr}{=}\beta^{m}\alpha^{n}.$
\item[(2)] $\alpha\circ \beta^{m}\alpha^{n}\underset{co}{=} \alpha\circ \alpha^{n}\beta^{m}=\alpha^{n+1}\beta^{m}\underset{co}{=}\beta^{m}\alpha^{n+1}.$
\item[(3)]  $\beta\circ \beta^{m}\alpha^{n}=\beta^{m+1}\alpha^{n}.$

\item[(4)] $\beta^{-1}\circ \beta^{m}\alpha^{n} =(\beta^{-1}\circ (\beta\circ \beta^{m-1}))\alpha^{n}\underset{tt}{=}((\beta^{-1}\circ \beta)\circ \beta^{m-1})\alpha^{n}\underset{tsr}{=}(\rho\circ\beta^{m-1})\alpha^{n}=\beta^{m-1}\alpha^{n}.$

\item[(4)] $\alpha^{-1}\circ \beta^{m}\alpha^{n}\underset{co}{=} \alpha^{-1}\circ \alpha^{n}\beta^{m}=(\alpha^{-1}\circ (\alpha\circ \alpha^{n-1}))\beta^{m}\underset{tt}{=} ((\alpha^{-1}\circ \alpha)\circ \alpha^{n-1})\beta^{m}\underset{tsr}{=}(\rho\circ\alpha^{n-1})\beta^{m}=\alpha^{n-1}\beta^{m} \underset{co}{=}\beta^{m}\alpha^{n-1}.$

So all paths in $\mathbb{T}^{2}$ are \textit{rw-equal} to $\beta^{m} \alpha^{n}.$
\end{proof}

\begin{proposition}
$\Pi_{1}(\mathbb{T}^2,x_0)$ provided with operations $\rho, \sigma, \tau$ is a group.
\end{proposition}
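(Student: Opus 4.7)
The plan is to mirror the group-verification argument given earlier for $\Pi_1(S^1,x_0)$, adapted to the torus presentation. By the preceding two lemmas, every element of $\Pi_1(\mathbb{T}^2,x_0)$ can be taken up to $rw$-equality to have the canonical shape $\beta^m\alpha^n=\tau(\alpha^n,\beta^m)$ for some $m,n\in\mathbb{Z}$, and all compositions reduce back to this shape. So the strategy is: take two arbitrary computational paths $r,t:x_0=x_0$ in $\mathbb{T}^2$ (which we may think of as $\beta^{m}\alpha^{n}$ and $\beta^{r}\alpha^{s}$ after normalisation) and check the four group axioms using only $\rho,\sigma,\tau$ together with the rewrite rules of $\mathit{LND_{EQ}\text{-}TRS}$ and the $co$ path from Definition \ref{torodef}.

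First I would verify \emph{closure}: the composition $r\circ t$ is defined as $\tau(t,r)$, which is again a computational path $x_0=x_0:\mathbb{T}^2$, and by Lemma \ref{pathloopMN} it is $rw$-equal to some canonical $\beta^{m+r}\alpha^{n+s}$. Next, \emph{identity}: the path $x_0=_\rho x_0$ acts on either side via the rewrites $\tau(\rho,r)\triangleright_{tlr} r$ and $\tau(r,\rho)\triangleright_{trr} r$, exactly as in the circle case. For \emph{inverse}, given $r$ we take $\sigma(r)$ and observe that $\tau(r,\sigma(r))\triangleright_{tr}\rho$ and $\tau(\sigma(r),r)\triangleright_{tsr}\rho$; because we work up to $rw$-equality, these equalities hold strictly in the quotient. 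Finally, \emph{associativity} for $r,s,t$ follows from the rule $\tau(\tau(t,r),s)\triangleright_{tt}\tau(t,\tau(r,s))$.

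The steps are essentially routine once the normalisation lemmas are in place, and each axiom reduces to a single application of an $\mathit{LND_{EQ}\text{-}TRS}$ rule, just as for $S^1$. The one place where the torus differs genuinely from the circle is in closure: to display the composition in canonical form $\beta^{m+r}\alpha^{n+s}$ one must commute the $\alpha$'s past the $\beta$'s using the path $co$, and I expect the main (mild) obstacle to be writing out this rearrangement cleanly. However, since this rearrangement has already been carried out inside the proof of Lemma \ref{pathloopMN}, we may simply invoke that lemma and the proposition follows directly.
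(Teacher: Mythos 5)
Your proposal is correct and takes essentially the same route as the paper: both proofs verify closure, identity, inverse and associativity directly with the rewrite rules ($trr$, $tlr$, $tr$, $tsr$, $tt$), with the torus-specific commutation handled by the path $co$ (which you delegate to Lemma \ref{pathloopMN}, while the paper repeats that rearrangement explicitly in the closure and inverse cases). The only cosmetic difference is that the paper exhibits the inverse of $\beta^{m}\alpha^{n}$ in the factored form $\sigma(\beta^{m})\sigma(\alpha^{n})$ and cancels using $co$, whereas you take $\sigma(r)$ generically exactly as in the circle case; both are valid.
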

\smallskip
\begin{proof}

\item [(i)] \textbf{Closure:}  $$$$

\begin{prooftree}
\AxiomC{ $x_{0} \underset{\beta^{u}\alpha^{v}}{=}x_{0}$}
\AxiomC{$x_{0} \underset{\beta^{r}\alpha^{s}}{=} x_{0}$}
\BinaryInfC{ $x_{0} \underset{\tau \left(\beta^{u}\alpha^{v},\beta^{r}\alpha^{s}\right)}{=}x_{0}$}.
\end{prooftree}

But, 

\begin{eqnarray*}
\tau(\beta^{u}\alpha^{v},\beta^{r}\alpha^{s}) &=& (\beta^{r}\alpha^{s}) \circ (\beta^{u}\alpha^{v}) \\
 &=& \beta^{u}\alpha^{v}\beta^{r}\alpha^{s}\\
 &\underset{co}{=}& \beta^{u}\beta^{r}\alpha^{v} \alpha^{s}\\
 &=& \beta^{u+r}\alpha^{v+s }
\end{eqnarray*}

\item [(ii)] \textbf{Inverse:}$$$$ 

\begin{prooftree}
\AxiomC{ $x_{0} \underset{\beta^{m}\alpha^{n}}{=}x_{0}$}
\AxiomC{$x_{0} \underset{\sigma{(\beta^{m})}\sigma{(\alpha^{n})}}{=} x_{0}$}
\BinaryInfC{ $x_{0} \underset{\tau \left(\beta^{m}\alpha^{n},\sigma{(\beta^{m})}\sigma{(\alpha^{n})}\right)}{=}x_{0}$}.
\end{prooftree}

But, 

\begin{eqnarray*}
\tau(\beta^{m}\alpha^{n},\sigma{(\beta^{m})}\sigma{(\alpha^{n})}) &=& (\sigma{(\beta^{m})}\sigma{(\alpha^{n})}) \circ (\beta^{m}\alpha^{n}) \\
 &=& \sigma{(\beta^{m})}\sigma{(\alpha^{n})} \beta^{m}\alpha^{n}\\
 &\underset{co}{=}&  \sigma{(\beta^{m})}\beta^{m}\sigma{(\alpha^{n})}\alpha^{n}\\
 &\underset{tsr}{=}&\rho_{\beta}\rho_{\alpha}\\
 &\underset{trr}{=}&\rho_{x_{0}}.
\end{eqnarray*}

On the other hand, we have:

\begin{prooftree}
\AxiomC{ $x_{0} \underset{\sigma{(\beta^{m})}\sigma{(\alpha^{n})}}{=}x_{0}$}
\AxiomC{$x_{0} \underset{\beta^{m}\alpha^{n}}{=} x_{0}$}
\BinaryInfC{ $x_{0} \underset{\tau \left(\sigma{(\beta^{m})}\sigma{(\alpha^{n}),\beta^{m}\alpha^{n}}\right)}{=}x_{0}$}.
\end{prooftree}

But, 

\begin{eqnarray*}
\tau(\sigma{(\beta^{m})}\sigma{(\alpha^{n})},\beta^{m}\alpha^{n}) &=& (\beta^{m}\alpha^{n}) \circ (\sigma{(\beta^{m})}\sigma{(\alpha^{n})})\\
 &=& \beta^{m}\alpha^{n}\sigma{(\beta^{m})}\sigma{(\alpha^{n})}\\
 &\underset{co}{=}& \beta^{m}\sigma{(\beta^{m})}\alpha^{n}\sigma{(\alpha^{n})}\\
 &\underset{tr}{=}&\rho_{\beta}\rho_{\alpha}\underset{trr}{=}\rho_{x_{0}}.
\end{eqnarray*}


\item [(iii)] \textbf{Identity:}$$$$ 

\begin{prooftree}
\AxiomC{ $x_{0} \underset{\beta^{m}\alpha^{n}}{=}x_{0}$}
\AxiomC{$x_{0} \underset{\rho_{x_{0}}}{=} x_{0}$}
\BinaryInfC{ $x_{0} \underset{\tau \left(\beta^{m}\alpha^{n},\rho_{x_{0}}\right)}{=}x_{0}$}
\end{prooftree}

But, 

\begin{eqnarray*}
\tau(\beta^{m}\alpha^{n},\rho_{x_{0}}) &=& (\rho_{x_{0}}) \circ (\beta^{m}\alpha^{n})  \\
 &=& \rho_{x_{0}}\beta^{m}\alpha^{n}  \\
 &\underset{tlr}{=}&  \beta^{m}\alpha^{n}
\end{eqnarray*}
 
 and so \[
 \tau(\beta^{m}\alpha^{n},\rho_{x_{0}}) \underset{trr}{=} \beta^{m}\alpha^{n}.
\]

On the other hand, we have: $$$$

\begin{prooftree}
\AxiomC{ $x_{0} \underset{\rho_{x_{0}}}{=}x_{0}$}
\AxiomC{$x_{0} \underset{\beta^{m}\alpha^{n}}{=} x_{0}$}
\BinaryInfC{ $x_{0} \underset{\tau \left(\rho_{x_{0},\beta^{m}\alpha^{n}}\right)}{=}x_{0}$}.
\end{prooftree}

\bigskip
But, 

\begin{eqnarray*}
\tau \left(\rho_{x_{0}},\beta^{m}\alpha^{n}\right) &=&  (\beta^{m}\alpha^{n}) \circ (\rho_{x_{0}})  \\
 &=& \beta^{m}\alpha^{n} \rho_{x_{0}}  \\
 &\underset{trr}{=}&  \beta^{m}\alpha^{n}
\end{eqnarray*}

and so \[
 \tau(\rho_{x_{0}},\beta^{m}\alpha^{n}) \underset{trr}{=} \beta^{m}\alpha^{n}.
\]

\item [(iv)] \textbf{Associativity:}$$$$

\begin{prooftree}
\AxiomC{$x_{0} \underset{\beta^{m}\alpha^{n}}{=}x_{0}$}
\AxiomC{$x_{0} \underset{\beta^{i}\alpha^{j}}{=} x_{0}$}
\BinaryInfC{$ x_{0} \underset{\tau \left(\beta^{m}\alpha^{n},\beta^{i}\alpha^{j}\right)}{=}x_{0}$}
\AxiomC{$x_{0} \underset{\beta^{r}\alpha^{s}}{=}x_{0}$}
\BinaryInfC{$x_{0} \underset{\tau \left(\tau \left(\beta^{m}\alpha^{n},\beta^{i}\alpha^{j}\right),\beta^{r}\alpha^{s} \right)}{=}x_{0}$}.
\end{prooftree}

But, 

\begin{eqnarray*}
\tau \left(\tau \left(\beta^{m}\alpha^{n},\beta^{i}\alpha^{j}\right),\beta^{r}\alpha^{s} \right)  
 &=& (\beta^{r}\alpha^{s}) \circ \tau (\beta^{m}\alpha^{n},\beta^{i}\alpha^{j}) \\
 &=& (\beta^{r}\alpha^{s}) \circ (\beta^{i}\alpha^{j}\circ \beta^{m}\alpha^{n})\\
 &=& (\beta^{r}\alpha^{s}) \circ (\beta^{i}\alpha^{j}\beta^{m}\alpha^{n})\\
 &=& \beta^{r}\alpha^{s} \beta^{i}\alpha^{j}\beta^{m}\alpha^{n}.\\
\end{eqnarray*}
\bigskip

On the other hand, we have:

\begin{prooftree}
\AxiomC{$x_{0} \underset{\beta^{m}\alpha^{n}}{=}x_{0}$}
\AxiomC{$x_{0} \underset{\beta^{i}\alpha^{j}}{=} x_{0}$}
\AxiomC{$x_{0} \underset{\beta^{r}\alpha^{s}}{=} x_{0}$}
\BinaryInfC{$ x_{0} \underset{\tau \left(\beta^{i}\alpha^{j},\beta^{r}\alpha^{s}\right)}{=}x_{0}$}
\BinaryInfC{$x_{0} \underset{\tau (\beta^{m}\alpha^{n}, \tau \left(\beta^{i}\alpha^{j},\beta^{r}\alpha^{s})\right)}{=}x_{0}$}.
\end{prooftree}

But, 

\begin{eqnarray*}
\tau \left(\beta^{m}\alpha^{n} ,\tau \left(\beta^{i}\alpha^{j},\beta^{r}\alpha^{s}\right)\right)  
 &=&  \tau (\beta^{i}\alpha^{j},\beta^{r}\alpha^{s})\circ(\beta^{m}\alpha^{n}) \\
 &=& (\beta^{r}\alpha^{s} \circ \beta^{i}\alpha^{j})\circ (\beta^{m}\alpha^{n}) \\
 &=& (\beta^{r}\alpha^{s}\beta^{i}\alpha^{j})\circ (\beta^{m}\alpha^{n}) \\
 &=& \beta^{r}\alpha^{s} \beta^{i}\alpha^{j}\beta^{m}\alpha^{n}.\\
\end{eqnarray*}

Therefore, the structure $\Big(\Pi_{1}(\mathbb{T}^{2},x_0), \rho, \sigma, \tau \Big)$ is indeed a group. We will, for simplicity, denote it in what follows for $\Pi_{1}(\mathbb{T}^{2},x_0)$, and we will call it The Fundamental Group of $\mathbb{T}^{2}$.

\end{proof}

\begin{theorem}
$\Pi_{1}\left(\mathbb{T}^{2},x_{0}\right) \simeq \mathbb{Z} \times \mathbb{Z}.$
\end{theorem}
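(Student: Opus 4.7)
The plan is to mirror the $S^{1}$ argument, replacing $\mathbb{Z}$ with $\mathbb{Z}\times\mathbb{Z}$ and the single generator $loop$ with the pair of generators $\alpha,\beta$ together with the commuting path $co$. Concretely, I would define
\[
toPath:\mathbb{Z}\times\mathbb{Z}\longrightarrow \Pi_{1}(\mathbb{T}^{2},x_{0}),\qquad (m,n)\longmapsto \beta^{m}\alpha^{n},
\]
and then verify, in order, the three properties \textbf{(i)} homomorphism, \textbf{(ii)} surjectivity, \textbf{(iii)} triviality of the kernel, exactly as was done in the circle theorem.

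\emph{Homomorphism.} For $(m,n),(r,s)\in\mathbb{Z}\times\mathbb{Z}$ I would compute
\[
toPath(m,n)\circ toPath(r,s)=\tau(\beta^{m}\alpha^{n},\beta^{r}\alpha^{s})=\beta^{r}\alpha^{s}\beta^{m}\alpha^{n},
\]
and then use the commuting path $co$ (which is already the heart of the Closure argument of the group proposition above) to slide the $\alpha^{s}$ past the $\beta^{m}$ and obtain $\beta^{m+r}\alpha^{n+s}=toPath\bigl((m,n)+(r,s)\bigr)$. This reuses the very same $rw$-rewriting chain already displayed in the Closure step, so no new computation is needed.

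\emph{Surjectivity.} This is immediate from Lemma \ref{pathloopMN}: every computational path in $\mathbb{T}^{2}$ is $rw$-equal to some $\beta^{m}\alpha^{n}$, hence lies in the image of $toPath$.

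\emph{Triviality of the kernel.} Following the style used for $S^{1}$, suppose $(m,n)\in Ker(toPath)$, so that $\beta^{m}\alpha^{n}=\tau(\alpha^{n},\beta^{m})\underset{rw}{=}\rho$. Using the decomposition $\rho=\tau(\rho,\rho)$ together with the rewrite rules $tr$, $tsr$, $trr$, $tlr$ (as was done in the $S^{1}$ case to force $\alpha=\sigma(\rho)$), I would argue that the only way to rewrite $\tau(\alpha^{n},\beta^{m})$ to $\rho$ without the relation $co$ collapsing one generator into the other is to have $\alpha^{n}$ and $\beta^{m}$ each individually reduce to $\rho$. Because $\alpha$ and $\beta$ are independent generators (they are not $rw$-related to each other except through $co$, which only permutes their order), this forces $m=0$ and $n=0$. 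Hence $Ker(toPath)=\{(0,0)\}$, and therefore $toPath$ is an injective, surjective homomorphism, yielding the desired isomorphism $\Pi_{1}(\mathbb{T}^{2},x_{0})\simeq\mathbb{Z}\times\mathbb{Z}$.

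The main obstacle I expect is precisely the kernel computation: making rigorous the claim that $\beta^{m}\alpha^{n}\underset{rw}{=}\rho$ implies $m=n=0$. The analogous step for $S^{1}$ was handled rather informally by a single application of $tr$, but with two generators one has to be careful that the $co$ relation does not allow some nontrivial cancellation. In the spirit of the paper, the cleanest justification is that $co$ only rewrites $\tau(\alpha,\beta)$ to $\tau(\beta,\alpha)$ (it preserves the multiset of occurrences of $\alpha$ and $\beta$), while the $\tau$/$\sigma$/$\rho$ rules can only cancel matched $\alpha,\sigma(\alpha)$ or $\beta,\sigma(\beta)$ pairs; so the net ``$\alpha$-exponent'' $n$ and net ``$\beta$-exponent'' $m$ of a path are $rw$-invariants, and $\rho$ has both invariants equal to $0$.
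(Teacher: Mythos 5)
Your proposal follows essentially the same route as the paper: the identical map $(m,n)\mapsto\beta^{m}\alpha^{n}$, the same $co$-based homomorphism computation, surjectivity straight from Lemma \ref{pathloopMN}, and the conclusion via a surjective homomorphism with trivial kernel. The only divergence is in the kernel step, where the paper argues in one line by writing $toPath^{2}(m,n)=\tau(\rho,\beta^{m}\alpha^{n})$ and appealing to the $tr$-style cancellation to force $(m,n)=(0,0)$, while your justification via the $rw$-invariance of the net $\alpha$- and $\beta$-exponents is, if anything, a more careful account of why $co$ cannot collapse the generators; both sit at the same level of informality, so this is a difference of presentation rather than of method.
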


\begin{proof}


Consider the map:

\begin{eqnarray*}
toPath^{2}: \mathbb{Z} \times \mathbb{Z} &\longrightarrow&  \Pi_{1}\left (\mathbb{T}^{2},x_{0}\right)\\
(m,n)  &\longrightarrow&  \ \beta^{m}\alpha^{n}.
\end{eqnarray*}

\begin{itemize}
    \item [i] $toPath^{2}$ is a homomorphism.
    
    Let $(m_1+m_2,n_1+n_2) \in \mathbb{Z}\times \mathbb{Z}$, then:
   \begin{eqnarray*}
         toPath^{2}(m_1+m_2,n_1+n_2)&=&\beta^{m_1+m_2}\alpha^{n_1+n_2}\\
         &=&\beta^{m_1}\beta^{m_2}\alpha^{n_1}\alpha^{n_2}\\
         &\underset{co}{=}&\beta^{m_1}\alpha^{n_1}\beta^{m_2}\alpha^{n_2}\\
         &=&toPath^2(m_1,n_1)\circ toPath^2(m_2,n_2).
  \end{eqnarray*}

  \item [ii] $toPath$ is surjective.

By \textbf{Lemma \ref{pathloopMN}}, as every path in $\mathbb{T}^2$ is $rw$-equal to a path $\beta^{m}\alpha^{n}$, we have that for all paths $ \beta^{i}\alpha^{j} \in \Pi_{1}(\mathbb{T}^2), \exists (i,j) \in \mathbb{Z}\times \mathbb{Z}$, such that, $ toPath(i,j)=\beta^{i}\alpha^{j}$.

   \item [iii] $Ker(toPath^{2})=\{(0,0)\}$.
   
   Suppose that $(m,n)\neq (0,0) \in \mathbb{Z}\times\mathbb{Z}$, such that $(m,n) \in Ker(toPath^2)$. Thus,
   
   \begin{eqnarray*}
      toPath^2(m,n)&=&toPath^2(m+0,n+0)\\
     &\overset{hom}{=}&toPath^2(m,n)\circ toPath^2(0,0)\\ 
     &=&\beta^{m}\alpha^{n}\beta^{0}\alpha^{0}\\
     &=&\tau(\beta^{0}\alpha^{0},\beta^{m}\alpha^{n})\\
     &=&\tau(\rho,\beta^{m}\alpha^{n})\\
     &=&\rho.
   \end{eqnarray*}
   If $\tau(\rho,\alpha)=\rho$, by $rw$-rule $\underset{tr}{\triangleright}$ we have, $\alpha= \sigma(\rho) \Rightarrow (m,n)=(0,0)  \rightarrow \leftarrow$. Therefore,  $$Ker(toPath^2)=\{(0,0)\}.$$
\end{itemize}

As $ToPath^2$ is a homomorphism surjective with $Ker(toPath^2)=\{(0,0)\}$, then $toPath^2$ is an isomorphism, that is, $\Pi_{1}(\mathbb{T}^2) \simeq \mathbb{Z}\times\mathbb{Z
}$.

\end{proof}

\subsection{Fundamental group of the real projective plane}

The real projective plane, denoted by $\mathbb{RP}^{2}$, is by definition the set of all straight lines that pass through the origin of space $\mathbb{R}^3$. We can define each of these lines by a position vector $v_r$, with $ \left \| v_r \right \|\neq 0$, so we have that $\mathbb{RP}^{2}$ is a quotient space of $\mathbb{R}^{3}-{(0,0)}$ under the equivalence relation $v_r \sim \lambda v_r$ for scalars $\lambda \neq 0$. If we impose the condition that the vectors $ \left \| v_r \right \|=1$ then $\mathbb{RP}^{2}$ is a quotient space $\mathbb{S}^{2}$ under the equivalence relation $v_r \sim -v_r$, the sphere
with antipodal points identified, where $v_r$ is position vector.

Let $[v_r]=[x, y, z]$, where $[x, y, z]=\{v_r=(x,y,z),-v_r=(-x,-y,-z)\}$ with $z\neq 0$. This is equivalent to saying that $\mathbb{RP}^{2}$ is the quotient
space of an upper hemisphere $\mathbb{D}^{2}$ with antipodal points of $\partial\mathbb{D}^{2}$ identified, as shown in \textbf{Figure \ref{fig3}}.

\begin{figure}[!htb]
\centering
\includegraphics[width=0.25\columnwidth]{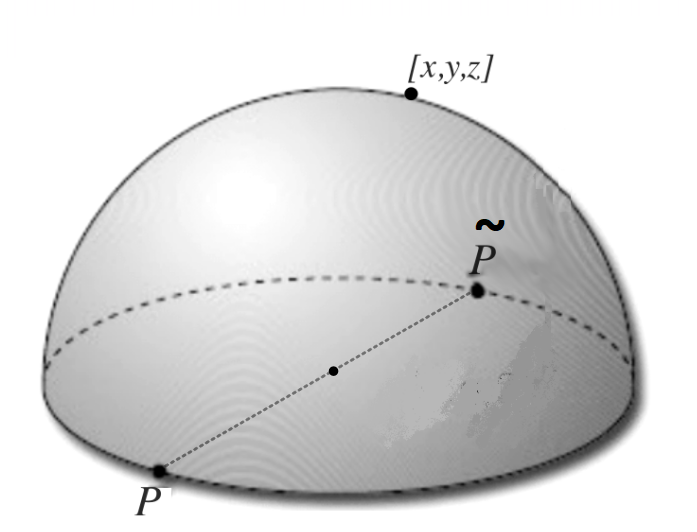}
\caption{ $P$ and $\tilde{P}$ are antipodal points in $\partial\mathbb{D}^{2}$.} 
\label{fig3}
\end{figure}

Let us then map it on the unit disk through the following map $[x,y,z] \longrightarrow (x,y,0)$, as follows in \textbf{Figure \ref{fig4}}.

\begin{figure}[H]
\centering
\includegraphics[width=0.5 \columnwidth]{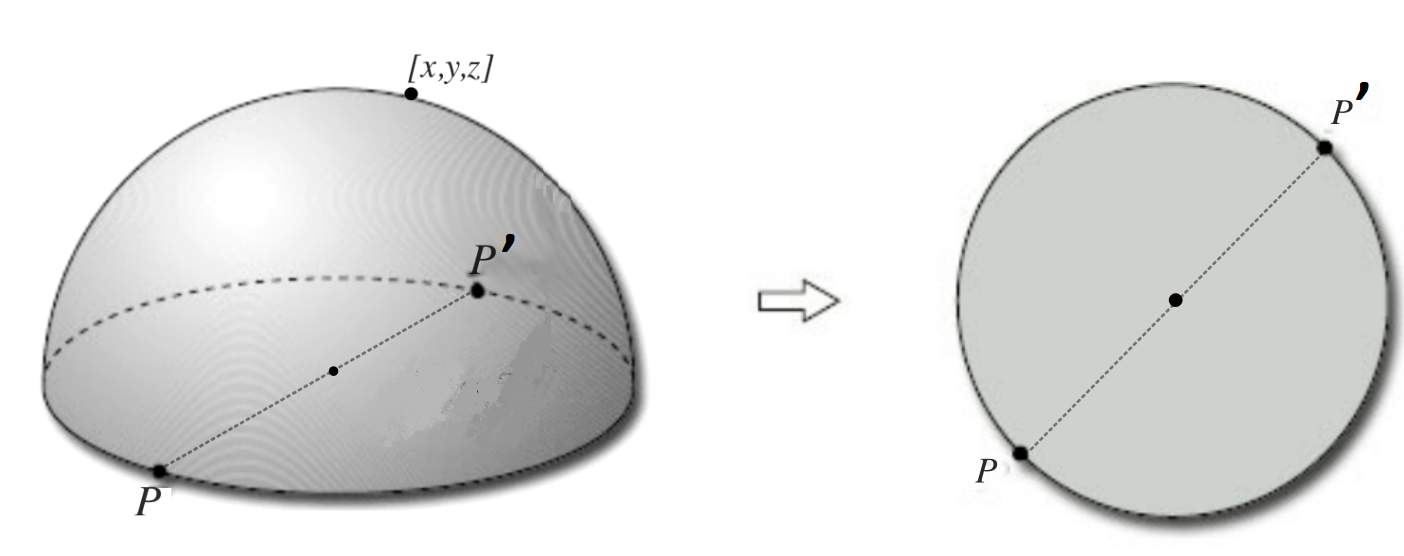}
\caption{Mapping Projection on the unit disk on the $xy$ plane.} 
\label{fig4}
\end{figure}

This way we have that $\mathbb{RP}^{2}$ is a quotient space of $D$ with antipodal points of $\partial D$ identified. Therefore we can study the fundamental group of $\mathbb{RP}^{2}$ from the disk shown on the right side of \textbf{Figure \ref{fig4}}. 

We denote by $\alpha$ any \textit{loop} that connects the identified antipodal points, so we can consider $\alpha$ as a \textit {loop}  (as follows in \textbf{Figure \ref{fig5}}) and any other \textit {loop} that connects the identified antipodal points is homotopic to $\alpha$. Note that $\forall Q \in \mathbb{D}$, any loop based on $Q$ is homotopic to the point, and it is not in our interest to study those.

\begin{figure}[H]
\centering
\includegraphics[width=0.25\columnwidth]{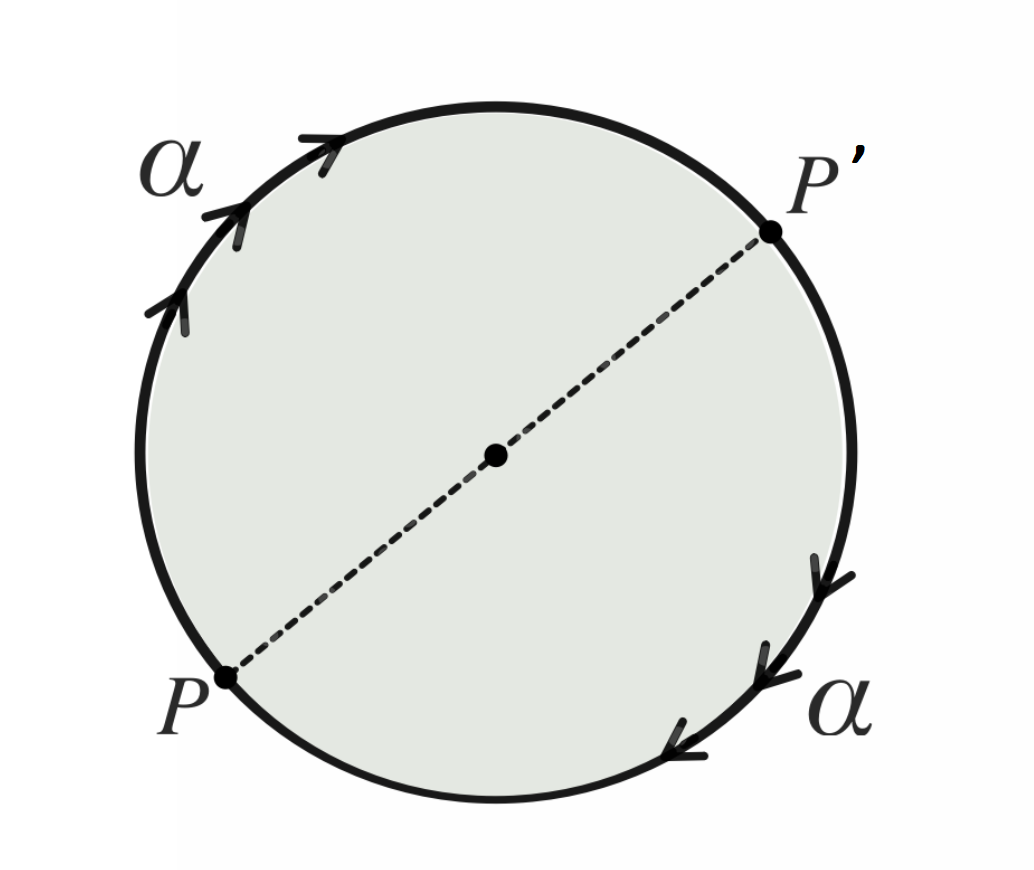}
\caption{\textit{loop} $\alpha$. } 
\label{fig5}
\end{figure}

Since we can represent the real projective plane $\mathbb{RP}^2$ as a disk $\mathbb{D}$, we can define $\mathbb{RP}^2$, homotopically, as follows: 

\begin{definition}
   
 The real projective plane $\mathbb{RP}^2$ is defined by:
 
 \item[(i)] The types $Q:\mathbb{D}$, such that $Q\in \mathbb{D}.$
 \item[(ii)] The pair $P,P':\partial D$, such that: $P,P'$ are the pairs of antipodal points identified in $\partial \mathbb{D}$.  
 \item[(iii)] A path $\alpha$ such that:   $P\underset{\alpha}{=}P'$.
 \item[(iv)] A path $cicl$ that establishes $\alpha \circ \alpha \underset{cicl}{=} \rho$, i.e,  $cicl:Id_{\mathbb{P}^2}(\alpha\circ \alpha,\rho)$.
 
\end{definition}

\begin{lemma}
\label{lemma3.5} 
All paths in $\mathbb{RP}^{2}$ can be expressed in terms of $\rho$,$\tau$, $\sigma$ and their applications, starting from the base paths $\rho$ or $\alpha$.
\end{lemma}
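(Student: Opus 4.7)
The plan is to mirror the inductive arguments already given for $S^1$ and $\mathbb{T}^2$. In $\mathbb{RP}^2$ the only nontrivial generator is $\alpha$, and it is equipped with the extra relation $cicl$ forcing $\tau(\alpha,\alpha)\underset{cicl}{=}\rho$. Moreover, any loop based at an interior point of the disk is homotopic to the constant path, so the only generators that need to be concatenated are $\alpha$ and $\sigma(\alpha)$ together with $\rho$. Consequently, I will argue by induction on the number of concatenation steps used to build up the path.

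The base cases are immediate, since $\rho$ and $\alpha$ are themselves among the base paths. For the inductive step I consider what happens when a path $r$ that is already expressible is extended by one further application of $\alpha$ or $\sigma(\alpha)$. Using the convention $p\circ q = \tau(q,p)$ from the $S^1$ lemma, the six essential small cases are: $\rho\circ\alpha = \tau(\alpha,\rho) \triangleright_{trr} \alpha$ and $\rho\circ\sigma(\alpha) = \tau(\sigma(\alpha),\rho) \triangleright_{trr} \sigma(\alpha)$ (both returning a single base path); $\alpha\circ\sigma(\alpha) = \tau(\sigma(\alpha),\alpha) \triangleright_{tsr} \rho$ and $\sigma(\alpha)\circ\alpha = \tau(\alpha,\sigma(\alpha)) \triangleright_{tr} \rho$ (from the standard rewrite rules); the $\mathbb{RP}^2$-specific case $\alpha\circ\alpha = \tau(\alpha,\alpha)\underset{cicl}{=}\rho$; and finally $\sigma(\alpha)\circ\sigma(\alpha) = \tau(\sigma(\alpha),\sigma(\alpha))$, which by $stss$ in reverse equals $\sigma(\tau(\alpha,\alpha))$, after which $cicl$ followed by $sr$ collapses it to $\rho$. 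For a path built from $n+1$ generators, the inductive hypothesis applied to its first $n$ generators combined with the associativity rule $tt$ reduces the problem to one of these six cases.

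The main obstacle will be the last case, involving $\sigma(\alpha)\circ\sigma(\alpha)$, where $cicl$ has to be applied inside a $\sigma$-context and requires the combined use of $stss$, $cicl$ and $sr$. A secondary concern is that a general computational path may initially be presented in a non-canonical form, with nested $\sigma$'s applied to compound expressions or with subterm contexts from ${\tt sub_L}$/${\tt sub_R}$; the termination and confluence properties of $\mathit{LND_{EQ}-TRS}$ guarantee that any such path can be brought, via $rw$-equality, to the canonical iterated-concatenation shape on which the induction operates, so the argument extends uniformly to every computational path in $\mathbb{RP}^2$.
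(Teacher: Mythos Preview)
Your proposal is correct and follows essentially the same inductive strategy as the paper: reduce every path to $\rho$ or $\alpha$ by induction on the number of concatenation steps, using $cicl$ for the crucial $\tau(\alpha,\alpha)$ case. The only organizational difference is that the paper disposes of your ``main obstacle'' at the outset by observing directly that $\sigma(\alpha)\underset{cicl}{=}\alpha$ (since $cicl$ makes $\alpha$ its own inverse), which collapses your six inductive cases to four and avoids the $stss$/$sr$ detour for $\sigma(\alpha)\circ\sigma(\alpha)$.
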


\begin{proof}

Consider the following base cases:

\item [\textbf{Base case   $\varphi=\rho$:}]
\item[(i)] Trivial case.

\item [\textbf{Base case $\varphi=\sigma(\phi)$:}]

\item[(i)] $\varphi=\sigma(\alpha)\underset{cicl}{=}\alpha$.

\item[(ii)] $\varphi=\sigma(\rho)\underset{rw}{=}\rho$.

\item [\textbf{Base case   $\varphi=\tau(\phi,\kappa)$:}]

\item[(i)] $\varphi=\tau(\rho,\rho) \underset{trr}{=}\rho$
\item[(ii)] $\varphi=\tau(\alpha,\rho) \underset{trr}{=}\alpha$
\item[(iii)] $\varphi=\tau(\rho,\alpha) \underset{tlr}{=}\alpha$
\item[(iv)] $\varphi=\tau(\alpha,\alpha) \underset{cicl}{=}\rho$

\item \textbf{Inductive case:} Assuming true for $n$, we have: 

\item If  \textit{$[loop^{n}]_{rw}=[\rho]_{rw}$}, we have two possibilities for $n+1$:
\item[(i)] $[loop^{n+1}]_{rw}=[loop^{n}]_{rw}\circ[\rho]_{rw}  =\tau(\rho,\rho)  \underset{trr}{=} [\rho]_{rw}$.
\item[(ii)] $[loop^{n+1}]_{rw}=[loop^{n}]_{rw}\circ \alpha = [\rho]_{rw} \circ \alpha =\tau(\alpha,\rho)  \underset{trr}{=} \alpha$.

\item If  \textit{$[loop^{n}]_{rw}=\alpha$}, we have two possibilities for $n+1$:

\item[(i)] $[loop^{n+1}]_{rw}=[loop^{n}]_{rw}\circ\alpha = \alpha \circ \rho =\tau(\rho,\alpha)  \underset{tlr}{=} \alpha$.
\item[(ii)] $[loop^{n+1}]_{rw}=[loop^{n}]_{}\circ \alpha = \alpha \circ \alpha =\tau(\alpha,\alpha)  \underset{cicl}{=} [\rho]_{rw}$.

\end{proof}

Thus, all paths in $\mathbb{RP}^{2}$ generated by $\rho$ or $\alpha$ are \textit{rw-equal} to either $\alpha$ or $\rho$. Since we have $ \alpha \circ \alpha =\tau(\alpha,\alpha)  \underset{cicl}{=} [\rho]_{rw}$, the term $cicl$ give us one important result: $\alpha=\sigma(\alpha)$.

\begin{proposition}
$\Pi_{1}(\mathbb{RP}^{2})$ provided with operations $\rho, \sigma, \tau$ is a group.
\end{proposition}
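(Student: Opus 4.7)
The plan is to follow the same template used for $S^1$ and $\mathbb{T}^2$, but leveraging the key collapse provided by Lemma \ref{lemma3.5}: every path in $\mathbb{RP}^2$ is $rw$-equal to either $\rho$ or $\alpha$, and moreover the $cicl$ term gives $\alpha \circ \alpha \underset{cicl}{=} \rho$, from which we also extracted $\alpha = \sigma(\alpha)$. Thus $\Pi_1(\mathbb{RP}^2,x_0)$ effectively has only two classes, and verifying the group axioms reduces to a finite case analysis together with the standard $rw$-rules $tr$, $tsr$, $trr$, $tlr$, $tt$, $ss$.

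First, for \textbf{closure}, given two paths $x_0 \underset{r}{=} x_0$ and $x_0 \underset{s}{=} x_0$ in $\mathbb{RP}^2$, their composition $\tau(r,s)$ is immediately a computational path $x_0 \underset{\tau(r,s)}{=} x_0$ by the transitivity rule; by Lemma \ref{lemma3.5} it is $rw$-equal to either $\rho$ or $\alpha$ after case-splitting on whether $r,s$ reduce to $\rho$ or $\alpha$ (the four combinations collapsing via $trr$, $tlr$, and $cicl$). Next, for \textbf{identity}, the path $\rho$ works on both sides: $\tau(r,\rho) \underset{trr}{=} r$ and $\tau(\rho,r) \underset{tlr}{=} r$, exactly as in the previous subsections.

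For \textbf{inverses}, given $r$ we take $\sigma(r)$ and note that $\tau(r,\sigma(r)) \underset{tr}{=} \rho$ and $\tau(\sigma(r),r) \underset{tsr}{=} \rho$. (Alternatively, using the collapse from Lemma \ref{lemma3.5}, $\rho^{-1} = \rho$ since $\sigma(\rho) \underset{sr}{=} \rho$ and $\alpha^{-1} = \alpha$ since $\sigma(\alpha)\underset{cicl}{=}\alpha$, consistent with the fact that the group has exponent $2$.) Finally, \textbf{associativity} is a direct application of the $tt$ rule: given $r, s, t$,
\[
\tau(\tau(t,s), r) \underset{tt}{=} \tau(t, \tau(s,r)),
\]
which, read under the convention $r \circ s = \tau(s,r)$ used in the $S^1$ and $\mathbb{T}^2$ proofs, is precisely $(r \circ s)\circ t = r \circ (s\circ t)$.

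The only subtlety — and the one place where the argument departs from the $S^1$ and $\mathbb{T}^2$ cases — is that one must be careful that all computations are carried out up to $rw$-equality augmented by $cicl$. None of the steps above is individually hard, but one should verify that the relation $cicl$ is compatible with the group operations in the sense that it respects composition and inversion; this is precisely what allows the quotient $\Pi_1(\mathbb{RP}^2,x_0)$ to inherit a well-defined group structure from the free structure on the generator $\alpha$ modulo $\alpha^2 = \rho$. Once closure, identity, inverse and associativity are checked in this quotient, we conclude that $(\Pi_1(\mathbb{RP}^2,x_0),\rho,\sigma,\tau)$ is a group, setting up the expected isomorphism with $\mathbb{Z}/2\mathbb{Z}$ in the theorem that would follow.
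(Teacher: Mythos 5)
Your proposal is correct and takes essentially the same route as the paper: an axiom-by-axiom verification of closure, inverses, identity and associativity using the rewrite rules ($tr$, $tsr$, $trr$, $tlr$, $tt$) together with the $cicl$ path and the consequence $\alpha = \sigma(\alpha)$. The only difference is cosmetic: the paper checks closure and associativity concretely on the generator $\alpha$ (using $cicl$ to reduce both bracketings of $\tau(\tau(\alpha,\alpha),\alpha)$ to $\alpha$), whereas you invoke the general $tt$ rule and the collapse of Lemma \ref{lemma3.5} for arbitrary paths, exactly as the paper itself does in the $S^1$ case.
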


\begin{proof}

\item [(i)] \textbf{Closure:}$$$$

\begin{prooftree}
\AxiomC{ $P \underset{\alpha}{=}P$}
\AxiomC{$P \underset{\alpha}{=}P$}
\BinaryInfC { $P \underset{\tau \left(\alpha,\alpha\right)}{=}P$}
\end{prooftree}

But,  $$\alpha \circ \alpha = \tau \left(\alpha,\alpha\right) \underset{cicl}{=} \rho \in \Pi_{1}\left(\mathbb{RP}^2\right).$$


\item[(ii): \textbf{Inverse:}]$$$$

\begin{prooftree}
\AxiomC{ $P \underset{\alpha}{=} P$}
\AxiomC{$P \underset{\sigma{(\alpha)}}{=} P$}
\BinaryInfC{ $P\underset{\tau \left(\alpha,\sigma{(\alpha)}\right)}{=}P$}
\end{prooftree}

But,  

$$\sigma(\alpha) \circ \alpha= \tau \left(\alpha,\sigma(\alpha)\right) \underset{tr}{=} \rho \in \Pi_{1}\left(\mathbb{RP}^2\right).$$

On the other hand, we have:

\begin{prooftree}
\AxiomC{$P \underset{\sigma{(\alpha)}}{=} P$}
\AxiomC{ $P \underset{\alpha}{=} P$}
\BinaryInfC{ $P\underset{\tau \left(\sigma{(\alpha)}, \alpha \right)}{=}P$}
\end{prooftree}

But, 

 $$ \alpha \circ \sigma(\alpha) = \tau \left(\sigma(\alpha), \alpha\right) \underset{tsr}{=} \rho \in \Pi_{1}\left(\mathbb{RP}^2\right).$$


\item[(iii)] \textbf{Identity:}$$$$

\begin{prooftree}
\AxiomC{ $P \underset{\alpha}{=} P$}
\AxiomC{$P \underset{\rho}{=} P$}
\BinaryInfC{ $P \underset{\tau \left(\alpha,\rho\right)}{=} P$}
\end{prooftree}

But, 

 $$\rho \circ \alpha = \tau \left(\alpha, \rho\right) \underset{tlr}{=} \alpha \in \Pi_{1}\left(\mathbb{RP}^2\right).$$

On the other hand, we have:

\begin{prooftree}
\AxiomC{$P \underset{\rho}{=} P$}
\AxiomC{ $P \underset{\alpha}{=} P$}
\BinaryInfC{ $P \underset{\tau \left(\rho, \alpha\right)}{=} P$}
\end{prooftree}

But, 

$$\alpha \circ \rho = \tau \left(\rho,\alpha\right) \underset{trr}{=} \alpha \in \Pi_{1}\left(\mathbb{RP}^2\right).$$


\item[(iv )] \textbf{Associativity:}]$$$$

\begin{prooftree}
\AxiomC{ $P \underset{\alpha}{=} P$}
\AxiomC{ $P \underset{\alpha}{=} P$}
\BinaryInfC{$ P \underset{\tau \left(\alpha,\alpha\right)}{=}P$}
\AxiomC{$P \underset{\alpha}{=}P$}
\BinaryInfC{$P \underset{\tau \left(\tau \left(\alpha,\alpha\right),\alpha \right)}{=}P$}
\end{prooftree}

But, 

\begin{eqnarray*}
\tau \left(\tau \left(\alpha,\alpha\right),\alpha \right)  
 &=& \alpha \circ \tau \left(\alpha,\alpha\right) \\
 &\underset{cicl}{=}& \alpha \circ \rho\\
 &=& \tau \left(\rho,\alpha\right)\\
 &\underset{trr}{=}& \alpha\\
\end{eqnarray*}
\bigskip

On the other hand, we have:

\begin{prooftree}
\AxiomC{$P \underset{\alpha}{=}P$}
\AxiomC{$P \underset{\alpha}{=} P$}
\AxiomC{$P \underset{\alpha}{=} P$}
\BinaryInfC{$ P \underset{\tau \left(\alpha,\alpha\right)}{=}P$}
\BinaryInfC{$P \underset{\tau (\alpha, \tau \left(\alpha,\alpha)\right)}{=}P$}
\end{prooftree}

But, 

\begin{eqnarray*}
\tau (\alpha, \tau \left(\alpha,\alpha)\right)  
 &=&  \tau (\alpha,\alpha)\circ\alpha \\
 &\underset{cicl}{=}& \rho \circ \alpha \\
 &=& \tau \left(\alpha, \rho\right) \\
&\underset{tlr}{=}& \alpha\\
\end{eqnarray*}

Since  $\tau \left(\tau \left(\alpha,\alpha\right),\alpha \right)= \tau (\alpha, \tau \left(\alpha,\alpha)\right)$, it follows that associativity is valid and therefore $\left( \Pi_{1}(\mathbb{RP}^{2}),\circ \right)$ is a group generated by $\rho$ and $\alpha$.

\end{proof}

\begin{theorem}
$ \Pi_{1}(\mathbb{RP}^{2}) \simeq \mathbb{Z}_{2}$.
\end{theorem}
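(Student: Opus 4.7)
The plan is to mirror the structure of the two preceding group-computations (for $S^1$ and $\mathbb{T}^2$) by exhibiting an explicit isomorphism. Concretely, I would define
\begin{eqnarray*}
toPath^{rp} : \mathbb{Z}_2 &\longrightarrow& \Pi_{1}(\mathbb{RP}^{2}) \\
\bar{0} &\longmapsto& [\rho]_{rw} \\
\bar{1} &\longmapsto& [\alpha]_{rw}
\end{eqnarray*}
and then verify that $toPath^{rp}$ is a well-defined group homomorphism, surjective, with trivial kernel. The fact that this map lands in a two-element target (rather than in $\mathbb{Z}$, as in the $S^1$ case) is exactly the content of Lemma \ref{lemma3.5}: every path in $\mathbb{RP}^{2}$ built from $\rho$, $\sigma$, $\tau$ applied to the generators is $rw$-equal either to $[\rho]_{rw}$ or to $[\alpha]_{rw}$.

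For the homomorphism condition, I would argue by case analysis on the four possible inputs $(\bar{i},\bar{j}) \in \mathbb{Z}_2 \times \mathbb{Z}_2$. The trivial cases $toPath^{rp}(\bar{0}+\bar{0})$, $toPath^{rp}(\bar{0}+\bar{1})$ and $toPath^{rp}(\bar{1}+\bar{0})$ reduce via $trr$ and $tlr$ to $\rho$ and $\alpha$ respectively. The decisive case is $\bar{1}+\bar{1}=\bar{0}$, where I need
\[
toPath^{rp}(\bar{1}) \circ toPath^{rp}(\bar{1}) \;=\; \tau(\alpha,\alpha) \;\underset{cicl}{=}\; [\rho]_{rw} \;=\; toPath^{rp}(\bar{0}),
\]
and this is precisely what the extra path-constructor $cicl$ built into the higher inductive presentation of $\mathbb{RP}^{2}$ provides. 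Surjectivity follows immediately from Lemma \ref{lemma3.5}: any $\varphi \in \Pi_{1}(\mathbb{RP}^{2})$ is $rw$-equal to $\rho$ or to $\alpha$, so it lies in the image.

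The main obstacle, as I see it, is establishing the triviality of the kernel, i.e.\ showing $[\alpha]_{rw} \neq [\rho]_{rw}$ in $\Pi_{1}(\mathbb{RP}^{2})$. If one only had the rewrite system $\mathit{LND_{EQ}-TRS}$ extended by the $cicl$ identification, one has to rule out that $\alpha$ itself reduces to $\rho$. I would argue this by observing that $cicl$ only identifies paths of \emph{even} total exponent in $\alpha$ with a reflexive path: the rewriting rules of $\mathit{LND_{EQ}-TRS}$ preserve the parity of the number of occurrences of $\alpha$ (or $\sigma(\alpha)$) modulo $2$, since the only rule that changes this count is $cicl$, which removes two occurrences at a time. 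Formally, one defines an invariant $\nu : \Pi_{1}(\mathbb{RP}^{2}) \to \mathbb{Z}_2$ counting $\alpha$-occurrences mod $2$, checks that every $rw$-rule and the $cicl$ rule preserve $\nu$, and observes $\nu([\rho]_{rw}) = \bar{0} \neq \bar{1} = \nu([\alpha]_{rw})$.

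Once the kernel is shown to be $\{\bar{0}\}$, the general fact that a surjective homomorphism with trivial kernel between groups is an isomorphism gives $\Pi_{1}(\mathbb{RP}^{2}) \simeq \mathbb{Z}_2$, exactly as in the closing step of the $S^1$ and $\mathbb{T}^2$ theorems. The overall proof should therefore be shorter than the preceding two, since Lemma \ref{lemma3.5} already collapses the group to two classes before the isomorphism argument even begins; the only genuinely new ingredient is the parity-invariant argument needed to separate $[\alpha]_{rw}$ from $[\rho]_{rw}$.
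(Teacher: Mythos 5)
Your proposal follows essentially the same route as the paper: the same map $\mathbb{Z}_2 \to \Pi_1(\mathbb{RP}^2)$ sending $\bar{0} \mapsto [\rho]_{rw}$ and $\bar{1} \mapsto [\alpha]_{rw}$, the homomorphism property settled by the case $\tau(\alpha,\alpha) \underset{cicl}{=} \rho$, and surjectivity read off from Lemma \ref{lemma3.5}. The one genuine difference lies in the kernel step, and it is to your credit: the paper disposes of injectivity by simply invoking Lemma \ref{lemma3.5} (``there is only one element $z \in \mathbb{Z}_2$ with $toPath_{\mathbb{Z}_2}(z)=0$''), which only shows that every class collapses to $[\rho]_{rw}$ or $[\alpha]_{rw}$ and never establishes that these two classes are distinct once the extra identification $cicl$ is adjoined to $\mathit{LND_{EQ}-TRS}$. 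Your parity invariant $\nu$ (counting occurrences of $\alpha$ modulo $2$ and checking that every rewrite, including $cicl$, preserves it) is exactly the missing separation argument, playing the role that the covering-space or encode--decode step plays in the classical and HoTT proofs. Do note that the verification you defer is not entirely mechanical: the rules of $\mathit{LND_{EQ}-TRS}$ stated with a general context $\mathcal{C}[\ ]$, such as $\tau(\mathcal{C}[r],\mathcal{C}[\sigma(r)]) \triangleright_{tr} \mathcal{C}[\rho]$, duplicate or discard the context and so preserve parity only once one restricts attention to the fragment actually generated by $\rho$, $\alpha$, $\sigma$, $\tau$ (where the contexts are trivial); spelling that restriction out would be needed to make the invariant argument airtight. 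With that caveat, your proof is correct and, on the kernel point, more complete than the paper's.
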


\begin{proof}

Consider the application defined and denoted by: 

\begin{eqnarray*}
toPath_{\mathbb{Z}_2}: &&\mathbb{Z}_2 \rightarrow \Pi_{1}\left (\mathbb{RP}^2\right)\\
&&z \rightarrow  toPath_{\mathbb{Z}_2}= [loop^{z}]_{rw}.
\end{eqnarray*}

\begin{itemize}
    \item [(i)] $toPath_{\mathbb{Z}_2}$ is a homomorphism.

 Let $z_1$ and $z_2 \in \mathbb{Z}_{2}$, then:
   \begin{eqnarray*}
         toPathh_{\mathbb{Z}_2}(z_1 +z_2)&=&[loop^{z_1+z_2}]_{rw}\\
         &=&\tau([loop^{z_1}]_{rw},[loop^{z_2}]_{rw})\\
         &=&toPathh_{\mathbb{Z}_2}(z_2)\circ toPathh_{\mathbb{Z}_2}(z_1).
  \end{eqnarray*}  
  
  On the other hand, we have:
  
  \begin{eqnarray*}
         toPathh_{\mathbb{Z}_2}(z_2 +z_1)&=&[loop^{z_2+z_1}]_{rw}\\
         &=&\tau([loop^{z_2}]_{rw},[loop^{z_1}]_{rw})\\
         &=&toPathh_{\mathbb{Z}_2}(z_1)\circ toPathh_{\mathbb{Z}_2}(z_2).
  \end{eqnarray*}  
  
  Thus, $toPath_{\mathbb{Z}_2}(z_1+z_2)=toPath_{\mathbb{Z}_2}(z_1)\circ toPath(z_2)$.
  
  \item [(ii)] $toPath_{\mathbb{Z}_2}$ is surjective.

By \textbf{Lemma \ref{lemma3.5}}, every path in $\mathbb{RP}^{2}$ is $rw$-equal to $\rho$ and $\alpha$. So given any path in $\Pi_{1}(\mathbb{RP}^{2})$, for $z=0$ and $z=1$ we have $\rho=toPath_{\mathbb{Z}_2}(0)$ and $\alpha=toPath_{\mathbb{Z}_2}(1)$, respectively.

   \item [(iii)] $Ker(toPath_{\mathbb{Z}_2})=\{0\}$.
   
  By \textbf{Lemma \ref{lemma3.5}}, there is only one element in $z\in \mathbb{Z}_{2}$ such that $toPathh_{\mathbb{Z}_2}(z)=0$. Therefore,   $Ker(toPath_{\mathbb{Z}_2})=\{0\}.$

\end{itemize}

$$toInt = \begin{cases}
toInt([loop^{0}]_{rw}=[\rho]_{rw})=0&  \\
toInt([loop^{1}]_{rw}=\alpha)=1&\\
\end{cases}$$
$$$$

Thus, the isomorphism holds.
\end{proof}
	
\section{Conclusion}
	
    Our main objective has been the calculation of the fundamental groups of many surfaces using  a labelled deduction system based on the concept of computational paths (sequences of rewrites).
    The main advantage of this approach is that we avoid the use of more complex techniques, such as those made in algebraic topology in pure mathematics or by the method of encoding-decoding used in homotopy type theory.  As a consequence, our calculations proved to be straightforward and simple. Using computational paths as our main tool, we have calculated the fundamental group of the circle, torus and projective plane. Therefore, we have shown that it is possible to use the theory of computational paths to obtain useful results in algebraic topology.
    
    Finally, an almost natural question for our study would be: is it possible to calculate the fundamental group of the Klein bottle using the same technique? This question is going to be a guiding element to develop our future studies.

\bibliographystyle{unsrt}  

\end{document}